\newcommand{\Ex}{\mathop{\mathbb{E}}}
\newcommand{\arash}[1]{\noindent{\bf {\color{blue!60!black}{\sc Arash:}  #1}}}
\newcommand{\alantha}[1]{\noindent{\bf {\color{magenta!80!black}{\sc Alantha:}  #1}}}
\declaretheoremstyle[style=claim,qed=$\Diamond$]{claim}
\declaretheoremstyle[style=plain,qed=$\square$]{theorem}
\theoremstyle{plain}
\newtheorem{theorem}{Theorem}
\newtheorem{definition}{Definition}
\newtheorem{corollary}{Corollary}[theorem]
\newtheorem{lemma}{Lemma}
\newtheorem{conjecture}{Conjecture}
\newtheorem{fact}{Fact}
\newtheorem{observation}{Observation}
\theoremstyle{definition}
\newtheorem{define}{Definition}
\theoremstyle{claim}
\newtheorem{claim}{Claim}
\theoremstyle{remark}
\theoremstyle{remark}
\theoremstyle{remark}
\newcommand{\subt}{{\sc Subtour}}
\newcommand{\xcon}{x'}
\newcommand{\xcut}{\bar{x}}
\newenvironment{cproof}
{\begin{proof}
 [Proof.]
 \vspace{-1.5\parsep}
}
{ \end{proof}}
\begin{document}{\bibliographystyle{alpha}}

\title{Shorter tours and longer detours: Uniform covers\\ and a bit beyond}

\author{\textsc{Arash Haddadan\thanks{Tepper School of Business,
      Carnegie Mellon University.  {\tt ahaddada@cmu.edu}},
Alantha Newman\thanks{CNRS and Universit\'e Grenoble-Alpes.  {\tt
      alantha.newman@grenoble-inp.fr}},
R. Ravi\thanks{Tepper School of Business, Carnegie Mellon University. {\tt ravi@cmu.edu}}}}

 \maketitle

\begin{abstract}
Motivated by the well known ``four-thirds conjecture'' for the
traveling salesman problem (TSP), we study the problem of {\em uniform
covers}.  A graph $G=(V,E)$ has an $\alpha$-uniform cover for TSP
(2EC, respectively) if the everywhere $\alpha$ vector
(i.e., $\{\alpha\}^{E}$) dominates a convex combination of incidence
vectors of tours (2-edge-connected spanning multigraphs,
respectively).  The polyhedral analysis of Christofides' algorithm
directly implies that a 3-edge-connected, cubic graph has a 1-uniform
cover for TSP.  Seb\H{o} asked if such graphs have
$(1-\epsilon)$-uniform covers for TSP for some $\epsilon >
0$.  Indeed, the four-thirds conjecture
implies that such graphs have $\frac{8}{9}$-uniform covers.  We show
that these graphs have $\frac{18}{19}$-uniform covers for TSP.  We
also study uniform covers for 2EC and show that the everywhere
$\frac{15}{17}$ vector can be efficiently written as a convex
combination of 2-edge-connected spanning multigraphs.

For a weighted, 3-edge-connected, cubic graph, our results show that
if the everywhere $\frac{2}{3}$ vector is an optimal solution for the
subtour elimination linear programming relaxation for TSP, then a tour
with weight at most $\frac{27}{19}$ times that of an optimal tour can
be found efficiently.  {\em Node-weighted}, 3-edge-connected, cubic
graphs fall into this category.  In this special case, we can apply
our tools to obtain an even better approximation guarantee.

An essential ingredient in our proofs is decompositions of graphs
(e.g., cycle covers) that cover small-cardinality cuts an even
(nonzero) number of times.  Another essential tool we use is
half-integral tree augmentation, which is known to have a small
integrality gap.  To extend our approach to input graphs that are
2-edge-connected, we present a procedure to decompose a
point in the subtour elimination polytope
 into spanning, connected
subgraphs that cover each 2-edge cut an even number of times.  Using
this decomposition, we obtain a
$\frac{17}{12}$-approximation algorithm for minimum weight
2-edge-connected spanning subgraphs on subcubic, node-weighted graphs.

\end{abstract}

\clearpage
\section{Introduction}

The {\bf traveling salesman problem (TSP)} and the minimum-weight {\bf
  2-edge-connected spanning multigraph problem (2EC)} are two
fundamental and well-studied problems in combinatorial optimization.
A folklore conjecture sometimes tersely called the ``four-thirds
conjecture" (see, e.g.,~\cite{CV00,Goemans95}) states that the optimal
(integral) solution for the metric TSP is no more than $\frac43$ times
the value of the subtour elimination linear programming relaxation.
However, the best known approximation ratio for both TSP and 2EC
currently stands at
$\frac32$~\cite{chris,wolsey1980heuristic,shmoys1990analyzing,
  FJ82}. A recent spate of work has focused on the special case of
{\em graph-TSP} when the underlying weights arise from hop distances
in an undirected
graph~\cite{Gharan11,BSSS11,momke2016removing,Mucha14}.  The current
best ratio for this problem is $\frac75$~\cite{sebo12}.  A parallel
line of work has improved the ratio for 2EC in the unweighted case
(commonly referred to as the 2-edge-connected spanning subgraph
problem or {\em 2ECSS} for short) and resulted in a
$\frac{4}{3}$-approximation for this
problem~\cite{sebo12,boyd2014frac}.  So far, these new techniques have
not been extended to more general metrics.

One approach to general metrics is via {\em convex combinations} of
incidence vectors of tours that can be derived from solutions to the
well-known {\em subtour elimination linear programming relaxation},
which we will refer to as \ref{subtour-equal}.  It is by now quite
standard, but we invite the unfamiliar reader to visit Section
\ref{subsec:subtour} for the formal definition.  For a solution
$x\in\;$\ref{subtour-equal}, we use $G_x=(V,E_x)$ to denote the graph
$G=(V,E)$ with edge set $E_x \subseteq E$ restricted to the support of
$x$.  Goemans and Carr and Vempala showed that the four-thirds
conjecture is equivalent to the following
conjecture~\cite{Goemans95,CV00}.
\begin{conjecture}\label{4/3conj-int}
If $x\in$ \ref{subtour-equal}, the vector $\frac{4}{3}x$ dominates a
convex combination of tours in $G_x$.
\end{conjecture}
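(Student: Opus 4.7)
The plan is to attack the domination bound via a Christofides-style rounding over a convex decomposition of $x$ into spanning trees, followed by a carefully constructed fractional parity correction. Since $x \in$ \ref{subtour-equal} satisfies $x(E_x) = n$ and $x(E(S)) \le |S| - 1$ for every $S \subsetneq V$, the scaled vector $\frac{n-1}{n}x$ lies in the spanning tree polytope of $G_x$ and hence admits a convex decomposition $\frac{n-1}{n}x = \sum_T \lambda_T \chi^T$ over spanning trees of $G_x$. For each such tree $T$, let $Q(T)$ denote its set of odd-degree vertices; the candidate tour associated with $T$ is the Eulerian spanning multigraph $T \cup J_T$ for some $Q(T)$-join $J_T$, and the target convex combination to be dominated by $\frac{4}{3}x$ is $\sum_T \lambda_T \chi^{T \cup J_T}$.

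With this setup, the problem reduces to exhibiting, for each tree $T$ in the support, a nonnegative fractional $Q(T)$-join $y^T$ (meaning $y^T(\delta(S)) \ge 1$ on every cut $S$ with $|Q(T) \cap S|$ odd) such that $\sum_T \lambda_T y^T \le \frac{1}{3}x$ up to a lower order term in $\frac{1}{n}$. Edmonds' characterization of the $T$-join polyhedron then yields integral joins $J_T$ with $\chi^{J_T} \le y^T$, and the resulting convex combination has incidence vector at most $\frac{n-1}{n}x + \frac{1}{3}x \le \frac{4}{3}x$, which is exactly what is required.

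A natural template is $y^T = \frac{1}{3}x + z^T$, where $z^T$ is a small correction supported on $T$. The first term already covers every $Q(T)$-cut $S$ with $x(\delta(S)) \ge 3$, since then $\frac{1}{3}x(\delta(S)) \ge 1$. Consequently, $z^T$ is needed only on near-tight cuts $S$ with $x(\delta(S)) < 3$ that are $Q(T)$-odd, and the $\frac{1}{3}x$ budget must absorb the average $\sum_T \lambda_T z^T$ globally rather than cut by cut. The intuition is that loose cuts of $x$ provide surplus capacity which may, in aggregate, underwrite the parity corrections demanded at tight cuts.

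The main obstacle is precisely the control of these near-tight, parity-bad cuts. On a cut $S$ with $x(\delta(S)) = 2 + \epsilon$, the deficit that $z^T$ must absorb is roughly $\frac{1}{3} - \epsilon$, and this must be supplied from edges of $T \cap \delta(S)$ without exceeding the global budget. The central analytic challenge is to show that the probability, under the Carr--Vempala distribution $\{\lambda_T\}$, that $|T \cap \delta(S)|$ is odd is small enough on near-tight cuts to be absorbed by the slack of $\frac{1}{3}x$ available on loose cuts. Establishing this correlation — most likely via a laminar or chain decomposition of near-minimum cuts of $x$ combined with a parity analysis of a random spanning tree drawn from the decomposition — is the essential combinatorial difficulty that has kept the four-thirds conjecture open, and any successful proof must confront it head-on; the rest of the argument above is essentially a template that transfers this difficulty into a clean fractional T-join question.
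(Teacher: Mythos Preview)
The statement you are attempting to prove is Conjecture~\ref{4/3conj-int}, which the paper explicitly presents as an \emph{open} conjecture equivalent to the four-thirds conjecture; the paper does not prove it, and instead proves the weaker Theorems~\ref{18/19} and~\ref{15/172ec} for the special case of $3$-edge-connected cubic graphs. So there is no ``paper's own proof'' to compare against.

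Your proposal is not a proof either, and you say so yourself. The outline is the standard Christofides template: decompose (a scaling of) $x$ into spanning trees, then correct parities with fractional $T$-joins bounded on average by $\tfrac{1}{3}x$. The entire content of the conjecture lives in the step you flag as ``the essential combinatorial difficulty that has kept the four-thirds conjecture open'': showing that for a random tree $T$ from the decomposition, the near-tight cuts that are $Q(T)$-odd can be repaired within a global budget of $\tfrac{1}{3}x$ (or $\tfrac{1}{3}x+\tfrac{1}{n}x$). Nothing in the proposal advances this step; the laminar/chain structure of near-minimum cuts and parity statistics of random spanning trees are precisely the ingredients that Oveis Gharan--Saberi--Singh and subsequent work exploit, and they yield $\tfrac{3}{2}-\varepsilon$, not $\tfrac{4}{3}$. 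Absent a new idea there, the template gives only Wolsey's $\tfrac{3}{2}x$ bound (taking $y^T=\tfrac{1}{2}x$, which is always feasible for the $T$-join dominant).

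Two smaller technical points. First, from $y^T$ lying in the $T$-join dominant you do not get a single join $J_T$ with $\chi^{J_T}\le y^T$; you get that $y^T$ dominates a \emph{convex combination} of $T$-joins. This is harmless for the argument (you just end up with a nested convex combination of tours), but the sentence as written is incorrect. Second, your accounting requires $\sum_T \lambda_T z^T \le \tfrac{1}{n}x$, yet on a tight cut with $x(\delta(S))=2$ the deficit $1-\tfrac{1}{3}x(\delta(S))=\tfrac{1}{3}$ must be supplied entirely by $z^T$ whenever $S$ is $Q(T)$-odd; with no control on that probability the average correction can be $\Theta(1)$ on edges in tight cuts, not $O(\tfrac{1}{n})$.
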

Based on a polyhedral analysis of Christofides' algorithm, we
know that $\frac{3}{2}x$ dominates a convex combination of tours in
$G_x$~\cite{wolsey1980heuristic,shmoys1990analyzing}; so far we
cannot replace $\frac{3}{2}$ with any smaller constant.  Following the
terminology of Boyd and Seb{\H{o}}~\cite{boydsebo}, for a graph
$G=(V,E)$ on $n$ vertices, let the \textit{everywhere $r$ vector for
  $G$}, be the vector in $\mathbb{R}^{V\choose 2}$ that is $r$ in all
coordinates corresponding to edges of $G$ and 0 in all other
coordinates.  Conjecture \ref{4/3conj-int} is closely related to the
problem of {\em uniform covers}, which we now formally define.
\begin{definition}
A graph $G$ has an $\alpha$-uniform cover for TSP (2EC) if the
everywhere $\alpha$ vector for $G$ dominates a convex combination of incidence
vectors of tours (2-edge-connected spanning multigraphs).
\end{definition}
This close connection is described in Proposition \ref{k-ec,k-reg}.
Proposition \ref{k-ec,k-reg} was observed by Carr and Vempala \cite{CV00} but for completeness we provide a (quite straightforward) proof in Section
\ref{sec:uniform-cover}.

\begin{restatable}{proposition}{introlemma}
\label{k-ec,k-reg}
	The following statements are equivalent.
	
	\begin{itemize}
		\item[(a)] If $x\in$ \ref{subtour-equal}, the vector
                  $\frac{4}{3}x$ dominates a convex combination of tours in $G_x$.
		\item[(b)] For any positive integer $k$ and
                  an arbitrary $k$-edge-connected $k$-regular graph $G$, the everywhere $\frac{8}{3k}$ vector for $G$ dominates a convex combination of tours in $G$.
		\end{itemize}
\end{restatable}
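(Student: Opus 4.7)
The proposition asserts an equivalence between a polyhedral statement about arbitrary subtour LP solutions and a uniform-cover statement about regular, edge-connected graphs. My plan is to prove each direction separately, with the main work in the direction $(b)\Rightarrow(a)$, which requires a scaling/multigraph construction.

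For the easy direction $(a)\Rightarrow(b)$, I would start with an arbitrary $k$-edge-connected, $k$-regular graph $G$ and consider the vector $x = \tfrac{2}{k}\mathbf{1}_{E(G)}$, i.e., the everywhere $2/k$ vector for $G$. The degree constraint $x(\delta(v))=2$ follows from $k$-regularity, and the subtour inequalities $x(\delta(S)) \ge 2$ for $\emptyset \neq S \subsetneq V$ follow from $k$-edge-connectivity, so $x \in$ \ref{subtour-equal}. Applying (a) gives that $\tfrac{4}{3}x = \tfrac{8}{3k}\mathbf{1}_{E(G)}$ dominates a convex combination of tours in $G_x = G$, which is exactly (b).

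For the harder direction $(b)\Rightarrow(a)$, I would take an arbitrary $x \in$ \ref{subtour-equal} (which we may assume has rational entries since the subtour polytope is rational), pick a common denominator $q$ such that $qx_e$ is a nonnegative integer for every $e \in E_x$, and form a multigraph $H$ on vertex set $V$ by placing $qx_e$ parallel copies of each edge $e \in E_x$. The identity $x(\delta(v)) = 2$ makes $H$ exactly $2q$-regular, and $x(\delta(S)) \ge 2$ for every proper, nonempty $S$ makes $H$ a $2q$-edge-connected multigraph. Applying (b) with $k = 2q$ yields a convex combination $\sum_T \lambda_T \chi_T$ of tours $T$ in $H$ such that $\sum_T \lambda_T \chi_T(f) \le \tfrac{8}{3(2q)} = \tfrac{4}{3q}$ for every (parallel copy) edge $f$ of $H$.

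Finally, I would project back to $G_x$ by identifying parallel copies: each tour $T$ in $H$, being a connected Eulerian spanning multigraph on $V$, maps to a tour in $G_x$ whose multiplicity on $e$ equals the number of copies of $e$ used by $T$. Summing the bound $\tfrac{4}{3q}$ over the $qx_e$ copies of each edge $e$ gives that the projected convex combination places total weight at most $(qx_e)\cdot \tfrac{4}{3q} = \tfrac{4}{3}x_e$ on edge $e$, which is precisely the statement that $\tfrac{4}{3}x$ dominates a convex combination of tours in $G_x$. The main subtlety is confirming that the multigraph construction preserves the cut structure needed to invoke (b), and that the tour projection (identifying parallel edges) preserves the tour property and the coefficient bound; both are straightforward but must be stated carefully. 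One can also reduce the rational-entries assumption to the general case by a standard continuity/scaling argument, since \ref{subtour-equal} is a rational polytope.
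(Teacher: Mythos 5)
Your proposal is correct and follows essentially the same route as the paper's proof: the forward direction instantiates (a) at the everywhere $\frac{2}{k}$ vector, and the reverse direction clears denominators to build a $2q$-regular, $2q$-edge-connected multigraph, applies (b) with $k=2q$, and projects the tours back to $G_x$, summing the per-copy bound $\frac{4}{3q}$ over the $qx_e$ copies to recover $\frac{4}{3}x_e$. Your explicit remark about reducing to rational $x$ is a small point the paper leaves implicit, but the argument is otherwise identical.
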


The first interesting case is when $k=3$ (i.e., the case of
3-edge-connected, cubic graphs).  Since the everywhere $\frac{2}{3}$
vector for a 3-edge-connected, cubic graph $G$ is in
\ref{subtour-equal}, Seb{\H{o}} pointed out that the four-thirds
conjecture implies that for a 3-edge-connected, cubic graph, the
everywhere $\frac{8}{9}$ vector dominates a convex combination of
tours~\cite{sebo2015problems}.  The following is therefore a relaxed
version of Conjecture \ref{4/3conj-int}~\cite{sebo2015problems,
  boydsebo}.

\begin{conjecture}\label{8/9conj-int}
Let $G=(V,E)$ be a 3-edge-connected, cubic graph. The everywhere
$\frac{8}{9}$ vector for $G$ dominates a convex combination of tours
of $G$.
\end{conjecture}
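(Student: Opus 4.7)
The plan is to exhibit a distribution over tours of the 3-edge-connected cubic graph $G=(V,E)$ whose marginal vector is dominated componentwise by the everywhere $8/9$ vector. Since $\frac{8}{9} = \frac{4}{3} \cdot \frac{2}{3}$ and the everywhere $2/3$ vector lies in \ref{subtour-equal} for $G$, this is precisely the specialization of Conjecture~\ref{4/3conj-int} to the uniform point $x^{*} = \frac{2}{3}\mathbf{1}_E$, so the natural approach is the Christofides/Wolsey template at this symmetric point: sample a random spanning connected subgraph with per-edge marginal $p_e$, then add a random parity-correcting join on its odd-degree vertices with per-edge marginal $q_e$, aiming for $p_e + q_e \leq 8/9$ on every edge.

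The first step is to decompose the (slightly scaled-down) everywhere $2/3$ vector into a distribution over spanning trees of $G$ with each edge appearing with marginal arbitrarily close to $2/3$; this is feasible by Edmonds' matroid partition theorem since $\frac{2(n-1)}{3n}\mathbf{1}_E$ lies in the spanning tree polytope of $G$. The second and harder step is, for each sampled tree $T$, to describe a distribution over joins on the odd-degree vertex set of $T$ whose per-edge marginal is at most $\tfrac{2}{9}$ (up to lower-order terms), using only edges of $G$.

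My plan for the second step would exploit the parity-covering cycle decompositions advertised in the abstract. Since $G$ is cubic and bridgeless, Petersen's theorem yields a perfect matching $M$ whose complement $F = E \setminus M$ is a $2$-factor; moreover one can arrange for $F$ to cross every $3$-edge cut of $G$ in exactly two edges. I would use the cycles of $F$ as ``parity gadgets'': for each tree $T$, the odd-degree vertices can be paired along arcs of the cycles of $F$, and by averaging over cyclic shifts of these arcs one obtains uniform marginals on the cycle edges and zero contribution on the matching edges. Finally, I would symmetrize over the choice of which perfect matching plays the role of $M$ in order to equalize contributions across all edges of $G$.

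The main obstacle is driving the per-edge cost of the parity correction down to exactly $2/9$: this requires tight control over the joint distribution of the spanning tree and the repair paths and is in fact equivalent to settling the four-thirds conjecture at the uniform point, which remains open. A realistic outcome of the sketch above, and presumably the route the authors take to their $18/19$ bound, is to replace the combinatorial repair paths with the half-integral tree-augmentation LP applied on top of a cycle cover that meets every small cut an even, nonzero number of times; this looser framework can afford a parity budget strictly larger than $2/9$ per edge while still producing a feasible distribution over tours. Closing the remaining gap to $8/9$ appears to require new structural insight into how tight $3$-edge cuts interact with parity corrections.
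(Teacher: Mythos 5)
The statement you were asked to prove is stated in the paper as a \emph{conjecture} (Conjecture~\ref{8/9conj-int}), not a theorem: the paper offers no proof of it, and indeed its main contribution for TSP on 3-edge-connected cubic graphs is the strictly weaker bound of $\frac{18}{19}$ (Theorem~\ref{18/19}). Your write-up correctly recognizes this. Your framing is accurate: since the everywhere $\frac{2}{3}$ vector lies in \ref{subtour-equal} for such a graph, the $\frac{8}{9}$ statement is exactly the four-thirds conjecture specialized to this point, which is how the paper motivates the conjecture in the first place. So there is no complete proof here, and none should be expected; the honest concession in your final paragraph is the right conclusion.

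That said, two remarks on your sketch of the fallback route. First, the parity-gadget step is where the argument genuinely breaks: pairing the odd-degree vertices of $T$ along arcs of the cycles of $F$ and averaging over cyclic shifts does not obviously give a valid $T$-join with marginal $\frac{2}{9}$, because the repair edges must connect the \emph{odd} vertices of $T$ in the correct parity pattern, and the odd set depends on $T$ in a way you cannot decouple from the cycle structure; no known averaging argument beats the Wolsey bound of $\frac{1}{2}x^*$ for the join here. Second, your description of the paper's actual $\frac{18}{19}$ proof is slightly off: the half-integral tree-augmentation theorem of Cheriyan--Jord\'an--Ravi is used for the 2EC results (Theorems~\ref{15/172ec} and \ref{thm:2EC-node-weight}), whereas the $\frac{18}{19}$ tour bound comes from taking the BIT cycle cover $C$, contracting it to get a 5-edge-connected graph, adding doubled spanning trees of $G/C$ (giving a point with value $1$ on $C$ and $\frac{4}{5}$ on $E(G/C)$), and convex-combining with $\frac{3}{2}$ times the half-integral point $u$ ($u_e=\frac12$ on $C$, $1$ off $C$) via Wolsey's analysis. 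The convex weights $\frac{15}{19}$ and $\frac{4}{19}$ then equalize the two edge classes at $\frac{18}{19}$.
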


For such graphs, the everywhere 1 vector does indeed dominate a convex
combination of tours, which can be shown via the aforementioned
polyhedral proof of Christofides' algorithm by
Wolsey~\cite{wolsey1980heuristic, shmoys1990analyzing}.  In other
words, a 3-edge-connected, cubic graph has a 1-uniform cover.
Seb{\H{o}}~\cite{sebo2015problems} asked if this bound can be improved: Does a
3-edge-connected, cubic graph have a $(1-\epsilon)$-uniform cover (for some small
constant $\epsilon$)?  For the special class of 3-edge-connected,
cubic graphs that are also Hamiltonian, Boyd and Seb{\H{o}} show that
the everywhere $\frac{6}{7}$ vector for $G$ dominates a convex
combination of tours ~\cite{boydsebo}.  We give an affirmative answer
to Seb{\H{o}}'s question and improve this factor from $1$ to
$\frac{18}{19}$ for {\em all} 3-edge-connected, cubic
graphs\footnote{Applying Theorem 2.3 from \cite{philip6/7}, we note
  that this theorem applies to all 3-edge-connected (i.e., possibly
  noncubic) graphs.}.

\begin{restatable}{theorem}{introEighteenNineteen}
\label{18/19}
Let $G=(V,E)$ be a 3-edge-connected, cubic graph. The everywhere
$\frac{18}{19}$ vector for $G$ dominates a convex combination of tours
of $G$ and this convex combination can be found in polynomial time.
\end{restatable}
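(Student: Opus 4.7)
The plan is to produce an explicit convex combination of tours of $G$ in which every edge is used with probability at most $\frac{18}{19}$, refining the Christofides--Wolsey polyhedral analysis (which, starting from the fact that $\frac{2}{3}\mathbf{1}_E$ lies in the subtour polytope, yields only the everywhere $1$ vector) so as to extract $\frac{1}{19}$ worth of savings from the cubic, $3$-edge-connected structure. I would keep the high-level two-stage template of Christofides---spanning connected subgraph plus parity-correction $T$-join---but replace both stages with decompositions tailored to cubic graphs and enhanced by the tools flagged in the abstract: cycle covers that are even on all small cuts, and half-integral tree augmentation with small integrality gap.

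For the first stage, I would construct a distribution $\mathcal{D}_1$ over spanning connected subgraphs $H = F \cup R$ of $G$, where $F$ is a random $2$-factor chosen (via Petersen's theorem and the perfect-matching polytope of a cubic graph) so that each edge lies in $F$ with probability exactly $\frac{2}{3}$, and $R$ is a small reconnection multiset. A $2$-factor crosses every cut of $G$ an even number of times, so a $3$-cut is crossed either $0$ or $2$ times by $F$; the connectivity defects of $F$ are therefore supported on the cactus of $3$-cuts of $G$, and reconnecting them is a tree-augmentation problem on this cactus, whose half-integral LP is known to have small integrality gap. For the second stage, I would complete each $H$ to a tour by a $T$-join on its odd-degree vertex set; since $H$ is already even across every small cut, the parity-correction cost can be controlled in terms of the everywhere $\frac{2}{3}$ vector. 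Combining the stages with a carefully tuned mixing parameter $\lambda$ then gives a distribution over tours whose expected incidence vector is bounded coordinate-wise by $\frac{18}{19}\mathbf{1}_E$, and polynomial-time constructability follows because every ingredient---sampling from the perfect-matching polytope, solving the half-integral tree-augmentation LP, and computing a minimum $T$-join---admits a classical polynomial-time algorithm.

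The constant $\frac{18}{19}$ should drop out of a small linear program balancing three contributions per edge $e$: the probability that $e \in F$ (at most $\frac{2}{3}$), the probability that $e \in R$ (bounded using the half-integral augmentation gap), and the probability that $e$ is used in the parity-correction $T$-join (bounded via a Barahona--Conforti-type decomposition). The main obstacle I anticipate is the joint bookkeeping on an edge $e$ that is the lone matching edge across some tight $3$-cut: such an $e$ is absent from the $2$-factor but is simultaneously a prime candidate for inclusion in the reconnection set $R$ and in the parity-correction $T$-join, and the analysis must show that these competing pressures cannot push its total usage above $\frac{18}{19}$. Any slack in the integrality gap of half-integral tree augmentation translates directly into the final constant, so tightening the accounting on edges that lie in many tight $3$-cuts---in particular, those carrying the chain structure of the cactus---will be the delicate part of the argument.
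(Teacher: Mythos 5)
Your proposal does not match the paper's argument, and it has a concrete gap at exactly the point where the savings below $1$ must come from. The paper's proof uses a \emph{single, specific} cycle cover $C$ produced by the algorithm of Boyd, Iwata and Takazawa, whose defining property is that it crosses every $3$-edge and $4$-edge cut of $G$; consequently $G/C$ is $5$-edge-connected, the everywhere $\frac{2}{5}$ vector is feasible for \subt$(G/C)$, and a doubled spanning tree of $G/C$ costs only $\frac{4}{5}$ per contracted edge. Your construction instead samples $F$ from the perfect-matching/$2$-factor polytope so that each edge has marginal $\frac{2}{3}$. A generic $2$-factor in that decomposition need not cover the $3$- and $4$-cuts, so $G/F$ is only guaranteed to be $3$-edge-connected; the cheapest fractional connector of $G/F$ you can then certify is the everywhere $\frac{2}{3}$ vector, a doubled tree costs $\frac{4}{3}$ per edge of $E\setminus F$, and the per-edge accounting $\frac{2}{3}+\frac{1}{3}\cdot\frac{4}{3}\cdot(\cdots)$ lands you back at the everywhere $1$ vector of Christofides--Wolsey with no savings. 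The entire improvement to $\frac{18}{19}$ hinges on the $5$-edge-connectivity of the contraction, which your randomized $F$ does not provide; deferring the constant to ``a small linear program'' hides the fact that without this property the program has no feasible solution below $1$.

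Two further points of divergence. First, half-integral tree augmentation is the wrong tool here: a $1$-cover restores $2$-edge-connectivity but says nothing about parity, so it cannot replace the parity-correction stage of a tour construction; in the paper that tool is reserved for the $2$EC results (Theorems \ref{15/172ec} and \ref{thm:2EC-node-weight}), while the TSP bound needs no $T$-join at all --- the tour is simply $C\cup 2T$ for $T$ a spanning tree of $G/C$, which is Eulerian by construction. Second, the paper obtains $\frac{18}{19}$ not by balancing three per-edge contributions within one randomized construction, but by taking a convex combination of two \emph{deterministically unbalanced} vectors: the vector $v$ equal to $1$ on $C$ and $\frac{4}{5}$ on $E(G/C)$ (tours of the form $C\cup 2T$), and $\frac{3}{2}u$ where $u$ is the half-integral feasible point equal to $\frac{1}{2}$ on $C$ and $1$ elsewhere (Wolsey's polyhedral Christofides). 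The mixture $\frac{15}{19}v+\frac{4}{19}\cdot\frac{3}{2}u$ equals $\frac{18}{19}$ on both $C$ and $E(G/C)$. If you want to salvage your outline, replace the random $2$-factor by the BIT cycle cover, drop the tree-augmentation and $T$-join stages, and introduce the second (Wolsey) component to compensate for the fact that $v$ exceeds $\frac{18}{19}$ on the cycle-cover edges.
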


The same question can be posed replacing tours
with 2-edge-connected spanning multigraphs:
for an arbitrary positive integer $k$ and an arbitrary
$k$-edge-connected $k$-regular graph $G$, can the everywhere
$\alpha_k$ vector be decomposed into a convex combination of
2-edge-connected spanning multigraphs?
For general $k$, the best-known factor for this question
(as in the case for tours)
is $\alpha_k=\frac{3}{k}$, which can be obtained
via the polyhedral proof of Christofides'
algorithm~\cite{wolsey1980heuristic}.  For special cases, however, better factors
are known.  For $k=4$, Carr and Ravi showed that the everywhere
$\frac{2}{3}$ vector can be decomposed into a convex combination of
2-edge-connected spanning multigraphs~\cite{Carr98}.  Their proof is
constructive but is not guaranteed to run in polynomial time.

For a 3-edge-connected, cubic graph (i.e., the case $k=3$), Boyd and
Legault showed that the everywhere $\frac{4}{5}$ vector can be written
as a convex combination of 2-edge-connected spanning
multigraphs~\cite{philip6/7}.  This factor was subsequently improved
to $\frac{7}{9}$ by Legault~\cite{philip}.  These convex combinations
are a key ingredient for a related result on {\em half-triangle
  graphs}~\cite{philip6/7}.  Both the factors $\frac{4}{5}$
and $\frac{7}{9}$ are obtained via constructive procedures that are
not shown to run in polynomial time.
In this paper, we show that for a 3-edge-connected, cubic graph, there
is a polynomial-time algorithm to write the everywhere $\frac{15}{17}$
vector as a convex combination of 2-edge-connected spanning
multigraphs.

\begin{restatable}{theorem}{introFifteenSeventeen}
\label{15/172ec}
Let $G=(V,E)$ be a 3-edge-connected, cubic graph. The everywhere
$\frac{15}{17}$ vector for $G$ dominates a convex combination of
2-edge-connected spanning multigraphs of $G$ and this convex combination can be
found in polynomial time.
\end{restatable}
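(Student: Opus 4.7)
The plan is to construct an explicit convex combination $\sum_i \lambda_i \chi_{H_i} \leq \tfrac{15}{17}\chi_E$ in which each $H_i$ is a 2-edge-connected spanning multigraph of $G$, following the two-phase strategy suggested in the introduction: first produce a base decomposition that treats the small cuts of $G$ evenly, and then repair any residual bridges using half-integral tree augmentation.

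For the base decomposition, recall that in a 3-edge-connected cubic graph the everywhere $\tfrac{2}{3}$ vector lies in the subtour LP. I would decompose (a scaled version of) $\chi_E$ into a convex combination $\sum_j \mu_j \chi_{F_j}$ of spanning connected subgraphs $F_j$ with the property that each $F_j$ covers every 3-edge cut of $G$ an even, nonzero number of times. Such ``cycle covers'' are available for cubic 3-edge-connected graphs via pairings of perfect matchings, via Eulerian subgraph constructions, or via the cut/cycle space over $\mathbb{F}_2$. The key structural payoff is that any bridge of $F_j$ must lie across a cut of $G$ of cardinality at least $4$, which leaves substantial slack for the second phase.

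With each $F_j$ in hand, I would contract its 2-edge-connected components to a bridge tree $T_j$ and run half-integral tree augmentation using the non-tree edges $E \setminus F_j$ as links. Because the half-integral tree-augmentation LP has a small, well-understood integrality gap and admits a polynomial-time half-integral optimum $y_j$, doubling gives an integral augmentation set $A_j$ so that $H_j := F_j \cup 2A_j$ is 2-edge-connected. Summing the resulting convex combination yields
\[
\sum_j \mu_j \chi_{H_j} \;\leq\; \tfrac{2}{3}\chi_E \;+\; 2\sum_j \mu_j \chi_{A_j},
\]
and to conclude I need the edgewise bound $2\sum_j \mu_j \chi_{A_j} \leq \tfrac{15}{17}\chi_E - \tfrac{2}{3}\chi_E = \tfrac{11}{51}\chi_E$; the constant $\tfrac{15}{17}$ emerges from balancing this augmentation cost against the base decomposition.

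The main obstacle I anticipate is precisely this edgewise accounting for the augmentation step: a single application of half-integral tree augmentation delivers only a global cost guarantee, whereas what is needed is a uniform per-edge bound so that $\chi_{A_j}$ does not concentrate on any particular edge. I would address this by averaging over several structurally symmetric choices of base decomposition (so that every edge is used as an augmenting link with uniformly small probability), and by exploiting the Step~1 guarantee that bridges of $F_j$ only cross $G$-cuts of size $\geq 4$, which provides at least two ``spare'' link edges per relevant cut and hence extra slack in the augmentation LP. Designing the base decomposition with enough symmetry (rather than relying on a single ad hoc cycle cover) is the step where the proof has to be most careful, and is what I expect makes the polynomial-time guarantee possible compared with the earlier (non-algorithmic) $\tfrac{4}{5}$ and $\tfrac{7}{9}$ constructions.
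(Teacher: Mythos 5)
There is a genuine gap, in two places. First, your base decomposition cannot exist: in a cubic graph every vertex star $\delta(v)$ is a 3-edge cut, so a subgraph covering \emph{every} 3-edge cut an even, nonzero number of times has every degree even and positive; if it is also connected and spanning and its convex combination averages to $\tfrac{2}{3}\chi_E$ (i.e.\ $n$ edges on average), each $F_j$ would have to be a Hamiltonian cycle, which is impossible for non-Hamiltonian 3-edge-connected cubic graphs. The ``cover small cuts evenly'' objects that do exist are cycle covers hitting the \emph{essential} 3- and 4-edge cuts (Theorem \ref{bit13}), and these are disconnected 2-factors, not connectors; the ``even across 2-edge cuts'' connector decomposition is a separate tool the paper only needs for graphs that are merely 2-edge-connected. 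Second, even granting a connected base of density $\tfrac{2}{3}$, your accounting does not reach $\tfrac{15}{17}$: with links of value $\tfrac12$ on the $\tfrac13$-fraction of non-base edges, Theorem \ref{cjr} yields an augmentation of exactly $\tfrac43\cdot\tfrac12=\tfrac23$ per link edgewise, i.e.\ an added $\tfrac13\cdot\tfrac23=\tfrac29$ per edge, giving $\tfrac89$ --- and this is precisely Theorem \ref{2ec8/9cons}, not $\tfrac{15}{17}$ (doubling the augmentation edges, as you propose with $2A_j$, only makes it worse; no doubling is needed since links are disjoint from the tree). Your anticipated obstacle about per-edge accounting is actually a non-issue: the half-integral (more generally $\alpha$-integral) augmentation theorem gives a convex combination of 1-covers whose average equals $\tfrac{2}{1+\alpha}y$ \emph{coordinatewise}, so no symmetrization is required.

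The missing idea that gets below $\tfrac89$ is to blend two \emph{different} constructions that are skewed in complementary ways relative to a BIT cycle cover $C$. Construction one: since $G/C$ is 5-edge-connected, the everywhere $\tfrac25$ vector is in \subt$(G/C)$, so Christofides/Wolsey gives tours of $G/C$ within $\tfrac32\cdot\tfrac25=\tfrac35$ per edge of $E(G/C)$, and attaching $C$ yields 2-edge-connected multigraphs; this costs $1$ on $C$ but only $\tfrac35$ on $E(G/C)$ and $0$ on chords. Construction two: take the half-integral point $y$ with $y_e=\tfrac12$ on $C$ and $1$ on the perfect matching $E\setminus C$, decompose it into spanning trees, and augment each with half-integral 1-covers via Theorem \ref{cjr}; this costs $\tfrac56$ on $C$ and $1$ off $C$. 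The convex combination $\tfrac{5}{17}$ of the first plus $\tfrac{12}{17}$ of the second equals $\tfrac{15}{17}$ on every edge. Without the first construction (or some other device that is cheap on $E\setminus C$... rather, cheap where the tree-based scheme is expensive), the tree-plus-augmentation route alone stalls at $\tfrac89$.
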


One implication of Theorem \ref{18/19} is that for a weighted,
3-edge-connected, cubic graph for which the everywhere $\frac{2}{3}$
vector is an optimal solution for \ref{subtour-equal}, we can achieve
an approximation ratio of $\frac{27}{19}$ for TSP, which improves over the
approximation factor of Christofides' algorithm for these
graphs\footnote{We remark that characterizing instances by their
  optimal LP solutions is how classes of fundamental points are
  defined.  Incidentally, many fundamental classes of TSP and 2EC
  extreme points are either cubic or
  subcubic~\cite{Boydcarr,Carr98,CV00}.}.  A natural class of such
graphs are 3-edge-connected, cubic, {\em node-weighted} graphs.  In the
{\em node-weight metric}, each vertex of an undirected graph is
assigned a positive integer weight; the weight of an edge is the sum of the weights of its
two endpoints.
(The node-weight metric is an intermediate class between
weighted and unweighted graphs for studying the TSP and has been previously studied by
Frank~\cite{frank1992augmenting}.)  In fact, we show that using some
of the same tools applied to the uniform cover problems, we can prove
the following improved approximation ratio for such graphs.

\begin{restatable}{theorem}{introSevenFifths}
\label{7/5}
There is a $\frac{7}{5}$-approximation algorithm for TSP on
node-weighted, 3-edge-connected, cubic graphs.
\end{restatable}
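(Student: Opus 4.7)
The plan exploits the tight lower bound $\opt \geq 2W$, where $W := \sum_{v\in V} w_v$, which holds because on a $3$-edge-connected cubic graph the everywhere $\tfrac{2}{3}$ vector is a feasible (and in fact optimal) solution of \ref{subtour-equal} with cost $\tfrac{2}{3}\cdot 3W = 2W$. The node-weighted cost of any Eulerian spanning multigraph $H$ equals $w(H) = \sum_v w_v\,\deg_H(v)$, so a $\tfrac{7}{5}$-approximation reduces to constructing such an $H$ with total weighted excess $\sum_v w_v(\deg_H(v)-2) \leq \tfrac{4}{5}\,W$; the goal is therefore to produce a tour that is ``nearly Hamiltonian'' in a node-weighted sense.

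The natural backbone is a $2$-factor $F$ of $G$: such $F$ exists by Petersen's theorem applied to the $3$-edge-connected cubic graph $G$, and its cost is always $w(F)=2W$ because every vertex has degree exactly $2$ in $F$. Its complement $M = E\setminus F$ is a perfect matching, and contracting the cycles $C_1,\ldots,C_k$ of $F$ yields a $3$-edge-connected multigraph $\widehat{G}$ whose edge set is $M$. The remaining task is to choose a small-weight subset $M'\subseteq M$, together with a multiplicity pattern on $M'$, so that $F$ extended by those copies of $M'$ is connected and Eulerian on $V(G)$. This is a joint connectivity-and-parity augmentation problem on $\widehat{G}$, and I would attack it with the same tools highlighted in the abstract: the cycle-cover decompositions that cover small-cardinality cuts an even (nonzero) number of times, and the half-integral tree augmentation machinery used to prove Theorems~\ref{18/19} and~\ref{15/172ec}. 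Since $x^*\vert_M$ is the $\tfrac{2}{3}$-uniform vector on $M$ of cost $\tfrac{2}{3}W$, the hope is that the known small integrality gap of half-integral tree augmentation on $\widehat{G}$ produces $M'$ with $w(M')\leq \tfrac{2}{5}W$.

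Parity is then repaired either by doubling $M'$ or, more economically, by a minimum-weight $T$-join on the odd set created by $F\cup M'$. In the doubling bound the total cost is at most $2W + 2\,w(M') \leq 2W + \tfrac{4}{5}W = \tfrac{14}{5}W = \tfrac{7}{5}\cdot \opt$. The main obstacle is carrying out this connectivity-and-parity accounting tightly and simultaneously: one must prove that half-integral tree augmentation run on $\widehat{G}$ against the $\tfrac{2}{3}$-uniform LP vector on $M$ actually certifies $w(M')\leq \tfrac{2}{5}W$, using only the $3$-edge-connectivity of $\widehat{G}$; and then one must verify that the induced odd set on $V(G)$ admits a repair (doubling or $T$-join) of cost within the remaining budget, using the node-weight structure of $G$. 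This delicate interaction between the half-integral rounding and the node-weight structure is, I expect, where the proof does its real work and where the $\tfrac{7}{5}$ bound is ultimately forced — a more naive combination of the paper's $\tfrac{4}{3}$-approximation for 2ECSS on subcubic node-weighted graphs with a generic parity fix through the LP bound $\tfrac{1}{3}\mathbf{1}$ on the $T$-join polytope only gives ratio $\tfrac{11}{6}$, so it is the tailoring of the augmentation to the $2$-factor backbone, not a black-box reduction, that must drive the improvement.
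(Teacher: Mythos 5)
Your outline has the right skeleton (2-factor backbone, contract its cycles, connect them cheaply, double for parity) and the right arithmetic target ($w(M')\leq \tfrac{2}{5}W$), but the step that actually produces the factor $\tfrac{2}{5}$ is missing, and the plan as stated would not deliver it. You take an \emph{arbitrary} 2-factor $F$ (via Petersen) and observe only that the contracted multigraph $\widehat{G}$ is 3-edge-connected; from 3-edge-connectivity alone the best fractional certificate for a spanning connected subgraph of $\widehat{G}$ is the everywhere $\tfrac{2}{3}$ vector, which bounds a minimum spanning tree of $\widehat{G}$ by $\tfrac{2}{3}W$, and doubling then gives total cost $2W+\tfrac{4}{3}W=\tfrac{5}{3}\cdot\opt$ --- worse than Christofides. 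The paper's proof hinges on choosing the 2-factor to be the Boyd--Iwata--Takazawa cycle cover $C$ that covers every 3-edge and 4-edge cut (Theorem~\ref{bit13}); by Observation~\ref{5ec} this makes $G/C$ \emph{5-edge-connected}, so the everywhere $\tfrac{2}{5}$ vector lies in \subt$(G/C)$ and, by Fact~\ref{fact:spanning-tree}, a minimum spanning tree $T$ of $G/C$ has weight at most $\tfrac{2}{5}\,w(E\setminus C)=\tfrac{2}{5}W$. Then $C\cup 2T$ is a tour of weight at most $2W+\tfrac{4}{5}W=\tfrac{7}{5}z_G$. You mention the even-cut cycle covers in passing, but your quantitative claim explicitly relies ``only on the 3-edge-connectivity of $\widehat{G}$,'' which is not enough.

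A second, smaller misdirection: there is no ``joint connectivity-and-parity augmentation problem'' here, and half-integral tree augmentation (Theorem~\ref{cjr}) plays no role in this theorem. Since $C$ already gives every vertex even degree, adding \emph{two} copies of each edge of a spanning tree of $G/C$ preserves parity and yields connectivity simultaneously; the $T$-join alternative you contemplate is unnecessary. The tree-augmentation machinery is what the paper uses for the 2EC results (Theorems~\ref{15/172ec} and~\ref{thm:2EC-node-weight}), where one must cover the 1-edge cuts of a tree or connector without doubling it --- a genuinely different task from the one here.
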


Similarly, Theorem \ref{15/172ec} implies that for a weighted,
3-edge-connected, cubic graph for which the everywhere $\frac{2}{3}$
vector is an optimal solution for \ref{subtour-equal}, we can obtain
an approximation ratio of $\frac{45}{34}$ for 2EC, which improves upon
the best-known approximation factor for such graphs derived from
Christofides' algorithm.  We explore this problem further when the
input graph is no longer 3-edge-connected and prove the following
theorem for subcubic, node-weighted graphs.

\begin{restatable}{theorem}{FourThirdsEC}
\label{thm:2EC-node-weight}
If $G$ is a node-weighted, subcubic graph, then there exists a
$\frac{17}{12}$-approximation for 2EC on $G$.
\end{restatable}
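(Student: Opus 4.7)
The plan is to combine the LP decomposition announced in the abstract with an augmentation step tailored to the node-weighted subcubic structure. First, solve the subtour relaxation on $G$ to obtain an optimal fractional solution $x^*$. Since costs are node-weighted, the LP value factors as $LP = \sum_{v} w(v)\,x^*(\delta(v))$, and $x^*(\delta(v)) \ge 2$ gives $LP \ge 2W$ with $W = \sum_v w(v)$; of course $LP \le \opt$. Next, invoke the promised decomposition procedure to write $x^* = \sum_i \lambda_i\,\chi^{H_i}$, where each $H_i$ is a spanning connected subgraph of $G$ covering every $2$-edge cut of $G$ an even number of times. Since a $2$-edge cut has only two edges, every $H_i$ must contain both edges of every $2$-edge cut; consequently, every bridge of $H_i$ lies in a $G$-cut of size at least $3$, so each bridge admits at least two candidate augmenting links in $E(G) \setminus H_i$.

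For each $H_i$, I would then produce an augmenting set $A_i \subseteq E(G) \setminus H_i$ so that $H_i \cup A_i$ is $2$-edge-connected. Contracting the $2$-edge-connected blocks of $H_i$ gives the ``bridge tree'' of $H_i$, and augmenting $H_i$ is equivalent to covering this tree by links from $E(G) \setminus H_i$. Linearity of the node-weighted cost gives $\sum_i \lambda_i\,w(H_i) = LP$, so by averaging some $i^*$ satisfies
\[
w(H_{i^*} \cup A_{i^*}) \;\le\; LP + \sum_{i} \lambda_i\,w(A_i).
\]
Establishing $\sum_i \lambda_i\,w(A_i) \le \tfrac{1}{3} LP$ therefore yields a 2EC of weight at most $\tfrac{4}{3}LP \le \tfrac{4}{3}\opt$.

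The hard part will be this augmentation bound. Writing $w(A_i) = \sum_v w(v)\,\deg_{A_i}(v)$, the target reduces to the per-vertex inequality $\sum_i \lambda_i\,\deg_{A_i}(v) \le \tfrac{1}{3}\,x^*(\delta(v))$ for every $v$. I would attack this by coupling the links chosen for $A_i$ to the cuts crossed by the bridge tree of $H_i$: subcubicity forces at most $3 - \deg_{H_i}(v)$ augmentation edges incident to any vertex $v$, and the half-integral tree augmentation tool cited in the abstract lets me round the augmentation LP on each bridge tree with small loss, yielding expected link cost comparable to a $\tfrac{1}{3}$-fraction of $x^*$ on the relevant cut edges. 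Combined with the even-cover property (which guarantees $2$-edge cuts cost nothing extra in augmentation), this should deliver the required bound and complete the proof.
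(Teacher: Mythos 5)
Your proposal follows essentially the same route as the paper's: decompose $x^*$ into connectors that cover every 2-edge cut of $G$ an even number of times, observe that every 1-edge cut of a connector therefore lies in a $G$-cut of size at least $3$ so that the everywhere-$\frac{1}{2}$ vector on the remaining edges is a feasible fractional 1-cover, apply the half-integral tree-augmentation theorem to augment at fractional cost $\frac{4}{3}\cdot\frac{1}{2}=\frac{2}{3}$ per link, and average using $w(E)\le\frac{3}{2}z_G$ for node-weighted subcubic graphs (your per-vertex inequality $\sum_i\lambda_i\deg_{A_i}(v)\le\frac{1}{3}x^*(\delta(v))$ is exactly what the averaging needs and does hold since $x^*(\delta(v))\ge 2$ and $\deg_G(v)\le 3$). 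One small slip worth noting: a connector need not contain \emph{both} edges of a 2-edge cut (it may contain two copies of one and zero of the other), but the consequence you draw---that every bridge of $H_i$ sits in a $G$-cut of size at least $3$---still holds.
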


\subsection{Outline and Organization}

In Section \ref{sec:notation}, after stating some basic notation, we
formally present the tools we use to prove our main theorems.  The
first tool, presented in Section \ref{subsubsec:BIT-tool}, is an
efficient algorithm by Boyd, Iwata and Takazawa to find a cycle cover
that covers all 3- and 4-edge cuts in a bridgeless, cubic
graph~\cite{bit13}.  This is an essential tool in the proofs of
Theorems \ref{18/19}, \ref{15/172ec} and \ref{7/5}.

In Section \ref{subsubsec:TAP}, we present a key tool for proving
Theorems \ref{15/172ec} and \ref{thm:2EC-node-weight}, which is a
theorem by Cheriyan, Jord\'an and Ravi proving a small integrality gap
for the half-integral 1-cover problem~\cite{cheriyan19992}.  The
1-cover problem generalizes the tree augmentation problem: given a
connected subgraph $S$, the goal is to find an additional subset of
edges (from the edges not in the subgraph $S$) to make $S$
2-edge-connected.  The best-known approximation factor for this
problem is 2~\cite{FJ81}, but when the solution is half-integral,
there is a $\frac{4}{3}$-approximation~\cite{cheriyan19992}.  This
latter result has been generalized by Iglesias and
Ravi~\cite{iglesias2017coloring}.

In Section \ref{sec:uniform-cover}, we show how to apply these tools
to prove our main theorems on uniform covers, which we introduced and
motivated in the introduction.  In Section \ref{sec:BIT}, we show how
to apply these tools to go beyond the approximation guarantee obtained
via uniform covers and present several applications to connectivity
problems on node-weighted, 3-edge-connected, cubic graphs.  In
addition to Theorem \ref{7/5}, we present a
$\frac{13}{10}$-approximation algorithm for 2EC in cubic,
3-edge-connected graphs.  This improves the approximation ratio of
$\frac32$ for this problem that follows from Christofides' algorithm.
A natural question is if we can extend these results to graphs that
are 2-edge-connected and either cubic or subcubic.

Extending our approach to input graphs that are 2-edge-connected
necessitates finding methods for covering 2-edge cuts.  In Section
\ref{sec:decomposition}, we present a procedure to decompose a
solution for the subtour elimination linear program into spanning,
connected sub(multi)graphs that cover each 2-edge cut an even
(nonzero) number of times.  In Section \ref{sec:ala-christofides}, we
demonstrate an application of this decomposition theorem for TSP on
node-weighted, cubic graphs; we show that an algorithm similar to that
of Christofides has an approximation factor better than $\frac{3}{2}$
when the weight of an optimal subtour solution is strictly larger than
twice the sum of the node weights.  In Section \ref{subsec:tap}, we
give another application of our decomposition theorem, which allows
us to (again) apply the aforementioned theorem of Cheriyan, Jord\'an
and Ravi and augment these spanning multigraphs with half-integral
tree augmentations.  Combining this with ideas from Section
\ref{sec:ala-christofides}, we prove Theorem
\ref{thm:2EC-node-weight}.

\section{Preliminaries and Tools}\label{sec:notation}

In the remainder of this paper, $G=(V,E)$ denotes a weighted graph and
$w(e)$ denotes the weight of edge $e \in E$.  We can assume that $G$
is 2-vertex-connected (e.g., applying Lemma 2.1 from
\cite{momke2016removing}).  Graph $G$ is node-weighted if there is a
function $f:V\rightarrow \mathbb{R}^+$ such that for each $e=uv\in E$,
we have $w(e)=f(v)+f(u)$. In this case, we say $G$ is node-weighted
with node-weight function $f$. Denote by $w(E)$ the total edge weight:
$\sum_{e\in E}w(e)$. For ease of notation let $n = |V|$. For vectors
$a,b\in \mathbb{R}^m$ we say $a$ dominates $b$ if $a_i\geq b_i$ in
each coordinate $i \in \{1,\dots, m\}$.

We will work with multisets of edges of $G$. For a multisubset $F$ of
$E$, the submultigraph induced by $F$ (henceforth referred to simply
as a multigraph) is a graph that has the same number of copies of each
edge as in $F$. For a positive integer $t$, the multiset $t\cdot F$ is
the multiset that contains $t$ copies of each element in $F$.  For
multisets $F$ and $F'$, we denote by $F\cup F'$ the multiset that
contains as many copies of each edge as those in $F$ plus those in
$F'$.  For a multiset $F$ and edge $e \in F$, we denote by $F - e$ the
multiset that results from removing a single copy of $e$ from $F$.  By
$F + e$, we denote the multiset that results from adding a single copy
of $e$ to $F$.  For a multiset of edges (or a multigraph) $F$ the
summand $\sum_{e\in F} w(e)$ counts each edge $e\in F$ as many times
as it appears in the multiset $F$.

A multigraph $F$ of $G$ is a tour if the vertex set of $F$ spans $V$,
$F$ is connected, and every vertex in $F$ has even degree.  For the
sake of brevity, we henceforth use the term 2-edge-connected
multigraph of $G$ to refer to a 2-edge-connected {\em spanning}
multigraph (i.e., a multigraph that spans all the vertices of $G$).  For
a subset of edges $S\subseteq E$, the graph $G/S$ is the graph
obtained from $G$ by contracting the edges in $S$ (and deleting
self-loops).  For a subset $S$ of vertices of $G$ let $\delta(S)
\subset E$ denote the edges crossing the cut $(S, V\setminus{S})$.

\subsection{Subtour Elimination Linear Program}\label{subsec:subtour}

Consider a (not necessarily complete) weighted graph $G=(V,E)$ with edge
weights $w(e)$ for $e\in E$.  The output of TSP and 2EC on input
graph $G$ is a minimum weight tour and a minimum weight 2-edge
connected multigraph of $G$, respectively.
The following
relaxation provides a lower bound on the weight of an optimal
solution for both problems.
\begin{align}
&z_G=\min\sum_{e\in E} w(e) x_e & \;\nonumber \\
&x(\delta(S))\geq 2
  &\text{for } \emptyset \subset S \subset V \tag{{\sc Subtour}$(G)$}\label{subtour} \\
& x_e \geq 0 & \text{for } e \in  E. \nonumber
\end{align}
The \textit{metric completion of $G$} is the complete graph $G^{met}$
on the vertex set of $G$ such that for $u,v\in V$ the weight of the
$uv$ edge in $G^{met}$ is the weight of the shortest path between $u$
and $v$ in $G$. Clearly, these weights obey the triangle
inequality. TSP on $G$ is equivalent to finding a minimum weight tour
in $G^{met}$.  Since $G^{met}$ contains a minimum weight tour that is
a Hamilton cycle, the following degree constraints are valid and yield
the following seemingly stronger lower bound for TSP.
\begin{align}
&z_{G^{met}}=\min\sum_{u,v\in V} w(uv) x_{uv} & \;\nonumber \\
&\sum_{u\in S,v\notin S} x_{uv} \geq 2
&\text{for } \emptyset \subset S \subset V
\tag{{\sc
		Subtour}$^{=}(G)$}\label{subtour-equal} \\
& \sum_{v\in V\setminus \{u\}} x_{uv} =  2 & \text{for } u\in V\nonumber\\
& x_{uv} \geq 0 & \text{for } u,v \in  V. \nonumber
\end{align}
Note that in the above formulation, edges $uv$ and $vu$ are the same, so
$x_{uv}$ and $x_{vu}$ represent the same variable. Cunningham showed that the bounds
$z_G$ and $z_{G^{met}}$ are in fact equal~\cite{Monma1990,Goeberts}.
For a
solution $x\in\;$\ref{subtour-equal} let $G_x=(V,E_x)$, where
$E_x=\{uv : u,v \in V \text{ and }
x_{uv}>0\}$.

We will frequently use the following well-known fact~\cite{lgs}.
\begin{fact}\label{fact:spanning-tree}
Any point $x \in $ \ref{subtour} dominates a convex combination of spanning
trees, which can be found efficiently.
\end{fact}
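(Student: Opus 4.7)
The plan is to derive Fact~\ref{fact:spanning-tree} by combining Edmonds' characterization of the spanning tree polytope with a quick check that every $x \in$ \ref{subtour} satisfies the partition inequalities defining its dominant. Let $P_T \subseteq \mathbb{R}^E$ denote the convex hull of incidence vectors of spanning trees of $G$. Edmonds showed that $P_T = \{y \geq 0 : y(E) = n-1,\ y(E(S)) \leq |S|-1 \text{ for } \emptyset \ne S \subseteq V\}$, and its dominant (the set of vectors that coordinate-wise dominate some point of $P_T$) admits the partition description
\[ \mathrm{dom}(P_T) \;=\; \bigl\{ y \in \mathbb{R}^E_+ : y(\delta(\mathcal{P})) \geq |\mathcal{P}|-1 \text{ for every partition } \mathcal{P} \text{ of } V \bigr\}, \]
where $\delta(\mathcal{P})$ denotes the edges whose endpoints lie in distinct parts of $\mathcal{P}$.

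First I would verify that every $x \in$ \ref{subtour} lies in $\mathrm{dom}(P_T)$. For a partition $\mathcal{P} = \{S_1,\dots,S_k\}$ with $k \geq 2$, each $S_i$ is a nonempty proper subset of $V$, so $x(\delta(S_i)) \geq 2$ by the subtour inequality. Since every edge crossing $\mathcal{P}$ is counted in exactly two of the sets $\delta(S_i)$,
\[ x(\delta(\mathcal{P})) \;=\; \tfrac{1}{2}\sum_{i=1}^{k} x(\delta(S_i)) \;\geq\; k \;\geq\; k-1, \]
and for $k = 1$ the inequality $0 \geq 0$ is trivial. Combined with $x \geq 0$, this places $x$ in $\mathrm{dom}(P_T)$, so there exists $y \in P_T$ with $y \leq x$ coordinate-wise; expressing $y$ as a convex combination of spanning tree indicator vectors then shows that $x$ dominates the same combination.

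For the efficiency claim, I would appeal to the fact that minimizing a linear function over $P_T$ amounts to running a minimum spanning tree algorithm, which gives polynomial-time separation for $P_T$ via Gr\"otschel--Lov\'asz--Schrijver. One can therefore compute $y \in P_T$ with $y \leq x$ (either by the ellipsoid method or by Edmonds' matroid-partition algorithm) and then extract a convex-combination representation of $y$ into at most $|E|$ spanning trees by a Carath\'eodory-style peeling procedure: repeatedly find a spanning tree $T$ whose indicator $\chi^T$ is an extreme point of the face of $P_T$ containing the current residual, subtract the largest scalar multiple of $\chi^T$ that keeps the residual feasible, and iterate. The main subtlety is in this algorithmic extraction; the existence half is immediate from the partition characterization once the subtour inequalities have been shown to imply it.
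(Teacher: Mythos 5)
Your proof is correct and is essentially the argument the paper relies on: the paper states this fact with a citation to Gr\"otschel--Lov\'asz--Schrijver rather than proving it, but the identical partition-inequality verification $x(\delta(\mathcal{P})) = \tfrac{1}{2}\sum_{i} x(\delta(S_i)) \geq |\mathcal{P}| \geq |\mathcal{P}|-1$ appears verbatim in its proof of the connector decomposition (Lemma \ref{decom}), together with the same appeal to polynomial-time separation and the constructive Carath\'eodory theorem for the algorithmic claim. No gaps.
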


\subsection{Cycle Covers Covering All 3- and 4-Edge Cuts}\label{subsubsec:BIT-tool}

We now present one of our main tools for proving Theorems \ref{18/19} and \ref{15/172ec}.
Given a graph $G=(V,E)$, a \textit{cycle cover} (also known as a
{\em 2-factor}) of $G$ is a collection of vertex disjoint cycles whose
vertex sets partition $V$. Cycle covers have been extensively studied
in the area of matching theory and have been also used to obtain
approximation algorithms for TSP.

Kaiser and \v{S}krekovski \cite{kaiser} proved that every bridgeless,
cubic graph has a cycle cover that covers all 3-edge and 4-edge cuts
of the graph.  Their proof is not algorithmic and an efficient, constructive
version was given by Boyd, Iwata and Takazawa~\cite{bit13}.

\begin{theorem}[Boyd, Iwata and Takazawa \cite{bit13}]\label{bit13}
Let $G$ be a bridgeless, cubic graph. Then there is an algorithm whose
running time is polynomial in the size of $G$ that finds a cycle cover
of $G$ covering every 3-edge and 4-edge cut of $G$.
\end{theorem}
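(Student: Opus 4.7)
First, I would reformulate the problem as a constrained perfect matching problem. In a cubic graph, a 2-factor is exactly the complement of a perfect matching, so finding a cycle cover amounts to finding a perfect matching $M$ (which exists by Petersen's theorem since $G$ is bridgeless and cubic). A standard parity argument gives $|M \cap \delta(S)| \equiv |S| \pmod{2}$, and for a 3-edge (respectively 4-edge) cut $\delta(S)$, the degree-sum parity of the cubic graph forces $|S|$ to be odd (respectively even). Hence, writing the 2-factor as $F = E \setminus M$ and noting that $|F \cap \delta(S)|$ is forced to be even, the requirement that $F$ cover $\delta(S)$ translates into $|M \cap \delta(S)| = 1$ for every 3-edge cut and $|M \cap \delta(S)| \leq 2$ for every 4-edge cut.

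Next, I would set up a polyhedral formulation. Take Edmonds' perfect matching polytope (degree equalities together with the odd-set blossom inequalities $x(\delta(S)) \geq 1$) and augment it with the extra upper bounds $x(\delta(S)) \leq 1$ for each 3-edge cut and $x(\delta(S)) \leq 2$ for each 4-edge cut. The everywhere $\tfrac{1}{3}$ vector is feasible (yielding values $1$ and $\tfrac{4}{3}$ on these new constraints), so the polytope is nonempty. The plan is to show that this augmented polytope is integral, so that any vertex solution is a perfect matching of $G$ satisfying the required cardinality on every small cut; the complementary 2-factor is then the desired cycle cover.

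To prove integrality, I would follow a structural/inductive strategy in the spirit of Kaiser and \v{S}krekovski. Identify a non-trivial tight 3-edge cut $\delta(S)$; contracting $V \setminus S$ yields a smaller bridgeless cubic graph in which a valid constrained matching can be found recursively, and symmetrically on the other side. Gluing the two matchings across $\delta(S)$ using the single chosen cut edge produces a valid matching of $G$. The 4-edge case is handled by a splitting-off operation at the contracted degree-$4$ vertex that reduces back to a smaller bridgeless cubic instance. To make the whole procedure run in polynomial time, one invokes Edmonds' blossom algorithm as the base primitive and uses standard tools (e.g.\ Gomory--Hu trees or Gabow's small-cut enumeration) to locate the relevant 3- and 4-edge cuts efficiently.

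The hard part will be ensuring that contraction and splitting preserve both cubicity and bridgelessness, and that the partial matchings on the two sides can always be combined consistently. In particular, contracting a 3-edge cut may create new 3- or 4-edge cuts in the smaller graph, and splitting off a pair of edges at a degree-$4$ vertex must be performed so that none of the newly relevant cuts are violated when the matching is lifted back to $G$. Controlling how the family of small cuts evolves under these reductions, and bounding the recursion depth polynomially, is precisely the algorithmic contribution of~\cite{bit13} beyond the existence argument of~\cite{kaiser}.
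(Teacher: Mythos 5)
First, note that the paper does not prove this statement at all: Theorem~\ref{bit13} is quoted as an external tool from~\cite{bit13}, so there is no in-paper proof to compare yours against; your proposal has to be judged on its own.

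Your reformulation is correct and is indeed the standard starting point: in a cubic graph the 2-factor $F=E\setminus M$ covers a 3-edge cut iff $|M\cap\delta(S)|=1$ and covers a 4-edge cut iff $|M\cap\delta(S)|\le 2$, and the parity bookkeeping you give is right. After that, however, the proposal contains two genuine gaps. The first is the assertion that the perfect matching polytope augmented with the upper bounds $x(\delta(S))\le 1$ on 3-edge cuts and $x(\delta(S))\le 2$ on 4-edge cuts is integral. This is the entire theorem in disguise (an integral vertex of a nonempty such polytope is exactly the desired matching), and you give no argument for it; adding upper-bound constraints on cuts to the matching polytope does not in general preserve integrality, and nothing in Edmonds' description of the matching polytope yields this. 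The second gap is the inductive step. When you split $G$ along a 3-edge cut $\{e_1,e_2,e_3\}$ and recurse on the two contracted sides, each side returns a matching using exactly \emph{one} of $e_1,e_2,e_3$, but nothing guarantees the two sides pick the \emph{same} edge; you need a strengthened induction hypothesis (e.g.\ that for each prescribed $e_i$ there is a valid matching containing it) or some other mechanism to enforce consistency, and you do not supply one. Moreover, 3- and 4-edge cuts of $G$ that cross the chosen cut are not visible in either contracted piece, so recursing on the two sides separately does not obviously enforce all constraints of $G$; an uncrossing/laminarity argument is needed and is missing. You candidly acknowledge that controlling how the family of small cuts evolves and making the recursion polynomial ``is precisely the algorithmic contribution of~\cite{bit13}'' --- which is an accurate admission that the proof has not been given. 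As it stands, this is a plausible plan with the right reduction, not a proof.
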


A straightforward observation is the following.
\begin{observation}\label{5ec}
Let $G$ be a 3-edge-connected, cubic graph. Let $C$ be a cycle cover
that covers 3-edge cuts and 4-edge cuts in the graph. Then $G/C$ is a
5-edge-connected multigraph.
\end{observation}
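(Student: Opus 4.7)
The plan is to argue by contradiction. Suppose for the sake of contradiction that $G/C$ has a cut of size at most $4$, say $\delta_{G/C}(\bar S)$ with $|\delta_{G/C}(\bar S)| \le 4$ for some proper nonempty $\bar S \subset V(G/C)$.

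The first step is to lift this cut back to $G$. Since each cycle of $C$ becomes a single vertex of $G/C$, let $S \subseteq V$ be the union of the vertex sets of those cycles whose contracted images lie in $\bar S$. By construction, every cycle of $C$ lies entirely on one side of the partition $(S, V\setminus S)$, so no edge of $C$ crosses this cut; that is, $\delta_G(S) \cap E(C) = \emptyset$. Moreover, contraction does not affect edges outside $C$, so $\delta_G(S) = \delta_{G/C}(\bar S)$ as edge sets, and in particular $|\delta_G(S)| \le 4$.

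The second step is to apply the hypotheses on $G$ and $C$. Because $G$ is 3-edge-connected and $S$ is a proper nonempty subset of $V$, we have $|\delta_G(S)| \ge 3$, so $|\delta_G(S)| \in \{3,4\}$. By the assumption on $C$, every 3-edge and 4-edge cut of $G$ is covered by $C$, so $|\delta_G(S) \cap E(C)| \ge 1$. Furthermore, any cycle crosses any cut an even number of times, so the total number of edges of $C$ in $\delta_G(S)$ is even, forcing $|\delta_G(S) \cap E(C)| \ge 2$. This contradicts $\delta_G(S) \cap E(C) = \emptyset$ established above, completing the argument.

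There is no real obstacle here: the combination of the covering hypothesis and the parity of cycle-cut intersections is exactly the mechanism that forbids small cuts in $G/C$, and the only care needed is in correctly matching cuts of $G/C$ with cuts of $G$ that respect the components of $C$.
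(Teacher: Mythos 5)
Your proof is correct. The paper in fact states this observation without proof (it is introduced as ``a straightforward observation''), and your argument --- lifting a cut of $G/C$ to a cut of $G$ disjoint from $E(C)$, noting $3$-edge-connectivity forces its size into $\{3,4\}$, and contradicting the covering hypothesis --- is exactly the intended justification. One tiny remark: the parity step ($|\delta_G(S)\cap E(C)|$ even, hence $\ge 2$) is not needed, since the covering hypothesis already gives $|\delta_G(S)\cap E(C)|\ge 1$, which contradicts the emptiness you established.
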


Cubic, bipartite graphs exhibit even more structure, allowing for a
stronger corollary.

\begin{observation}\label{bip-cycle-contract}
	Let $G$ be a cubic, bipartite graph. Let $C$
        be a cycle cover of $G$.  Then the graph $G/C$ is Eulerian.
\end{observation}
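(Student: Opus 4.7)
The plan is a direct degree-parity count. First, I would note that because $G$ is cubic and the cycle cover $C$ partitions $V(G)$ into vertex-disjoint cycles, every vertex $v$ has exactly two incident edges in $C$ and exactly one incident edge in $E \setminus C$. Consequently, for any cycle $C_i \in C$ with $|V(C_i)| = k_i$, the total number of $(E\setminus C)$-edge-endpoints incident to vertices of $C_i$ is exactly $k_i$.

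Next, I would convert this into a degree count in $G/C$ after self-loop deletion. An edge $uv \in E \setminus C$ becomes a self-loop at the image of $C_i$ precisely when $u, v \in V(C_i)$; such an edge contributes $2$ to the raw endpoint count at the image of $C_i$ but is deleted from $G/C$ by the convention established in Section \ref{sec:notation}. Writing $\ell_i$ for the number of such edges, the degree of the image of $C_i$ in $G/C$ equals $k_i - 2\ell_i$.

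Finally, bipartiteness enters at the last step: every cycle of a bipartite graph has even length, so $k_i$ is even and hence $k_i - 2\ell_i$ is even for every $i$. Every vertex of $G/C$ therefore has even degree, which is the Eulerian conclusion. I do not anticipate any serious obstacle; the argument is a parity count and does not use the stronger cut-covering properties of the cycle covers produced by Theorem \ref{bit13}, so it applies to any cycle cover of a cubic bipartite graph.
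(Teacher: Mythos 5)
Your proof is correct and follows essentially the same route as the paper's: both reduce to the observation that the degree of a contracted cycle $C_i$ in $G/C$ has the same parity as $|V(C_i)|$, which is even by bipartiteness. You simply make explicit the self-loop accounting ($k_i - 2\ell_i$) that the paper's one-line parity claim leaves implicit.
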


\begin{proof}
Each vertex in $G/C$ corresponds to a cycle in $C$ and the degree of
this vertex has the same parity as the number of edges in the cycle.  Since
$G$ is bipartite, every cycle in $C$ is an even cycle.  Therefore,
each vertex in $G/C$ has even degree, since it is obtained by
contracting a cycle in $C$.  We can conclude that $G/C$ is an Eulerian
graph.
\end{proof}

\begin{observation}\label{bipartitebit}
	Let $G$ be a 3-edge-connected, cubic, bipartite graph. Let $C$
        be a cycle cover that covers 3-edge cuts and 4-edge cuts in
        the graph. Then $G/C$ is a 6-edge-connected graph.
\end{observation}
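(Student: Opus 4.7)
The plan is to combine the two preceding observations. Observation \ref{5ec} applied to our 3-edge-connected, cubic graph $G$ and the cycle cover $C$ (which covers all 3- and 4-edge cuts) tells us that $G/C$ is 5-edge-connected. Observation \ref{bip-cycle-contract} applied to the cubic, bipartite graph $G$ tells us that $G/C$ is Eulerian, i.e.\ every vertex of $G/C$ has even degree.

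Now I would use the standard fact that every edge cut in an Eulerian multigraph has even cardinality. The justification is a one-line parity argument: for any vertex subset $S$ of $G/C$,
\[
|\delta(S)| \;=\; \sum_{v\in S} \deg(v) \;-\; 2\,|E(S)|,
\]
where $E(S)$ denotes the edges with both endpoints in $S$. Since every term on the right-hand side is even, so is $|\delta(S)|$.

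Combining these two facts, every cut of $G/C$ has size at least $5$ and is of even cardinality, hence has size at least $6$. Therefore $G/C$ is 6-edge-connected, as claimed. There is no real obstacle here; the only subtlety worth double-checking is that Observation \ref{bip-cycle-contract} genuinely gives us an Eulerian \emph{multigraph} (parallel edges permitted after contraction), and that the parity argument above works verbatim for multigraphs, which it does since each parallel edge contributes $1$ to $|\delta(S)|$ and either $0$ or $2$ to $\sum_{v\in S}\deg(v)$ depending on whether exactly one or both endpoints lie in $S$.
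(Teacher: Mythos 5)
Your proof is correct and follows exactly the same route as the paper's: apply Observation \ref{5ec} for 5-edge-connectivity, Observation \ref{bip-cycle-contract} for the Eulerian property, and conclude that every cut is even and hence of size at least 6. You merely spell out the standard parity argument (and the multigraph caveat) that the paper leaves implicit.
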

\begin{proof}
	Graph $G/C$ is 5-edge-connected by Observation \ref{5ec}.  By
        Lemma \ref{bip-cycle-contract}, $G/C$ is Eulerian.  Therefore,
        $G/C$ does not contain any cuts crossed by an odd number of
        edges.  In particular, $G/C$ contains no 5-edge cuts.
\end{proof}

\subsection{Tree Augmentation}\label{subsubsec:TAP}

We next present one of our main tools for proving Theorem
\ref{15/172ec}.  We first state the 1-cover problem on a laminar
family of sets. A family of sets $\mathcal{S}$ is called {\em laminar}
if for any $S$ and $S'$ in $\mathcal{S}$, the set $S\cap S'$ is equal
to either $S$, $S'$ or $\emptyset$.  For a graph $G=(V,E)$, we are
given a laminar family of sets, $\mathcal{S}$, where each set in
$\mathcal{S}$ consists of a subset of vertices.  Additionally, we are
given a set of edges $E$ with nonnegative edge weights $w(e)$ for
$e\in E$.  The 1-cover problem on family $\mathcal{S}$ asks for a
\textit{1-cover of $\mathcal{S}$}: a minimum weight subset of edges $C
\subseteq E$ such that $|C\cap \delta(S)|\geq 1$ for all $S \in
\mathcal{S}$. Indeed, we are interested in a special case of the
1-cover problem on a laminar family of sets. Let $F$ be a spanning,
connected multigraph of a given graph $G$, and let $\mathcal{S}$ be
the family of 1-edge cuts of $F$: $\mathcal{S} = \{S:~|\delta(S)\cap
F|=1\}$. In this case, we refer to a 1-cover of $\mathcal{S}$ as a
\textit{1-cover of $F$}. Define a {\em block}
to be a maximal 2-edge-connected induced subgraph of $F$. Consider the tree obtained
from contracting the blocks of $F$. Rooting this tree
at an arbitrary vertex, we can find a laminar family $\mathcal{S}_F$
of sets in $\mathcal{S}$ such that the 1-covers of $\mathcal{S}$ are
exactly the 1-covers of $\mathcal{S}_F$ . Hence, the natural linear
programming relaxation for the 1-cover problem for a graph $G = (V,E)$
and multigraph $F$ of $G$ is:
\begin{align}
\min & \sum_{e\in E} w(e) y_e& \nonumber\\
s.t. & \sum_{e\in E: e\in \delta(S)} y_e \geq 1 & \mbox{for all $S\in \mathcal{S}_F$}\tag{{\sc Cover}$(G,F)$}\label{C`over}\\
& y_e \geq 0 &\mbox{for all $e\in E$}.\nonumber
\end{align}

Let us denote the feasible region of the above linear program by {\sc
  Cover}$(G,F)$.  By contracting the blocks of 
$F$, we get a tree on these contracted components and the 1-cover
problem on $\mathcal{S}_F$ is 
equivalent to the tree augmentation problem~\cite{FJ81}.  Its
integrality gap is known to be between $\frac{3}{2}$ and $2$
\cite{FJ81,Jain2001,32gap}. However, in the special case of
half-integral points, the integrality gap is much smaller. Cheriyan, Jordan and Ravi \cite{cheriyan19992} proved that if $y\in$ {\sc Cover}$(G,F)$ and $y_e\in \{0,\frac{1}{2},1\}$
for all $e\in E$, then there is an algorithm, whose running time
is polynomial in the size of $G$, that writes the vector
$\frac{4}{3}\cdot y$ as a convex combination of 1-covers
$C_1,\ldots,C_h$ of $F$. Iglesias and Ravi generalized this result \cite{iglesias2017coloring}.

\begin{theorem}[Iglesias and Ravi \cite{iglesias2017coloring}]\label{cjr}
	If $y\in$ {\sc Cover}$(G,F)$ and $y_e\geq \alpha$ or $y_e=0$
        for all $e\in E$, then there is an algorithm, whose running time
        is polynomial in the size of $G$, that writes the vector
        $\frac{2}{1+\alpha}\cdot y$ as a convex combination of 1-covers
        $C_1,\ldots,C_h$ of $F$.
\end{theorem}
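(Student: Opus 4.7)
The plan is to reduce the problem to the tree augmentation problem (TAP) and then generalize the argument of Cheriyan, Jord\'an and Ravi from the half-integral case. First, I would contract the 2-edge-connected components of $F$ in $G$: this produces a tree $T$ whose edges correspond bijectively to the 1-edge cuts of $F$, and every $e\in E$ becomes a ``link'' whose covered path in $T$ is the tree-path between its contracted endpoints. A 1-cover of $\mathcal{S}_F$ is then a set of links whose paths jointly hit every tree edge, and $y$ becomes a fractional TAP cover with $y_\ell\in\{0\}\cup[\alpha,1]$ and $\sum_{\ell\ \mathrm{covers}\ t} y_\ell\ge 1$ for every tree edge $t$.

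Next, I would try to emulate the CJR proof at $\alpha=\tfrac12$. Their approach is iterative: in each round, an integer 1-cover $C$ is extracted together with a coefficient $\lambda>0$, the fractional point is updated by subtracting $\lambda\chi_C$ and rescaling, and the invariant that all remaining nonzero coordinates are $\ge\alpha$ is preserved. In the half-integral case the extraction is derived from an Eulerian 2-coloring of an auxiliary graph whose vertices are the fractional links and whose edges encode overlap of the associated tree-paths. The target factor $\tfrac{2}{1+\alpha}$ is nicely calibrated at the two endpoints, giving $1$ at $\alpha=1$ (when $y$ is already a $0/1$ cover) and $\tfrac{4}{3}$ at $\alpha=\tfrac12$ (recovering the CJR theorem); this suggests that the general argument should interpolate by weighting each extraction step in proportion to $\alpha$ rather than $\tfrac12$.

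The main obstacle is generalizing the extraction step itself. In CJR every fractional link has the same value $\tfrac12$, giving considerable symmetry, whereas for arbitrary $y_\ell\in[\alpha,1]$ the links covering a given tree edge need not partition so cleanly. I would try to handle this by first performing an iterative rounding step to reduce to a more structured sub-instance; for example, making one pass that handles links with $y_\ell$ close to $1$ (where the integer extraction is cheap), and leaving a residual instance that more closely resembles the half-integral case. Arranging so that the total coefficient sums to $1$ while the factor $\tfrac{2}{1+\alpha}$ is achieved tightly requires a careful accounting, either via LP duality on the TAP polytope augmented with the lower-bound constraints $y_\ell\ge\alpha$ on active links, or via a potential-function argument that tracks the decrement of the fractional mass at each iteration. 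Together with ensuring that only polynomially many covers appear in the decomposition, this is where I expect the main difficulty to lie.
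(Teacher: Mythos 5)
First, note that the paper does not prove this statement at all: it is quoted as Theorem~\ref{cjr} directly from Iglesias and Ravi~\cite{iglesias2017coloring} and used as a black box, so there is no in-paper proof to compare your attempt against. Your setup is the right one and matches the paper's own framing of the problem: contracting the 2-edge-connected components of $F$ turns the laminar family $\mathcal{S}_F$ into the edge set of a tree, turns each edge of $E$ into a link covering a tree-path, and turns {\sc Cover}$(G,F)$ into the tree augmentation LP, so the theorem is indeed a statement about fractional TAP points whose nonzero entries are bounded below by $\alpha$.

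However, your write-up stops exactly where the content of the theorem begins. The entire substance of the Iglesias--Ravi result is the extraction step for general $\alpha$ --- producing integral $1$-covers whose convex combination equals $\frac{2}{1+\alpha}\, y$ --- and you explicitly defer this (``I would try to handle this by first performing an iterative rounding step\ldots this is where I expect the main difficulty to lie''). The endpoint checks at $\alpha=1$ and $\alpha=\tfrac12$ confirm that the factor is plausible, but they are not an argument; nothing in the proposal shows that a ``pass over links with $y_\ell$ close to $1$'' followed by a residual instance resembling the half-integral case can be charged coordinatewise against the budget $\frac{2}{1+\alpha}$, that the extraction preserves the invariant that remaining nonzero coordinates stay at least $\alpha$ after rescaling, or that only polynomially many covers are produced. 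As it stands this is a correct reduction plus a research plan, not a proof; to close the gap you would have to carry out the Iglesias--Ravi argument in detail or, as the paper does, simply cite their theorem.
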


\section{Uniform Covers}\label{sec:uniform-cover}

First, we recall Proposition \ref{k-ec,k-reg}, stated in the introduction. This observation is due to Carr and Vempala \cite{CV00}, but we prove the proposition for completeness.
\introlemma*

\begin{proof}
	(a)$\implies$(b): If $G$ is $k$-edge-connected and
  $k$-regular, then $y$, defined to be the everywhere $\frac{2}{k}$ vector for $G$, is in
  \ref{subtour-equal}. Therefore $\frac{4}{3}y$, which is the
  everywhere $\frac{8}{3k}$ vector for $G$, is dominated by
  a convex combination of tours of $G_y$. Notice that $G_y=G$.
	
	(b)$\implies$(a): Let $x\in $ \ref{subtour-equal} for
  graph $G=(V,E)$.  Let $k$ be the smallest integer such that $x_e$ is
  a multiple of $\frac{1}{k}$ for every edge $e \in E_x$.  Let
  $G'=(V,E')$ be such that $E'$ has $kx_e$ copies of each $e\in
  E_x$. It is easy to observe that $G'$ is $2k$-edge-connected and
  $2k$-regular.  Let $y$ be the everywhere $\frac{8}{6k}$ vector for
  $G'$.  So by (b), $y$ dominates a convex combination of tours in
  $G'$: $y\geq \sum_{i=1}^{\ell}\lambda_i \chi^{F_i}$, where
  $\sum_{i=1}^{\ell}\lambda_i=1$, $\lambda_i>0$, and $F_i$ is a tour
  of $G'$ for $i=\{1,\ldots,\ell\}$.  Since $G'= k \cdot G_x$, each $F_i$ also
  corresponds to a tour in $G_x$, and $\sum_{i=1}^{\ell}\lambda_i
  \chi^{F_i}(e)=\frac{8}{6k}kx_e = \frac{4}{3}x_e.$
\end{proof}

\subsection{Algorithms for Uniform Covers}

Recall that the polyhedral proof of Christofides' algorithm can be
used to prove statement (b) in Proposition \ref{k-ec,k-reg} when the
factor $\frac{8}{3k}$ is replaced by $\frac{3}{k}$.  The problem of
reducing this factor to anything less than $\frac{3}{k}$ has been open
for decades.  In the case where $k=3$, we can improve this result.

\introEighteenNineteen*

\begin{proof}
By Theorem \ref{bit13}, graph $G$ has a cycle cover $C$ such that $C$
covers every 3-edge and 4-edge cut of $G$. Let $G/C$ be the graph
obtained by contracting each cycle of $C$ in $G$. By Observation
\ref{5ec}, $G/C$ is 5-edge-connected. Define vector $y\in
\mathbb{R}^{E(G / C)}$ as follows: $y_e = \frac{2}{5}$ for $e\in
E(G/C)$. Observe that $y\in$ \subt$(G/C)$. Thus, $y$ dominates a
convex combination of spanning trees of $G/C$, which can be computed
in polynomial time (see Fact \ref{fact:spanning-tree}). More
precisely, we can write $y \geq \sum_{i=1}^{\ell} \lambda_i
\chi^{T_i}$, where $T_i$ is a spanning tree of $G/C$,
$\sum_{i=1}^{\ell} \lambda_i =1$, and $\lambda_i>0$ for $i \in
\{1,\ldots,\ell\}$. Consequently, we have $2 y \geq\sum_{i=1}^{\ell}
\lambda_i \chi^{2 T_i}$ (i.e., the vector $2 y$ dominates a convex
combination of doubled spanning trees of $G/C$).

 Let $M$ be the set of edges in $E\setminus C$ that are not in $G/C$;
 these are the edges that connect two vertices of the same cycle in
 $C$. Define vector $v\in \mathbb{R}^{V\choose 2}$ as follows: $v_e =
 1$ for $e\in C$, $v_e = \frac{4}{5}$ for $e\in E\setminus (M\cup C)$,
 and $v_e=0$ otherwise. Note that $v \geq \sum_{i=1}^{\ell} \lambda_i
 \chi^{C\cup 2 T_i}$. For $i\in\{1,\ldots,\ell\}$, the graph induced
 by $C\cup 2 T_i$ is a tour.
	
Now we define $u\in \mathbb{R}^{V\choose 2}$ as follows: $u_e=
\frac{1}{2}$ for $e\in C$ and $u_e=1$ for $e\in E\setminus C$, and
$u_e=0$ otherwise. We have $u\in$ \ref{subtour-equal} : for each cut
$D$ of $G$, if $|D|\geq 4$, clearly $\sum_{e \in D} u_e \geq 2$. If
$|D|=3$, then $|C\cap D|= 2$, so $\sum_{e \in D}u_e =
2\cdot \frac{1}{2} + 1 = 2$. We can write $\frac{3}{2} u$ as a
convex combination of tours~\cite{wolsey1980heuristic}.
	
Now vector $\frac{15}{19}v + \frac{4}{19}(\frac{3}{2}u)$ can be
written as convex combination of tours of $G$. For edge $e\in C$ we
have $\frac{15}{19}v_e+\frac{4}{19}(\frac{3}{2}u_e) =
\frac{15}{19}+\frac{4}{19}(\frac{3}{2}\cdot
\frac{1}{2})=\frac{18}{19}$.  For $e\in E(G/C)$ we have
$\frac{15}{19}v_e+\frac{4}{19}(\frac{3}{2}u_e) = \frac{15}{19}\cdot
\frac{4}{5}+\frac{4}{19}(\frac{3}{2})=\frac{18}{19}$.  For $e\in M$,
we have $\frac{15}{19}v_e+\frac{4}{19}(\frac{3}{2}u_e) = 0
+\frac{4}{19}(\frac{3}{2})=\frac{6}{19}$. Therefore $\frac{15}{19}v +
\frac{4}{19}(\frac{3}{2}u)$ is dominated by the everywhere
$\frac{18}{19}$ vector for $G$.  \end{proof}

If $G$ is also bipartite, then by Observation \ref{bipartitebit}, the
graph $G/C$ in the proof of Theorem \ref{18/19} is 6-edge connected.
We can therefore improve Theorem \ref{18/19} in this case.

\begin{theorem}\label{12/13}
	Let $G=(V,E)$ be a 3-edge-connected, cubic, bipartite
        graph. The everywhere $\frac{12}{13}$ vector for $G$ dominates
        a convex combination of tours of $G$ and this convex
        combination can be found in polynomial time.
\end{theorem}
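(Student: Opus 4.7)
The plan is to mirror the proof of Theorem \ref{18/19} verbatim, but exploit the strengthening available in the bipartite case. By Theorem \ref{bit13} we obtain in polynomial time a cycle cover $C$ of $G$ that covers every 3-edge and 4-edge cut, and by Observation \ref{bipartitebit} the contracted graph $G/C$ is $6$-edge-connected (rather than merely $5$-edge-connected, as in the nonbipartite case). Therefore the everywhere-$\frac{1}{3}$ vector on $E(G/C)$ lies in {\sc Subtour}$(G/C)$, so by Fact \ref{fact:spanning-tree} it dominates a convex combination $\sum_i \lambda_i \chi^{T_i}$ of spanning trees of $G/C$ that can be computed in polynomial time. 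Doubling yields that the everywhere-$\frac{2}{3}$ vector on $E(G/C)$ dominates $\sum_i \lambda_i \chi^{2T_i}$, and so $\sum_i \lambda_i \chi^{C \cup 2T_i}$ is a convex combination of tours of $G$ (each $C \cup 2T_i$ being connected, spanning, and Eulerian).

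Let $v \in \mathbb{R}^{V\choose 2}$ be the resulting vector, so $v_e = 1$ for $e \in C$, $v_e = \frac{2}{3}$ for $e \in E(G/C)$, and $v_e = 0$ on the chord set $M = E \setminus (C \cup E(G/C))$. Exactly as in the proof of Theorem \ref{18/19}, define $u$ by $u_e = \frac{1}{2}$ for $e \in C$ and $u_e = 1$ otherwise; then $u \in$ \ref{subtour-equal} and Wolsey's polyhedral analysis of Christofides' algorithm gives that $\frac{3}{2}u$ is a convex combination of tours of $G$.

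It remains to choose $\alpha, \beta \geq 0$ with $\alpha + \beta = 1$ so that $\alpha v + \beta \cdot \frac{3}{2} u$ is coordinatewise at most $\frac{12}{13}$ on $E$. The three edge classes yield
\[
\alpha + \tfrac{3}{4}\beta \quad (e \in C), \qquad \tfrac{2}{3}\alpha + \tfrac{3}{2}\beta \quad (e \in E(G/C)), \qquad \tfrac{3}{2}\beta \quad (e \in M).
\]
Setting the first two equal forces $\alpha = \frac{9}{4}\beta$, and then $\alpha + \beta = 1$ gives $(\alpha,\beta) = (\frac{9}{13}, \frac{4}{13})$. This produces value $\frac{12}{13}$ on $C$ and on $E(G/C)$, and $\frac{6}{13} \leq \frac{12}{13}$ on $M$, so $\frac{9}{13} v + \frac{4}{13} \cdot \frac{3}{2} u$ is the desired convex combination of tours dominated by the everywhere-$\frac{12}{13}$ vector for $G$.

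There is no real obstacle in this proof: the argument is a direct analogue of Theorem \ref{18/19} in which the improved cut bound of Observation \ref{bipartitebit} replaces Observation \ref{5ec}, allowing us to use $\frac{1}{3}$ rather than $\frac{2}{5}$ on $E(G/C)$. The only step that must be redone is the scalar arithmetic balancing the two convex combinations; the combinatorial building blocks (the cycle cover algorithm, spanning-tree decomposition, and the $\frac{3}{2}$-Christofides decomposition of $u$) are identical and all run in polynomial time.
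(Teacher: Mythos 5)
Your proof is correct and is essentially identical to the paper's: both use the Boyd--Iwata--Takazawa cycle cover, the $6$-edge-connectivity of $G/C$ from Observation \ref{bipartitebit} to place the everywhere-$\frac{1}{3}$ vector in {\sc Subtour}$(G/C)$, the same half-integral point $u$ with Wolsey's $\frac{3}{2}$ decomposition, and the same convex combination $\frac{9}{13}v + \frac{4}{13}(\frac{3}{2}u)$. The arithmetic checks out, so nothing further is needed.
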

\begin{proof}
Let $C$ be the cycle cover in $G$ that covers 3-edge and 4-edge cuts
of $G$. By Observation \ref{bipartitebit}, $G/C$ is
6-edge-connected. Let $M$ be the set of edges that have both endpoints
in the same cycle in the cycle cover $C$.  Similar to the proof of
Theorem \ref{18/19}, define vector $v \in \mathbb{R}^{V\choose 2}$ as
follows: $v_e=1$ for $e\in C$, $v_e=\frac{2}{3}$ for $e\in E(G/C)$,
and $v_e=0$ otherwise. The vector $v$ can be written as a convex
combination of tours of $G$.
	
Now define $u \in \mathbb{R}^{V \choose 2}$ as follows: $u_e=\frac{1}{2}$ for
$e\in C$, $u_e=1$ for $e\in E\setminus C$, and $u_e=0$
otherwise. Since $u\in\;$\ref{subtour-equal}, this implies that
$\frac{3}{2}u$ can be written as a a convex combination of tours of
$G$.
	
Finally, vector $\frac{9}{13}v + \frac{4}{13}(\frac{3}{2}u)$ can be
written as a convex combination of tours of $G$. For $e\in C$,
$\frac{9}{13}v_e + \frac{4}{13}u_e=
\frac{9}{13}+\frac{4}{13}(\frac{3}{4})= \frac{12}{13}$. For $e\in
E(G/C)$ we have $\frac{9}{13}v_e + \frac{4}{13}u_e= \frac{9}{13}\cdot
\frac{2}{3}+\frac{4}{13}(\frac{3}{2})= \frac{12}{13}$. Finally, if
$e\in M$, $\frac{9}{13}v_e + \frac{4}{13}u_e=
\frac{4}{13}(\frac{3}{2})= \frac{6}{13}$. This proves the result.
\end{proof}

We can further relax Conjecture \ref{8/9conj-int} and ask whether or
not the everywhere $\frac{8}{9}$ vector for a 3-edge-connected, cubic
graph $G$ can be written as a convex combination of 2-edge-connected
multigraphs of $G$.  The answer to this question is yes, and is a
direct corollary of a decomposition theorem due to Carr and Ravi
\cite{Carr98}. In fact, in Lemma \ref{2ec8/9}, we show that for a
3-edge-connected, cubic graph $G$, the problem of determining if $G$
has an $\alpha$-uniform cover can be reduced to bounding the
integrality gap of half-integer solutions for \textsc{Subtour}$^=(G)$.

\begin{lemma}\label{2ec8/9}
	Let $G=(V,E)$ a 3-edge-connected, cubic graph. Suppose for any
        $x\in$ \textsc{Subtour}$^=(G)$ such that $x\in
        \{0,\frac{1}{2},1\}^{{V}\choose 2}$, the vector $\alpha x$
        dominates a convex combination of tours (2-edge-connected
        multigraphs) of $G_x$.  Then the everywhere
        $\frac{2}{3}\alpha$ vector for $G$ dominates a convex
        combination of tours (2-edge-connected multigraphs) of $G$.
\end{lemma}	
\begin{proof}
	Let $y\in \mathbb{R}^{E}$ be such that $y_e=\frac{2}{3}$ for $e\in
	E$. By Corollary 30.8a in \cite{schrijver}, $y$ is in the convex hull
	of cycle covers of $G$.  Thus, there are cycle covers
	$C_1,\ldots,C_\ell$ and positive multipliers
	$\lambda_1,\ldots,\lambda_\ell$ such that
	$\sum_{i=1}^{\ell}\lambda_i=1$ and $y= \sum_{i=1}^{\ell}\lambda_i
	C_i$.  For $i \in \{1,\ldots,\ell\}$, define vector $y^i\in
	\mathbb{R}^{V\choose 2}$ as follows: $y^i_e=\frac{1}{2}$ for $e\in
	C_i$, $y^i_e=1$ for $e\in E\setminus C_i$, and $y^i_e=0$
	otherwise. Observe that $y^i\in$ \textsc{Subtour}$^=(G)$ for $i \in
	\{1,\ldots,\ell\}$. Furthermore, $v=\sum_{i=1}^{\ell}\lambda_i y^i$ is
	the everywhere $\frac{2}{3}$ vector for $G$; for $e\in E$ we have
	$\sum_{i:e \in C_i} \lambda_i=\frac{2}{3}$ and $\sum_{i:e \notin
		C_i}\lambda_i=\frac{1}{3}$, and so $\sum_{i=1}^\ell \lambda_i y_e^i
	= \sum_{i: e\in C_i}\lambda_i \cdot \frac{1}{2}+\sum_{i: e\notin
		C_i}\lambda_i = \frac{2}{3}$.
	
	Since the vector $y^i \in \{0,\frac{1}{2},1\}^{{V}\choose 2}$, 
	the vector $\alpha y^i$ dominates a convex combination of tours (2-edge-connected multigraphs) of $G_{y^i}=G$ for $i \in
	\{1,\ldots,\ell\}$. Therefore, the everywhere $\frac{2}{3}\alpha$ vector
	for $G$ 	 dominates a convex combination of tours (2-edge-connected multigraphs) of $G$.
\end{proof}

\begin{theorem}[Carr and Ravi \cite{Carr98}]\label{thm:CR98}
	If $x\in$ \ref{subtour-equal} and $x\in\{0,\frac{1}{2},1\}^{V \choose 2}$, then the
	vector $\frac{4}{3} x$ dominates a convex combination of
	2-edge-connected multigraphs of $G_x$.
\end{theorem}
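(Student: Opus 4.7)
The plan is to reduce this theorem to the uniform-cover result of Carr and Ravi for $4$-edge-connected, $4$-regular (multi)graphs, mentioned in the introduction (the $k=4$ case of the general phenomenon discussed around Proposition~\ref{k-ec,k-reg}), via a straightforward doubling construction. Partition the support into $E_1 = \{e \in E_x : x_e = 1\}$ and $E_{1/2} = \{e \in E_x : x_e = \frac{1}{2}\}$, and form the multigraph $G' = (V, E')$ by taking two parallel copies of each edge of $E_1$ and one copy of each edge of $E_{1/2}$; equivalently, the multiplicity vector of $E'$ equals $2x$. The degree equalities $\sum_{e \ni v} x_e = 2$ of \ref{subtour-equal} force each vertex of $G'$ to have degree $4$, and the cut inequalities $x(\delta(S)) \geq 2$ give $|\delta_{G'}(S)| = 2 x(\delta(S)) \geq 4$ for every proper nonempty $S \subset V$. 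Hence $G'$ is a $4$-regular, $4$-edge-connected multigraph.

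Next, invoke the Carr--Ravi result on such multigraphs to write the everywhere-$\frac{2}{3}$ vector on $E'$ as a convex combination $\sum_i \lambda_i \chi^{F_i}$, where each $F_i$ is a $2$-edge-connected spanning multigraph of $G'$. Project each $F_i$ back to a multiset $\tilde{F}_i$ of edges of $G_x$ by identifying the two parallel copies of each $e \in E_1$; thus each $e \in E_1$ appears in $\tilde{F}_i$ with multiplicity in $\{0,1,2\}$ and each $e \in E_{1/2}$ with multiplicity in $\{0,1\}$. Because every cut $\delta_{G_x}(S)$ is in bijection with the cut $\delta_{G'}(S)$, and the edge counts agree under the identification, $|\tilde{F}_i \cap \delta_{G_x}(S)| = |F_i \cap \delta_{G'}(S)| \geq 2$, so $\tilde{F}_i$ is a $2$-edge-connected spanning multigraph of $G_x$.

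To finish, verify the domination bound coordinate-wise. For $e \in E_{1/2}$, the edge has a single copy in $G'$, so $\sum_i \lambda_i \chi^{\tilde{F}_i}(e) \leq \frac{2}{3} = \frac{4}{3} \cdot \frac{1}{2} = \frac{4}{3} x_e$. For $e \in E_1$, summing the $\frac{2}{3}$-bounds over its two parallel copies in $G'$ gives $\sum_i \lambda_i \chi^{\tilde{F}_i}(e) \leq \frac{4}{3} = \frac{4}{3} x_e$.

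The main subtlety is in ensuring that the Carr--Ravi uniform-cover theorem applies to multigraphs and not just simple $4$-regular $4$-edge-connected graphs; this is not a real issue because their splitting-off argument is purely combinatorial and treats parallel edges uniformly. In fact, the doubling reduction here goes both ways, making the half-integral subtour decomposition theorem and the $4$-regular, $4$-edge-connected uniform cover statement essentially equivalent. As noted in the introduction, the Carr--Ravi construction is not known to run in polynomial time, and correspondingly the present theorem is a pure existence result.
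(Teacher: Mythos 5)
The paper does not prove Theorem~\ref{thm:CR98} at all: it is imported as a black box from Carr and Ravi \cite{Carr98}, so there is no internal proof to compare against. Judged on its own terms, your doubling construction is mechanically correct, and it is exactly the device used in the paper's proof of Proposition~\ref{k-ec,k-reg} (the direction (b)$\implies$(a), specialized to half-integral points, where the scaled support $2x$ is $4$-regular and $4$-edge-connected). The degree and cut computations, the projection of each $F_i$ back to $G_x$, and the coordinatewise accounting $\tfrac{2}{3}\cdot 2 = \tfrac{4}{3}\cdot 1$ and $\tfrac{2}{3}\cdot 1 = \tfrac{4}{3}\cdot\tfrac{1}{2}$ are all fine.

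The genuine gap is circularity. Your argument reduces the half-integral decomposition theorem to the statement that every $4$-regular, $4$-edge-connected multigraph has a $\tfrac{2}{3}$-uniform cover for 2EC, and you yourself observe that the two statements are ``essentially equivalent'' under the doubling reduction. But in \cite{Carr98} the logical order is the reverse of the one you need: the uniform-cover statement for $4$-regular, $4$-edge-connected graphs is obtained as a corollary of the half-integral theorem, because the everywhere $\tfrac{1}{2}$ vector of such a graph is itself a half-integral point of \ref{subtour-equal}. There is no independent proof of the $k=4$ uniform-cover statement available to bootstrap from, either in this paper or in the cited literature. Consequently your proposal translates the theorem into an equivalent formulation rather than proving it; the actual mathematical content of Carr and Ravi's argument --- an induction on proper $3$-edge cuts of the support, with an explicit construction of convex combinations in the base case --- is entirely absent. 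To make this a proof you would need to supply that core argument (or some substitute for it) for at least one of the two equivalent formulations.
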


\begin{corollary}\label{cor8.1}
	Let $G=(V,E)$ be a 3-edge-connected, cubic graph. The everywhere
	$\frac{8}{9}$ vector for $G$ dominates a convex combination of
	2-edge-connected multigraphs of $G$.
\end{corollary}
\begin{proof}
	Follows directly from Lemma \ref{2ec8/9} and Theorem \ref{thm:CR98}.
\end{proof}

However, this proof does not yield a polynomial-time decomposition
since the number of multigraphs in the convex combination output via
Theorem \ref{thm:CR98} is not guaranteed to be polynomial in the size
of $G$.  In fact, Legault proved a result that is stronger than Lemma
\ref{2ec8/9}: the everywhere $\frac{7}{9}$ vector for $G$ can be
written as a convex combination of 2-edge-connected
subgraphs~\cite{philip}.  Notice that the result of Legault is
stronger not only because the $\frac{7}{9}$ is smaller than
$\frac{8}{9}$, but also in the sense that it restricts the multigraphs
to subgraphs, i.e. no edge in $G$ is doubled.  However, the proof in
\cite{philip} also does not guarantee that the number of subgraphs in
the decomposition is polynomial in the size of $G$.

We now present a stronger version of Corollary \ref{cor8.1}. For the
rest of this section we will do all computations on the edges of the
graph $G=(V,E)$ so all the vectors are in dimension of $E$.  Thus,
henceforth we slightly abuse the everywhere vector notation to make
the presentation simpler. Indeed, we can extend all the vectors to
dimension $V\choose 2$ by adding zeros.

\begin{theorem}\label{2ec8/9cons}
Let $G=(V,E)$ be a 3-edge-connected, cubic graph. The everywhere
$\frac{8}{9}$ vector for $G$ dominates a convex combination of
2-edge-connected subgraphs of $G$ and this convex combination can be
found in polynomial time.
\end{theorem}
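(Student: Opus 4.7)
The plan is to avoid the non-polynomial appeal to Theorem~\ref{thm:CR98} that was used in the proof of Corollary~\ref{2ec8/9} by combining the spanning-tree decomposition of Fact~\ref{fact:spanning-tree} with the half-integral $1$-cover theorem of Cheriyan, Jord\'an and Ravi (Theorem~\ref{cjr}). Interestingly, the BIT cycle-cover machinery of Section~\ref{subsubsec:BIT-tool} is not needed for this bound; only the 3-edge-connectivity of $G$ is used.

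First I would take $x = \frac{2}{3}\mathbf{1}_E$. Since $G$ is 3-edge-connected, every non-trivial cut has at least $3$ edges, so $x \in\;$\ref{subtour}. By Fact~\ref{fact:spanning-tree} one can, in polynomial time, find spanning trees $T_1, \dots, T_\ell$ and positive multipliers $\lambda_1, \dots, \lambda_\ell$ summing to $1$ with $\sum_i \lambda_i \chi^{T_i} \le x$; in particular $p_e := \sum_{i:\,e \in T_i} \lambda_i \le \frac{2}{3}$ for every edge $e$.

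Next, for each $T_i$ I would set $y^i_e = \frac{1}{2}$ for $e \in E \setminus T_i$ and $y^i_e = 0$ for $e \in T_i$. The $1$-edge cuts of $T_i$ are exactly the bipartitions induced by removing a tree edge, and each such cut contains the tree edge together with at least two nontree edges (by 3-edge-connectivity of $G$), so $y^i \in\;$\ref{Cover}. Applying Theorem~\ref{cjr} with $\alpha = \frac{1}{2}$ then produces, in polynomial time, a decomposition $\frac{4}{3} y^i = \sum_j \mu_{ij}\,\chi^{A_{ij}}$ in which each $A_{ij}$ is a $1$-cover of $T_i$ lying inside $E \setminus T_i$ (since $y^i$ is $0$ on tree edges). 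Consequently, $T_i \cup A_{ij}$ is a 2-edge-connected spanning subgraph of $G$, because $A_{ij}$ covers every bridge of $T_i$.

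Finally, I would bound the combination $\sum_{i,j} \lambda_i \mu_{ij}\,\chi^{T_i \cup A_{ij}}$. Trees $T_i$ containing a given edge $e$ contribute $\lambda_i$ each, while trees not containing $e$ contribute $\lambda_i \cdot \frac{4}{3} y^i_e = \lambda_i \cdot \frac{2}{3}$. Summing gives $p_e + (1 - p_e)\cdot \frac{2}{3} = \frac{2}{3} + \frac{p_e}{3} \le \frac{2}{3} + \frac{2}{9} = \frac{8}{9}$ on every edge. The main conceptual step is recognizing that the half-integral cover LP on a spanning tree is feasible with $y^i_e \in \{0, \frac{1}{2}\}$ whenever $G$ is 3-edge-connected; once this feasibility is observed, Fact~\ref{fact:spanning-tree} and Theorem~\ref{cjr} slot together to give both the $\frac{8}{9}$ bound and a polynomial-time algorithm.
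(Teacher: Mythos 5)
Your proposal is correct and follows essentially the same route as the paper's proof: decompose the everywhere $\frac{2}{3}$ vector into spanning trees via Fact~\ref{fact:spanning-tree}, observe that the half-integral vector supported on nontree edges lies in {\sc Cover}$(G,T_i)$ by 3-edge-connectivity, and apply Theorem~\ref{cjr} to augment each tree, yielding $\frac{2}{3}+\frac{1}{3}\cdot\frac{2}{3}=\frac{8}{9}$ per edge. Your handling of the domination inequality via $p_e\le\frac{2}{3}$ is in fact slightly more careful than the paper's computation, and your observation that the BIT cycle cover is not needed here is accurate.
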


\begin{proof}
Let $y\in \mathbb{R}^E$ be such that $y_e=\frac{2}{3}$ for $e\in E$.  Since $y\in $
\ref{subtour}, we can find in polynomial time
spanning trees $T_1,\ldots,T_\ell$ of $G$ and positive multipliers
$\lambda_1,\ldots,\lambda_\ell$ such that $\sum_{i=1}^{\ell}\lambda_i
=1$ and $y\geq\sum_{i=1}^{\ell} \lambda_i \chi^{T_i}$.  For $i \in
\{1,\ldots,\ell\}$ define vector $y^i\in \mathbb{R}^E$ as follows:
$y^i_e= 0$ for $e\in T_i$ and $y^i_e=\frac{1}{2}$ for $e\notin T_i$.
Since $G$ is 3-edge-connected, we have $y^i\in$ {\sc Cover}$(G,T_i)$
for $i \in \{1,\ldots,\ell\}$.
By Theorem \ref{cjr}, there is a
polynomial-time algorithm that finds 1-covers
$C^i_1,\ldots,C^i_{\ell_i}$ of $T_i$ for $i \in \{1,\ldots,\ell\}$ and
positive multipliers $\lambda^i_1,\ldots,\lambda^i_{\ell_i}$ such that
$\sum_{j=1}^{\ell_i}\lambda^i_j = 1$ and
$\frac{4}{3} y^i = \sum_{j=1}^{\ell_i} \lambda^i_j \chi^{C^i_j}$
for $i \in \{1,\ldots,\ell\}$. Note that $T_i+C^i_j$ is a 2-edge-connected
subgraph of $G$ for $i\in\{1,\ldots,\ell\}$ and
$j\in\{1,\ldots,\ell_i\}$. Hence,
$$u=\sum_{i\in \{1,\ldots,\ell\}} \sum_{j\in \{1,\ldots,\ell_i\}}
\lambda_i \lambda^i_j \chi^{T_i\cup C^i_j}, \quad \text{where
}\sum_{i\in \{1,\ldots,\ell\}} \sum_{j\in \{1,\ldots,\ell_i\}}
\lambda_i \lambda^i_j =1$$ is a convex combination of
2-edge-connected multigraphs of $G$. By construction, an edge cannot
belong both to a tree $T_i$ and to a 1-cover $C^i_j$.  Thus, there are
no doubled edges in any solution.  Vector $u$ is the everywhere
$\frac{8}{9}$ vector for $G$: for $e\in E$, we have $$u_e = \sum_{i:
  e\in T_i} \sum_{j=1}^{\ell_i}\lambda_i\lambda^i_j + \sum_{i: e\notin
  T_i} \sum_{j: e\in C^i_j}\lambda_i\lambda^i_j \leq \frac{2}{3} +
\frac{1}{3}\cdot \frac{2}{3} = \frac{8}{9}.$$
\end{proof}

Observe that in the proof of Lemma \ref{2ec8/9cons}, we
    never double any edge in any of the 2-edge-connected
    subgraphs.  (Hence, the statement of lemma uses {\em subgraph}
    rather than {\em multigraph}.)
If we relax this and allow doubled edges, we can indeed improve the factor by
    combining the ideas from Theorem \ref{18/19} and Theorem
    \ref{2ec8/9cons} to improve the bound in Theorem \ref{2ec8/9cons}
    from $\frac{8}{9}$ to $\frac{15}{17}$.

\introFifteenSeventeen*
\begin{proof}
Let $C$ be a cycle cover of $G$ that covers every 3-edge and 4-edge
cut of $G$. By Observation \ref{5ec}, the graph $G/C$ is
5-edge-connected. Let $M=E\setminus (C\cup E(G/C))$. Define $r \in
\mathbb{R}^{E(G/C)}$ as follows: $r_e = \frac{2}{5}$ for $e\in E(G/C)$.  We
have $r\in$ \subt$(G/C)$, so $\frac{3}{2} r$ dominates a convex
combination of tours of $G/C$: namely $R_1,\ldots, R_\ell$. Observe
that the graph induced by $C\cup R_i$ is a 2-edge-connected multigraph
of $G$ for $i \in \{1,\ldots,\ell\}$. So, the vector $v\in
\mathbb{R}^E$ where $v_e = 1$ for $e\in C$, $v_e = \frac{3}{5}$ for
$e\in E(G/C)$, and $v_e = 0$ for $e\in M$ dominates a convex
combination of 2-edge-connected multigraphs of $G$.

Now define $y\in \mathbb{R}^{E}$ as follows: $y_e = \frac{1}{2}$ for
$e\in C$ and $y_e=1$ for $e\in E\setminus C$. Since $y\in$ \subt$(G)$,
we can efficiently find spanning trees $T_1,\ldots,T_\ell$ of $G$ and
convex multipliers $\lambda_1,\ldots,\lambda_{\ell}$ such that $y\geq
\sum_{i=1}^{\ell} \lambda_i\chi^{T_i}$. For $i \in \{1,\ldots,\ell\}$ define
$y^i\in \mathbb{R}^E$ as follows:
$y^i_e=\frac{1}{2}$ for $e\notin T_i$ and $y^i_e=0$ otherwise. Notice, that $y^i\in$ {\sc
	Cover}$(G,T_i)$, hence by Theorem \ref{cjr}, there is a
polynomial-time algorithm that finds 1-covers
$C^i_1,\ldots,C^i_{\ell_i}$ of $T_i$ for $i \in \{1,\ldots,\ell\}$ and
positive multipliers $\lambda^i_1,\ldots,\lambda^i_{\ell_i}$ such that
$\sum_{j=1}^{\ell_i}\lambda^i_j = 1$ and
$\frac{4}{3} y^i = \sum_{j=1}^{\ell_i} \lambda^i_j \chi^{C^i_j}$
for $i \in \{1,\ldots,\ell\}$. Note that $T_i+C^i_j$ is a 2-edge-connected
subgraph of $G$ for $i\in\{1,\ldots,\ell\}$ and
$j\in\{1,\ldots,\ell_i\}$. Hence,
$$u=\sum_{i\in \{1,\ldots,\ell\}} \sum_{j\in \{1,\ldots,\ell_i\}}
\lambda_i \lambda^i_j \chi^{T_i\cup C^i_j}, \quad \text{where
}\sum_{i\in \{1,\ldots,\ell\}} \sum_{j\in \{1,\ldots,\ell_i\}}
\lambda_i \lambda^i_j =1$$ is a convex combination of
2-edge-connected multigraphs of $G$. For $e\in C$, we have
$$u_e =
\sum_{i: e\in T_i} \sum_{j=1}^{\ell_i}\lambda_i\lambda^i_j + \sum_{i:
  e\notin T_i} \sum_{j: e\in C^i_j}\lambda_i\lambda^i_j \leq \frac{1}{2}
+ \frac{1}{2}\cdot \frac{2}{3}= \frac{5}{6}.$$
For $e\notin C$, we
have $$u_e = \sum_{i: e\in T_i}
\sum_{j=1}^{\ell_i}\lambda_i\lambda^i_j + \sum_{i: e\notin T_i}
\sum_{j: e\in C^i_j}\lambda_i\lambda^i_j \leq 1+0 =1.$$

Finally we
conclude that the vector $\frac{5}{17} v+\frac{12}{17}u$ can be
efficiently written as convex combination of 2-edge-connected
multigraphs of $G$.
For $e\in
C$ we have $\frac{5}{17} v_e+\frac{12}{17}u_e = \frac{5}{17}+
\frac{12}{17}\cdot \frac{5}{6}= \frac{15}{17}$.  For $e\in G/C$ we have
$\frac{5}{17} v_e+\frac{12}{17}u_e = \frac{5}{17}\cdot \frac{3}{5}+
\frac{12}{17}= \frac{15}{17}$.  For $e\in M$ we have
$\frac{5}{17} v_e+\frac{12}{17}u_e = \frac{5}{17}\cdot 0+ \frac{12}{17}=
\frac{12}{17}$.
Therefore $\frac{5}{17} v+\frac{12}{17}u$ is
dominated by the everywhere $\frac{15}{17}$ vector for $G$.
\end{proof}

We note that in the proof of Theorem \ref{15/172ec}, since the vector
$y$ is half-integral, we can apply Theorem \ref{thm:CR98} to conclude
that $\frac{4}{3}y$ dominates a convex combination of 2-edge-connected
multigraphs of $G$.  This shows that the everywhere $\frac{7}{8}$
vector for $G$ dominates a convex combination of 2-edge-connected
multigraphs.  (Specifically, $\frac{3}{8}(\frac{4}{3}y) +
\frac{5}{8}v$ is dominated by the everywhere $\frac{7}{8}$ vector for
$G$.)  But this approach does not produce a convex combination in
polynomial-time.  We can improve Theorem \ref{15/172ec} slightly when
the graph $G$ is also bipartite.

\begin{theorem}\label{7/8bip}
Let $G=(V,E)$ be a 3-edge-connected, cubic, bipartite graph. The everywhere
$\frac{7}{8}$ vector for $G$ dominates a convex combination of
2-edge-connected multigraphs of $G$ and this convex combination can be
found in polynomial time.	
\end{theorem}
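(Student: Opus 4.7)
The plan is to follow the template of the proof of Theorem \ref{15/172ec} almost verbatim, with the single crucial substitution that in the bipartite setting Observation \ref{bipartitebit} upgrades the edge-connectivity of $G/C$ from $5$ to $6$; propagating this improvement through the first of the two constructions turns the slack into exactly the desired $\frac{7}{8}$ factor.

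First, I would invoke Theorem \ref{bit13} to compute in polynomial time a cycle cover $C$ of $G$ covering every 3-edge and 4-edge cut, let $M = E \setminus (C \cup E(G/C))$ denote the ``chord'' edges of $C$, and apply Observation \ref{bipartitebit} to conclude that $G/C$ is 6-edge-connected (this is where bipartiteness does all its work). Setting $r_e = \frac{1}{3}$ for $e \in E(G/C)$ gives $r \in$ \subt$(G/C)$, so the polyhedral version of Christofides' algorithm (Wolsey~\cite{wolsey1980heuristic}) lets us efficiently write $\frac{3}{2} r$, i.e.\ the everywhere $\frac{1}{2}$ vector on $E(G/C)$, as a convex combination of tours $R_1, \ldots, R_\ell$ of $G/C$. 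Since each $C \cup R_i$ is a 2-edge-connected multigraph of $G$, the vector $v \in \mathbb{R}^E$ defined by $v_e = 1$ for $e \in C$, $v_e = \frac{1}{2}$ for $e \in E(G/C)$, and $v_e = 0$ for $e \in M$ is dominated by a convex combination of 2-edge-connected multigraphs of $G$.

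For the second ingredient I would reuse verbatim the spanning-tree plus half-integral-cover construction from the proof of Theorem \ref{15/172ec}: set $y_e = \frac{1}{2}$ for $e \in C$ and $y_e = 1$ for $e \in E \setminus C$, decompose $y \in$ \subt$(G)$ as a convex combination of spanning trees $T_i$ via Fact \ref{fact:spanning-tree}, and for each $T_i$ apply Theorem \ref{cjr} with $\alpha = \frac{1}{2}$ to the half-integral vector $y^i$ supported on $E \setminus T_i$ to obtain a convex combination of 1-covers of $T_i$. The resulting vector $u \in \mathbb{R}^E$ is dominated by a convex combination of 2-edge-connected subgraphs (so in particular multigraphs) of $G$, with $u_e = \frac{1}{2} + \frac{1}{2}\cdot\frac{2}{3} = \frac{5}{6}$ for $e \in C$ and $u_e = 1$ for $e \notin C$, exactly as in the proof of Theorem \ref{15/172ec}.

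Finally I would take the convex combination $\frac{1}{4} v + \frac{3}{4} u$. The coefficient $\frac{1}{4}$ is chosen so that the $C$-coordinate equals $\frac{1}{4} + \frac{3}{4}\cdot\frac{5}{6} = \frac{7}{8}$ and the $E(G/C)$-coordinate equals $\frac{1}{4}\cdot\frac{1}{2} + \frac{3}{4} = \frac{7}{8}$, while on $M$ the value is only $\frac{3}{4} < \frac{7}{8}$; thus the whole combination is dominated by the everywhere $\frac{7}{8}$ vector for $G$, and the entire construction is polynomial-time by Theorems \ref{bit13} and \ref{cjr}. There is essentially no obstacle beyond verifying these arithmetic identities: the real content of the theorem is the bipartite improvement in edge-connectivity of $G/C$, which is exactly what Observation \ref{bipartitebit} packages for us.
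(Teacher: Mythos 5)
Your proposal is correct and matches the paper's proof essentially verbatim: the same cycle cover $C$, the same use of Observation \ref{bipartitebit} to get 6-edge-connectivity of $G/C$ and hence $r_e=\frac13$, the same vectors $v$ and $u$, and the same final combination $\frac14 v+\frac34 u$. The arithmetic and the polynomial-time claims all check out.
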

\begin{proof}
Let $C$ be the cycle cover in $G$ that covers 3-edge and 4-edge cuts
of $G$. Let $M$ be the set of edges in $G$ that have both endpoints in
the same cycle of $C$. Since $G/C$ is 6-edge-connected, the vector $r$
with $r_e=\frac{1}{3}$ for $e\in E(G/C)$ is in
\subt$(G/C)$. Therefore, we can show, similarly as in the proof of Theorem \ref{15/172ec},
that the vector $v$ such that $v_e=1$ for $e\in C$ and $v_e =
\frac{3}{2}\cdot \frac{1}{3}=\frac{1}{2}$ for $e\in E(G/C)$ and
$v_e=0$ for $e\in M$ can be written as a convex combination of
2-edge-connected multigraphs of $G$ in polynomial time. Furthermore,
as in the proof of Theorem \ref{15/172ec}, the vector $u$, where
$u_e=\frac{5}{6}$ for $e\in C$, $u_e=1$ for $e\in E \setminus C$, can
be written as a convex combination of 2-edge-connected subgraphs of
$G$ in polynomial time.  Note that the vector
$\frac{1}{4}v+\frac{3}{4}u$ is dominated by the everywhere
$\frac{7}{8}$ vector for $G$.
\end{proof}

For the case where $k=4$ in Proposition \ref{k-ec,k-reg}, Carr and
Ravi \cite{Carr98} showed that the everywhere $\frac{2}{3}$ vector can
be written as a convex combination of 2-edge-connected
subgraphs\footnote{\cite{Carr98} do not double half edges, so in fact
  here we obtain a convex combination of subgraphs.}.  But as we
mentioned earlier, their proof is constructive but might require
exponential time.  The only known result on this problem before this
work is applying Wolsey \cite{wolsey1980heuristic}'s decomposition
which implies that the everywhere $\frac{3}{4}$ vector for a
4-edge-connected 4-regular graph can be decomposed into a convex
combination of 2-edge-connected spanning multigraphs in polynomial
time. However, this is weaker in terms of both the factor and the fact
that we now allow doubled edges. By applying Theorem \ref{cjr} we can
slightly improve this.  The proof of the following theorem is very
similar to the proof of Theorem \ref{2ec8/9cons}.
	\begin{theorem}\label{4ec4reg}
		Let $G = (V,E)$ be a 4-edge-connected, 4-regular
                graph. The everywhere $\frac{3}{4}$ vector for $G$ dominates a convex combination of 2-edge-connected subgraphs of $G$ and this convex combination can be found in polynomial time.
	\end{theorem}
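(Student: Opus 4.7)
The plan is to mirror the proof of Theorem \ref{2ec8/9cons} step by step, but to take advantage of the extra edge-connectivity of the input to drive the $\alpha$ parameter in Theorem \ref{cjr} down from $\tfrac12$ to $\tfrac13$. First I would let $y \in \mathbb{R}^E$ be the everywhere $\tfrac12$ vector for $G$. Because $G$ is 4-regular, $y$ satisfies the degree equations $y(\delta(v)) = 2$, and because $G$ is 4-edge-connected, every cut contains at least four edges, so $y(\delta(S)) \geq 4\cdot \tfrac12 = 2$; hence $y \in \;$\ref{subtour-equal}, and in particular $y \in \;$\subt$(G)$. By Fact \ref{fact:spanning-tree}, in polynomial time we obtain spanning trees $T_1,\ldots,T_\ell$ and positive multipliers $\lambda_1,\ldots,\lambda_\ell$ summing to $1$ with $y \geq \sum_{i=1}^{\ell} \lambda_i \chi^{T_i}$.

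Next, for each $i$, I would define $y^i \in \mathbb{R}^E$ by $y^i_e = \tfrac13$ for $e \notin T_i$ and $y^i_e = 0$ for $e \in T_i$. The key structural point is that $y^i \in \;${\sc Cover}$(G,T_i)$: every 1-edge cut of $T_i$ is of the form $\delta(S)$ for some $S \subset V$ with exactly one tree edge crossing, and since $G$ is 4-edge-connected this cut has at least $4-1 = 3$ non-tree edges, contributing at least $3 \cdot \tfrac13 = 1$ to the cover constraint. Applying Theorem \ref{cjr} with $\alpha = \tfrac13$ then yields, in polynomial time, 1-covers $C^i_1,\ldots,C^i_{\ell_i}$ of $T_i$ and positive multipliers summing to $1$ such that $\tfrac{3}{2} y^i = \sum_j \lambda^i_j \chi^{C^i_j}$. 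Since $y^i$ vanishes on $T_i$, each $C^i_j$ avoids $T_i$, so $T_i \cup C^i_j$ is a 2-edge-connected subgraph of $G$ with no doubled edges.

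Finally, I would combine these into
\[
u \;=\; \sum_{i=1}^{\ell} \sum_{j=1}^{\ell_i} \lambda_i \lambda^i_j \,\chi^{T_i \cup C^i_j},
\]
a convex combination of 2-edge-connected subgraphs of $G$. Letting $p_e = \sum_{i:\,e \in T_i} \lambda_i \leq y_e = \tfrac12$, the standard bookkeeping gives
\[
u_e \;=\; p_e \;+\; (1-p_e)\cdot \tfrac{3}{2}\cdot \tfrac{1}{3} \;=\; \tfrac{1}{2}\bigl(1+p_e\bigr) \;\leq\; \tfrac{3}{4},
\]
so $u$ is dominated by the everywhere $\tfrac34$ vector for $G$. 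There is no real obstacle beyond calibrating $\alpha$ correctly: the cubic choice $\alpha = \tfrac12$ would only yield $\tfrac56$, and it is precisely the additional slack afforded by 4-edge-connectivity (three non-tree edges across each 1-edge cut of $T_i$ instead of two) that makes $\alpha = \tfrac13$ feasible and produces the claimed ratio $\tfrac34$.
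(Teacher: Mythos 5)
Your proposal is correct and follows essentially the same route as the paper's proof: decompose the everywhere $\tfrac12$ vector into spanning trees, set up the half-weight-$\tfrac13$ cover vectors $y^i$ (feasible for {\sc Cover}$(G,T_i)$ precisely because 4-edge-connectivity leaves at least three non-tree edges in every 1-edge cut of $T_i$), and apply Theorem \ref{cjr} with $\alpha=\tfrac13$ to get the factor $\tfrac32$ and hence $u_e \leq \tfrac34$. Your bookkeeping via $p_e \leq \tfrac12$ is a slightly more careful version of the paper's computation, but the argument is the same.
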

\begin{proof}
	Since $G$ is a 4-edge-connected 4-regular graph the everywhere
        $\frac{1}{2}$ vector for $G$, call it $v$, is in \subt$(G)$. Therefore, $v$ can be written as convex combination of spanning trees of $G$: $v\geq \sum_{i=1}^{\ell} \lambda_i \chi^{T_i}$, where $\lambda_1,\ldots,\lambda_{\ell}$ are convex multipliers and $T_1,\ldots,T_\ell$ are spanning trees of $G$. For $i\in \{1,\ldots,\ell\}$, define a vector $y^i$, where $y^i_e=0$ if $e\in T_i$ and $y^i_e=\frac{1}{3}$ if $e\notin T_i$. Since $G$ is 4-edge-connected we have $y^i\in$ {\sc Cover}$(G,T_i)$. By Theorem \ref{cjr}, we can find 1-covers $C^i_1,\ldots,C^i_{\ell_i}$ of $T_i$ for $i\in\{1,\ldots,\ell\}$ with convex multipliers $\lambda^i_1,\ldots,\lambda^i_{\ell_i}$ such that $\frac{3}{2}y^i = \sum_{j=1}^{\ell_i}\lambda^i_j \chi^{C^i_j}$ for $i\in \{1,\ldots,\ell\}$. Now $T_i+C^i_j$ is a 2-edge-connected subgraph of $G$ for $i\in\{1,\ldots,\ell\}$ and $j\in\{1,\ldots,\ell_i\}$. Let
	$$u=\sum_{i\in \{1,\ldots,\ell\}} \sum_{j\in \{1,\ldots,\ell_i\}}
	\lambda_i \lambda^i_j \chi^{T_i\cup C^i_j}, \quad \text{where
	}\sum_{i\in \{1,\ldots,\ell\}} \sum_{j\in \{1,\ldots,\ell_i\}}
	\lambda_i \lambda^i_j =1.$$
	We can write $u$ as a convex combination of 2-edge-connected subgraphs of $G$. Also,
	
	$$u_e =
	\sum_{i: e\in T_i} \sum_{j=1}^{\ell_i}\lambda_i\lambda^i_j + \sum_{i:
		e\notin T_i} \sum_{j: e\in C^i_j}\lambda_i\lambda^i_j = \frac{1}{2}
	+ \frac{1}{2} \cdot \frac{1}{3}\cdot \frac{3}{2}= \frac{3}{4}.$$
\end{proof}

\section{A Bit Beyond Uniform Covers: Node-Weight Metrics}\label{sec:BIT}

Theorems \ref{18/19} and \ref{15/172ec}, which we proved in Section
\ref{sec:uniform-cover}, imply that when $G$ is a 3-edge-connected,
cubic graph and the everywhere $\frac{2}{3}$ vector is an optimal
solution for \ref{subtour-equal}, we can efficiently find a tour and a
2-edge-connected spanning multigraph whose costs are at most
$\frac{27}{19}$ and $\frac{45}{34}$, respectively, times that of an
optimal solution.  A 3-edge-connected, cubic graph with node-weight
function $f:V\rightarrow \mathbb{R}^+$ falls into this category, as we
will show later on.  However, for such graphs we can obtain
approximation guarantees better than $\frac{27}{19}$ and
$\frac{45}{34}$ for the respective problems.  The techniques we use to
show this are similar to those used in Section
\ref{sec:uniform-cover}. However, these techniques do not generalize
to cubic graphs that have 2-edge cuts. In order to obtain improved
approximation algorithms for this more general class of graphs, we
introduce a connector decomposition theorem.  We use this
decomposition theorem to design algorithms for 2EC and TSP on
node-weighted, subcubic graphs.

\subsection{3-Edge-Connected Cubic Graphs}

First we show that the everywhere $\frac{2}{3}$ vector is in fact an
optimal solution for \ref{subtour} when $G$ is a cubic,
3-edge-connected graph.

\begin{lemma} Let $G=(V,E)$ be a 3-edge-connected,
    cubic graph with node-weight function $f:V \to \mathbb{R}^+$.
    Then $z_G=2\cdot \sum_{v\in V}f_v$.
\end{lemma}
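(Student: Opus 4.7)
The statement is essentially a reformulation observation: under a node-weight metric, the LP objective collapses to a weighted sum of fractional degrees, and the cubic plus 3-edge-connected assumption pins both sides of the inequality. The plan is to prove matching lower and upper bounds of $2\sum_{v\in V} f(v)$ on $z_G$.

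For the lower bound, I would take an arbitrary feasible $x \in$ \ref{subtour}, and rewrite its objective value by swapping the order of summation:
\[
\sum_{e=uv \in E} (f(u)+f(v))\, x_{uv} \;=\; \sum_{v\in V} f(v) \sum_{e\in \delta(v)} x_e \;=\; \sum_{v\in V} f(v)\cdot x(\delta(v)).
\]
Applying the singleton-cut constraints $x(\delta(\{v\})) \geq 2$ termwise (legal because $f(v) \geq 0$) gives $\sum_{e\in E} w(e) x_e \geq 2 \sum_{v\in V} f(v)$, hence $z_G \geq 2\sum_{v\in V} f(v)$.

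For the matching upper bound, I would exhibit the everywhere $\tfrac{2}{3}$ vector $x^*$ as a feasible point. Since $G$ is cubic, $|\delta(\{v\})| = 3$ for each $v$, so $x^*(\delta(\{v\})) = 2$. For a cut $\delta(S)$ with $2 \leq |S| \leq |V|-1$, the 3-edge-connectivity of $G$ gives $|\delta(S)| \geq 3$, so $x^*(\delta(S)) \geq 2$. Thus $x^* \in$ \ref{subtour}, and using the same degree-swap its value is exactly $\sum_{v\in V} f(v) \cdot x^*(\delta(v)) = \tfrac{2}{3}\cdot 3 \sum_{v\in V} f(v) = 2\sum_{v\in V} f(v)$, matching the lower bound.

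There is no real obstacle here — both directions are one-line manipulations once the objective is rewritten in terms of fractional degrees. The only thing worth double-checking is feasibility of $x^*$ for non-singleton cuts, where one must invoke 3-edge-connectivity rather than just cubicness; for singleton cuts the two hypotheses coincide and tightness of $x^*(\delta(\{v\})) = 2$ is precisely what lets the lower bound be achieved with equality.
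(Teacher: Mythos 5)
Your proposal is correct and matches the paper's argument exactly: the lower bound via reordering the objective as $\sum_{v} f(v)\, x(\delta(v)) \geq 2\sum_{v} f(v)$, and the upper bound by exhibiting the everywhere $\tfrac{2}{3}$ vector as a feasible point of value $2\sum_{v} f(v)$. Your extra care in checking feasibility of $x^*$ on non-singleton cuts via 3-edge-connectivity is the same fact the paper invokes, just spelled out more explicitly.
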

\begin{proof}\label{zg2/3}
	For any $x\in$ \ref{subtour}, we have $x(\delta(v))\geq 2$. So,
	\begin{equation*}
\sum_{e\in E}w(e) x_e ~=~ \sum_{v\in V} x(\delta(v))\cdot f_v ~\geq~ 2 \cdot
\sum_{v\in V}f_v.
\end{equation*}
	Thus, $z_G\geq 2\cdot \sum_{v\in V}f_v$. On the other hand,
        let $x'_e$ denote the everywhere $\frac{2}{3}$ vector for $G$.
 Note that $x'\in$
        \ref{subtour}, since $G$ is 3-edge-connected.  Moreover,
        $\sum_{e\in E}w(e)x'_e = 2\cdot \sum_{v\in V}f_v$.  Hence
        $z_G\leq 2\cdot \sum_{v\in V}f_v$.
\end{proof}
Thus, we see that we can achieve a $\frac{27}{19}$-approximation for
TSP on node-weighted, cubic, 3-edge-connected graphs.  We now show in
fact this approximation ratio can be improved in this special case.
We start with the following observations.

\begin{fact}\label{fact:cc}
Let $C$ be a cycle cover of $G$.  Then
$\sum_{e\in
  C}w(e)= 2\cdot \sum_{v\in V}f_v = z_G.$
\end{fact}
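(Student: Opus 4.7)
The plan is straightforward and relies on the defining property of a cycle cover together with the node-weight structure. A cycle cover $C$ is a collection of vertex-disjoint cycles whose vertex sets partition $V$, so every vertex $v \in V$ has degree exactly $2$ in $C$. I would use this to rewrite the edge-weight sum as a vertex-weight sum by accounting for how many edges at each vertex are included in $C$.

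Specifically, first I would expand each edge weight using the node-weight function: $w(e) = f_u + f_v$ for $e = uv$. Then swapping the order of summation,
\begin{equation*}
\sum_{e \in C} w(e) \;=\; \sum_{uv \in C} (f_u + f_v) \;=\; \sum_{v \in V} \deg_C(v)\, f_v.
\end{equation*}
Since $C$ is a cycle cover, $\deg_C(v) = 2$ for every $v \in V$, which immediately yields $\sum_{e \in C} w(e) = 2 \sum_{v \in V} f_v$. To get the second equality in the statement, I would simply invoke Lemma \ref{zg2/3} (just established), which says $z_G = 2 \sum_{v \in V} f_v$ for node-weighted, 3-edge-connected cubic graphs.

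There is no real obstacle here; this is essentially a bookkeeping computation. The only thing to be careful about is making sure the node-weight assumption is used cleanly (so that edge sums collapse into a weighted degree sum) and that one references the preceding lemma to identify the common value with $z_G$.
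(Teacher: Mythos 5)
Your proof is correct and is exactly the intended argument: the paper states this as a \emph{Fact} without proof, and the degree-two property of a cycle cover combined with the node-weight expansion $w(uv)=f_u+f_v$, plus the preceding lemma identifying $z_G$ with $2\sum_{v\in V}f_v$, is precisely the bookkeeping the authors leave implicit. Nothing is missing.
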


\begin{fact}\label{fact:pm}
Let $M$ be a perfect matching of $G$.  Then
$\sum_{e\in M}w(e) =  \sum_{v\in V}f_v=\frac{z_G}{2}.$
\end{fact}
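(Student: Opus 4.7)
The plan is to derive the identity by combining two observations: the combinatorial fact that a perfect matching hits each vertex exactly once, and the node-weight definition $w(uv) = f(u) + f(v)$. Since $G$ is cubic it has an even number of vertices and indeed admits a perfect matching (by Petersen's theorem, for instance, though this is not needed for the proof — the statement is conditional on $M$ being a perfect matching).

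First I would write
\[
\sum_{e \in M} w(e) \;=\; \sum_{uv \in M} \bigl(f(u) + f(v)\bigr)
\]
and then swap the order of summation: since $M$ is a perfect matching, every vertex $v \in V$ is the endpoint of exactly one edge of $M$, so each $f(v)$ appears exactly once on the right-hand side. This collapses the double sum to $\sum_{v \in V} f(v)$, giving the first equality $\sum_{e \in M} w(e) = \sum_{v \in V} f_v$.

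For the second equality I would simply invoke Lemma \ref{zg2/3}, which establishes $z_G = 2 \sum_{v \in V} f_v$ for 3-edge-connected cubic node-weighted graphs, hence $\sum_{v \in V} f_v = z_G / 2$. Combining the two identities yields the claim.

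There is no real obstacle here; the statement is a direct bookkeeping consequence of the node-weight metric and the perfect matching property, and mirrors the analogous calculation already carried out for Fact \ref{fact:cc} where each vertex is covered by exactly two edges of the cycle cover rather than one. The only thing to be careful about is that the identity holds edge-by-edge without any reliance on $M$ being a minimum-weight perfect matching — it is purely structural.
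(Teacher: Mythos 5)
Your proof is correct and is exactly the intended argument: the paper states this as an unproved Fact because it follows immediately from the perfect matching covering each vertex once (so the node weights sum to $\sum_{v\in V} f_v$) together with the preceding lemma giving $z_G = 2\sum_{v\in V} f_v$. Nothing to add.
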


\introSevenFifths*

\begin{proof}
Let $C$ be a cycle cover of $G$ that covers all 3-edge and 4-edge cuts
of $G$. By Observation \ref{5ec}, the graph $G/C$ is
5-edge-connected. Let $y_e = \frac{2}{5}$ if $e\in E(G/C)$, and
$y_e=0$ otherwise. Notice that $y\in$ \subt$(G/C)$, since for every $S
\subset V(G/C)$, we have $y(\delta(S))\geq \frac{2}{5}\cdot 5\geq 2$.
By Fact~\ref{fact:spanning-tree}, $y$ dominates a convex combination
of spanning trees of $G/C$.  Let $T$ be a minimum spanning tree of
$G/C$.
\begin{align*}
\sum_{e\in T}w(e) &\leq  \sum_{e\in E(G/C)}w(e) y_e & \nonumber\\
& \leq \sum_{e \in E\setminus{C}} w(e) y_e& (E(G/C)\subseteq E\setminus C)\\
& \leq \sum_{e \in E\setminus{C}} w(e) \cdot \frac{2}{5}&(y_e\leq\frac{2}{5}\mbox{ for } e\in E\setminus C) \\
& =  \frac{z_G}{2} \cdot
\frac{2}{5} ~ = ~ \frac{z_G}{5}
& (\mbox{By Fact }\ref{fact:pm}; ~E\setminus C \mbox{ is a perfect matching
  of $G$}).
\end{align*}
Finally, note that $C\cup 2T$ is a tour of $G$ and
\begin{equation*}\sum_{e\in C\cup 2T}w(e)\leq \sum_{e\in C}w(e)+2\cdot\sum_{e\in  T}w(e)\leq z_G+\frac{2}{5}z_G=\frac{7}{5}z_G.\end{equation*}

\end{proof}

Next we show that we can use a very similar approach to 2EC
on node-weighted, 3-edge-connected, cubic graphs.

\begin{theorem}\label{13/10}
There is a  $\frac{13}{10}$-approximation algorithm for 2EC on node-weighted, 3-edge-connected, cubic graphs.
\end{theorem}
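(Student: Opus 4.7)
The idea is to run the proof of Theorem~\ref{7/5} but replace the doubled spanning tree (used there to make the auxiliary multigraph Eulerian) by an honest Eulerian connected spanning multigraph of $G/C$ --- which is exactly what $2$EC permits and TSP does not. The polyhedral Christofides analysis already delivers such a multigraph in convex-combination form at a per-edge budget of $\tfrac{3}{5}$ on $E(G/C)$, and when translated back to $G$ through the cycle cover $C$, this yields a convex combination of $2$-edge-connected spanning multigraphs of total $G$-weight at most $\tfrac{13}{10}z_G$. In other words, the ``$v$-component'' of the $\tfrac{15}{17}$-uniform-cover proof (Theorem~\ref{15/172ec}) already achieves $\tfrac{13}{10}$ by itself in the node-weighted cubic setting, and no combination with a tree-augmentation component is needed.

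Concretely, I would first invoke Theorem~\ref{bit13} to compute in polynomial time a cycle cover $C$ of $G$ that covers every $3$- and $4$-edge cut; by Observation~\ref{5ec} the multigraph $G/C$ is $5$-edge-connected. Since $G$ is cubic and $C$ is a $2$-factor, $M := E\setminus C$ is a perfect matching of $G$, and $E(G/C)\subseteq M$ consists of those matching edges with endpoints in distinct cycles of $C$. The vector $r\in\mathbb{R}^{E(G/C)}$ with $r_e = \tfrac{2}{5}$ lies in $\subt(G/C)$, so exactly as in the proof of Theorem~\ref{15/172ec} the vector $\tfrac{3}{2}r$ can be written in polynomial time as a convex combination of tours $R_1,\ldots,R_\ell$ of $G/C$. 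For each $i$, the multigraph $C\cup R_i$ is a $2$-edge-connected spanning multigraph of $G$: every cycle of $C$ is internally $2$-edge-connected, and since $R_i$ is a connected Eulerian spanning multigraph (hence $2$-edge-connected) of $G/C$, uncontracting the cycles preserves $2$-edge-connectivity. Therefore the vector $v\in\mathbb{R}^E$ given by $v_e = 1$ on $C$, $v_e = \tfrac{3}{5}$ on $E(G/C)$, and $v_e = 0$ on $M\setminus E(G/C)$ dominates this convex combination of the $\{C\cup R_i\}$.

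The weight computation closes the argument. By Fact~\ref{fact:cc}, $w(C) = z_G$, and by Fact~\ref{fact:pm}, $w(M) = \tfrac{z_G}{2}$, so $w(E(G/C))\leq w(M) = \tfrac{z_G}{2}$. Hence
\begin{equation*}
w(v) \;=\; w(C) \;+\; \tfrac{3}{5}\,w(E(G/C)) \;\leq\; z_G \;+\; \tfrac{3}{5}\cdot\tfrac{z_G}{2} \;=\; \tfrac{13}{10}\, z_G.
\end{equation*}
Since $w(v)\geq \sum_{i}\lambda_i\, w(C\cup R_i)$ for the convex combination at hand, the cheapest $C\cup R_i$ has weight at most $\tfrac{13}{10}z_G$; the decomposition is of polynomial size, so this cheapest multigraph can be returned in polynomial time. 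Because $z_G$ is a lower bound on the optimal $2$EC value, this yields the claimed $\tfrac{13}{10}$-approximation. The only step that is not a direct computation is verifying ``$C\cup R_i$ is $2$-edge-connected'', but this is standard (contracting $2$-edge-connected subgraphs preserves $2$-edge-connectivity), so I do not anticipate any substantive obstacle beyond careful bookkeeping.
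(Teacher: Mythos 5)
Your proposal is correct and follows essentially the same route as the paper: compute the Boyd--Iwata--Takazawa cycle cover $C$, use $5$-edge-connectivity of $G/C$ to place the everywhere $\tfrac{2}{5}$ vector in $\subt(G/C)$, obtain a $2$-edge-connected spanning multigraph of $G/C$ at cost $\tfrac{3}{2}\cdot\tfrac{2}{5}\,w(E(G/C))\le\tfrac{3}{10}z_G$ via the (polyhedral) Christofides analysis, and glue it to $C$. The only cosmetic difference is that the paper invokes Christofides' algorithm directly on $G/C$ to get a single multigraph $F$ rather than extracting the cheapest member of a convex combination, but this is the same argument.
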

\begin{proof}
Let $C$ be a cycle cover of $G$ that covers all 3-edge and 4-edge cuts
of $G$. By Observation \ref{5ec} graph $G/C$ is 5-edge-connected. For
$e\in E(G/C)$ let $y_e = \frac{2}{5}$, and $y_e=0$ otherwise. Notice
that $y\in$ \subt${(G/C)}$. By Christofides' algorithm, one can find a
2-edge-connected multigraph $F$ on $G/C$, such that $\sum_{e\in
  F}w(e)\leq\frac{3}{2} \sum_{e\in E(G/C)}w(e) y_e$. In particular,

\begin{align*}
\sum_{e\in F}w(e) &
\leq \frac{3}{2} \sum_{e\in E(G/C)} w(e) y_e&\\
& \leq \frac{3}{2} \sum_{e\in E\setminus C} w(e) y_e&(E(G/C)\subseteq E\setminus C)\\
& \leq  \frac{3}{5} \sum_{e\in E\setminus C}w(e)&(y_e\leq\frac{2}{5}\mbox{ for } e\in E\setminus C)\\
& =  \frac{3}{10} z_G
& (\mbox{By Fact }\ref{fact:pm}; ~E\setminus C \mbox{ is a perfect matching of $G$}).
\end{align*}

Note that $C\cup F$ is a 2-edge-connected multigraph of $G$ and
\begin{equation*}\sum_{e\in C\cup F}w(e)\leq \sum_{e\in C}w(e)+\sum_{e\in  F}w(e)\leq z_G+\frac{3}{10}z_G=\frac{13}{10}z_G.
\end{equation*}
\end{proof}

We note that for the 2EC problem on 3-edge-connected cubic graphs,
there are better (i.e., smaller) bounds on the integrality gap than
those implied by Theorem \ref{13/10}.  In particular, Boyd and
Legault~\cite{philip6/7} and Legault~\cite{philip} gave bounds of
$\frac{6}{5}$ and $\frac{7}{6}$, respectively, on the integrality gap.
While their procedures are constructive, they do not run in polynomial
time.  Thus, the best previously known approximation factor for this
problem is $\frac{3}{2}$ via Christofides algorithm. Finally one can
easily obtain the following theorem using the ideas in the above
theorems together with Observation \ref{bipartitebit}.

\begin{theorem}\label{4/3bip}
	 	There is a $\frac{4}{3}$-approximation
                (respectively, $\frac{5}{4}$-approximation) algorithm for TSP
                (respectively, 2EC) on node-weighted, 3-edge-connected, cubic,
                bipartite graphs.
\end{theorem}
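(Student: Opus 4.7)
The plan is to mimic the proofs of Theorems \ref{7/5} and \ref{13/10}, replacing the factor $\frac{2}{5}$ (coming from $5$-edge-connectivity of $G/C$) with the better factor $\frac{1}{3}$ that Observation \ref{bipartitebit} makes available in the bipartite case. Concretely, I would apply Theorem \ref{bit13} to obtain a cycle cover $C$ of $G$ covering every $3$-edge and $4$-edge cut; by Observation \ref{bipartitebit} the contracted multigraph $G/C$ is then $6$-edge-connected. Setting $y_e=\tfrac{1}{3}$ for $e\in E(G/C)$ and $y_e=0$ otherwise gives a feasible point of \subt$(G/C)$, since $y(\delta(S))\ge 6\cdot\tfrac{1}{3}=2$ for every nonempty proper $S\subset V(G/C)$. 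Also, $E\setminus C$ is a perfect matching of $G$, so Fact \ref{fact:pm} gives $\sum_{e\in E\setminus C}w(e)=\tfrac{z_G}{2}$, and $E(G/C)\subseteq E\setminus C$ with $y_e\le \tfrac{1}{3}$ on $E\setminus C$.

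For the TSP bound, I would take $T$ to be a minimum spanning tree of $G/C$. Using Fact \ref{fact:spanning-tree} together with the chain of inequalities
\[
\sum_{e\in T}w(e)\ \le\ \sum_{e\in E(G/C)}w(e)\,y_e\ \le\ \tfrac{1}{3}\sum_{e\in E\setminus C}w(e)\ =\ \tfrac{1}{3}\cdot\tfrac{z_G}{2}\ =\ \tfrac{z_G}{6},
\]
I would then observe that $C\cup 2T$ is a tour of $G$ (it is connected because $T$ spans $G/C$, and every vertex has even degree since $C$ contributes $2$ and $2T$ contributes even amounts). Its weight is at most $\sum_{e\in C}w(e)+2\sum_{e\in T}w(e)\le z_G+2\cdot\tfrac{z_G}{6}=\tfrac{4}{3}z_G$, using Fact \ref{fact:cc}.

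For the 2EC bound, I would instead apply the polyhedral analysis of Christofides' algorithm \cite{wolsey1980heuristic,shmoys1990analyzing} to the point $y$ on $G/C$ to obtain a $2$-edge-connected spanning multigraph $F$ of $G/C$ with
\[
\sum_{e\in F}w(e)\ \le\ \tfrac{3}{2}\sum_{e\in E(G/C)}w(e)\,y_e\ \le\ \tfrac{3}{2}\cdot\tfrac{1}{3}\sum_{e\in E\setminus C}w(e)\ =\ \tfrac{1}{2}\cdot\tfrac{z_G}{2}\ =\ \tfrac{z_G}{4}.
\]
Then $C\cup F$ is $2$-edge-connected in $G$ (the cycles in $C$ provide internal $2$-edge-connectivity of each contracted ``super-vertex,'' and $F$ makes these super-vertices $2$-edge-connected in $G/C$), with total weight at most $z_G+\tfrac{z_G}{4}=\tfrac{5}{4}z_G$.

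There is no real obstacle here: the only step that requires the bipartite hypothesis is the upgrade from $5$-edge-connectivity to $6$-edge-connectivity of $G/C$, which is exactly what Observation \ref{bipartitebit} supplies. Everything else is bookkeeping along the template of Theorems \ref{7/5} and \ref{13/10}, and produces polynomial-time algorithms since both Theorem \ref{bit13} and the polyhedral Christofides/MST subroutines run in polynomial time.
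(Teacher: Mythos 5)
Your proof is correct and is exactly the argument the paper intends: the paper omits the details, stating only that the theorem follows from the ideas of Theorems \ref{7/5} and \ref{13/10} combined with Observation \ref{bipartitebit}, and your write-up fills in precisely that template, with the $6$-edge-connectivity of $G/C$ upgrading the factor $\frac{2}{5}$ to $\frac{1}{3}$ and the arithmetic checking out ($z_G + \frac{z_G}{3} = \frac{4}{3}z_G$ and $z_G + \frac{z_G}{4} = \frac{5}{4}z_G$).
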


\subsection{A Tool for Covering 2-Edge Cuts}\label{sec:decomposition}

The results in Theorems \ref{7/5} and \ref{13/10} do not apply to
bridgeless, cubic graphs.  In this section, we give an alternative
tool to the BIT cycle cover (from Theorem \ref{bit13}) for graphs that are not
3-edge-connected (i.e., graphs that contain 2-edge cuts).  In
particular, we find a decomposition 
of a point $x^*$ in
\ref{subtour} such that this decomposition has certain properties.  
Many approaches for TSP
decompose $x^*$ into a convex combination of spanning trees, whose
average weight does not exceed $z_G$.  In this section, we propose an
alternate way of decomposing $x^*$ into {\em connectors}.

\begin{define}
A \textit{connector} $F$ of graph $G$ is a (multi) subset of edges of $G$
such that $F$ is connected and spanning and contains at most two
copies of each edge in $G$.
\end{define}

It is known that a vector $x^* \in$ \ref{subtour} dominates a convex
combination of spanning trees (and hence connectors) of $G$.  We now
show that $x^*$ can be decomposed into connectors with the additional
property that every 2-edge cut is covered an even number of times.
These connectors can be augmented to obtain a tour or a
2-edge-connected multigraph of $G$, and under certain conditions, this
property can be exploited to bound the weight of an augmentation.

\begin{theorem}\label{decomposition}

  Let $x^*\in $ \ref{subtour}. We can find a family of connectors ${\cal{F}}=\{F_1, \ldots
  , F_{\ell}\}$ and multipliers $\lambda_1,\ldots,\lambda_{\ell}$, in
  polynomial-time in the size of the graph $G$, such that
 \begin{itemize}

\item[(a)] $x^* \geq \sum_{i=1}^{\ell} \lambda_i F_i$,   where $\sum
  \lambda_i = 1$ and $\lambda_i > 0$, and

\item[(b)] every $F_i$ has an even number of edges crossing each
  2-edge cut in G.

\end{itemize}
\end{theorem}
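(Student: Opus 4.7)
My plan is to start from a convex combination of spanning trees and augment each tree into a connector whose crossings of every 2-edge cut of $G$ are even. By Fact~\ref{fact:spanning-tree}, write $x^* \geq \sum_{i=1}^{\ell} \lambda_i \chi^{T_i}$ for spanning trees $T_1, \ldots, T_\ell$ of $G$ with $\lambda_i > 0$ summing to $1$. For any tree $T_i$ and any 2-edge cut $\{e,e'\}$ of $G$, the tree crosses the cut either once (when it contains exactly one of $e, e'$) or twice. Call the cut \emph{odd for $T_i$} in the former case; even cuts need no fixing.

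For every cut $\{e,e'\}$ that is odd for $T_i$, say with $e \in T_i$ and $e' \notin T_i$, the augmentation adds either the missing edge $e'$ (so that $F_i$ contains one copy each of $e$ and $e'$) or a second copy of $e$ (so that $F_i$ contains two copies of $e$ and no copy of $e'$). Either option makes the parity even and keeps edge multiplicities at most $2$, so the resulting $F_i = T_i \cup A_i$ is a connector covering every 2-edge cut of $G$ an even number of times. The nontrivial design question is which of the two rules to apply per odd cut, since a single edge may lie in several 2-edge cuts (reflecting the cactus-like structure of 2-edge cuts in a 2-edge-connected graph), and the choices must be coordinated across cuts sharing edges.

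The main obstacle is ensuring $\sum_i \lambda_i F_i \leq x^*$ coordinatewise. For edges outside every 2-edge cut this is immediate, since $F_i$ and $T_i$ agree there. For an edge $e$ in some 2-edge cut $\{e,e'\}$, the total mass on $e$ comes from trees that already contain $e$ plus augmentation mass: copies of $e$ added by the ``double'' rule (from trees containing only $e$) and copies of $e$ added by the ``missing edge'' rule (from trees containing only $e'$). The LP constraint $x^*_e + x^*_{e'} \geq 2$ supplies exactly the combined budget needed to pay for this augmentation across the two edges of the cut. To use this budget without overshooting on either edge, the augmentation rule per odd cut is selected based on $x^*$ so that the averaged crossing distribution at every 2-edge cut matches $(x^*_e, x^*_{e'})$; this typically requires mixing both rules across the $T_i$ in proportions dictated by the LP.

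An alternative route, which may be cleaner, is an induction on the number of 2-edge cuts of $G$. The 3-edge-connected base case is trivial, since the spanning trees from Fact~\ref{fact:spanning-tree} already satisfy the parity requirement vacuously. For the inductive step, pick a 2-edge cut $\{e,e'\}$ separating $V$ into $(S,\bar S)$, split $G$ into two pieces $G_S$ and $G_{\bar S}$ by contracting the opposite side of the cut to a single vertex, verify that $x^*$ restricts to a feasible subtour solution on each piece, recursively decompose on each side into even-parity connectors, and then stitch the two families back together so that each combined connector crosses $\{e,e'\}$ evenly. The stitching uses the LP slack $x^*_e + x^*_{e'} - 2 \geq 0$ as the budget to distribute the cross-cut patterns among $(1,1)$, $(2,0)$, and $(0,2)$ in the correct proportions.
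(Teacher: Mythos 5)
Your plan correctly identifies the target (make every odd 2-edge cut even by either adding the missing edge or doubling the present one, paying with the slack $x^*_e+x^*_{e'}\ge 2$), but both routes you sketch stop exactly where the proof actually lives, and the spanning-tree starting point introduces a difficulty that the paper deliberately avoids. Concretely: with Fact~\ref{fact:spanning-tree} you only get $x^*\ge\sum_i\lambda_i\chi^{T_i}$, and a tree never carries two copies of an edge, so the tree mass on an edge $e$ of a 2-edge cut can be far below $x^*_e$ and skewed arbitrarily between $e$ and $e'$. Already for a single isolated cut $\{e,e'\}$ with, say, $x^*_e=0.9$, $x^*_{e'}=1.1$, and half the trees containing both edges while the other half contain only $e'$, neither pure rule works: always adding the missing edge overshoots $x^*_e$, always doubling overshoots $x^*_{e'}$, so you must fractionally split the $\lambda_i$ and solve for the mixing ratio; you assert this is possible ``in proportions dictated by the LP'' but never verify that a feasible ratio exists, and the verification is the content of the claim. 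Worse, when an edge lies in several 2-edge cuts the choices interact, and this is precisely the case you defer as ``the nontrivial design question.'' The paper resolves it by first proving (Lemma~\ref{2edgecutsarenice}) that ``lies in a 2-edge cut with'' is transitive, so the relevant edges partition into equivalence classes $D$ in which at most one edge has $x^*_e<1$; the coordination rule is then keyed to that single edge.

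The second structural ingredient you are missing is the normalization that makes the case analysis clean. The paper does not start from spanning trees but from the connector (partition) polytope (its Lemma~\ref{decom}), which yields an \emph{exact} convex combination $x^*=\sum_i\lambda_i\chi^{F_i}$ of connectors; a local exchange argument (Lemma~\ref{transform1}) then shifts second copies of edges to connectors missing them, guaranteeing that whenever $x^*_e\ge 1$ \emph{every} connector contains $e$, and whenever $x^*_e<1$ no connector doubles $e$. With trees this normalization is unavailable (you cannot raise the total mass on an edge above its tree mass by redistribution), so the paper's subsequent two-case rule --- in a class $D$ with all $x^*\ge 1$, set every edge of $D$ to multiplicity exactly $1$ in every connector; in a class with a unique $e$ having $x^*_e<1$, give connectors containing $e$ the all-ones pattern on $D$ and give connectors missing $e$ two copies of every other edge of $D$ --- cannot be transplanted directly onto your tree family. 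Your inductive alternative (split along a 2-edge cut, recurse, stitch) has the same character: the stitching step, where the cross-cut patterns $(1,1)$, $(2,0)$, $(0,2)$ must be distributed consistently with both $x^*_e$ and $x^*_{e'}$ while keeping each glued multigraph connected, is the entire argument and is not carried out. To complete either route you would need, at minimum, the transitivity lemma for 2-edge cuts and an explicit, verified assignment of augmentation patterns whose $\lambda$-average is dominated by $x^*$ on every edge of every equivalence class.
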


We note that $G$ can be assumed to be the support of $x^*$, so every
$F_i$ will actually have an even number of edges crossing each 2-edge
cut in the support of $G$ on $x^*$.

\subsubsection{Proof of Theorem \ref{decomposition}}
	
To prove Theorem \ref{decomposition}, we need to understand the
structure of 2-edge cuts in a 2-edge connected graph.  Assume $G=(V,E)$
is a 2-edge-connected graph. For $S\subseteq V$, let $G[S]$ denote the
subgraph induced by vertex set $S$ (i.e., the graph on the vertex set $S$ containing edges from $E$ with both endpoints in $S$).
\begin{lemma}\label{connectedinside}
If $S\subseteq V$ and $|\delta(S)|=2$, then $G[S]$ is connected.
\end{lemma}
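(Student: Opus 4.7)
The plan is to prove this by contradiction, exploiting the fact that 2-edge-connectivity forces every proper nonempty vertex subset to have at least two edges in its boundary.

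Suppose toward contradiction that $G[S]$ is disconnected. Then I can partition $S$ as $S = S_1 \sqcup S_2$ with both $S_1, S_2$ nonempty and no edge of $G[S]$ having one endpoint in $S_1$ and the other in $S_2$. The first key observation is that, by this choice, every edge of $\delta(S_1)$ must go from $S_1$ to $V \setminus S$ (there are no $S_1$–$S_2$ edges inside $G[S]$ by construction). Hence $\delta(S_1) \subseteq \delta(S)$, and symmetrically $\delta(S_2) \subseteq \delta(S)$. Moreover, $\delta(S_1)$ and $\delta(S_2)$ partition $\delta(S)$, because each edge of $\delta(S)$ leaves $S$ through exactly one of $S_1, S_2$.

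Now I invoke 2-edge-connectivity: since $S_1$ and $S_2$ are both proper nonempty subsets of $V$ (here I use that $S \subsetneq V$, which follows from $|\delta(S)| = 2 > 0$), we have $|\delta(S_1)| \geq 2$ and $|\delta(S_2)| \geq 2$. Combined with the disjoint-union identity $|\delta(S)| = |\delta(S_1)| + |\delta(S_2)|$, this forces $|\delta(S)| \geq 4$, contradicting $|\delta(S)| = 2$. Therefore $G[S]$ must be connected.

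I do not expect any real obstacle here; the only subtlety is making sure that both $S_1$ and $S_2$ are proper subsets of $V$ so that 2-edge-connectivity applies to each. This is immediate because $S_1, S_2 \subseteq S \subsetneq V$ (the strict containment holds since $\delta(S)$ is nonempty). The argument is essentially a one-line application of submodularity of the cut function, but phrased directly it gives a clean two- or three-sentence proof.
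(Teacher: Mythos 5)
Your proof is correct and follows exactly the same route as the paper's: split a disconnected $G[S]$ into $S_1\sqcup S_2$, observe $|\delta(S_1)|+|\delta(S_2)|=|\delta(S)|=2$, and contradict the bound $|\delta(S_i)|\geq 2$ from 2-edge-connectivity. Your version just spells out the details (that $\delta(S_1),\delta(S_2)$ partition $\delta(S)$ and that each $S_i$ is a proper nonempty subset) that the paper leaves implicit.
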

\begin{proof}
Suppose not, then $S$ can be partitioned into $S_1$ and $S_2$, such
that there is no edge in $G$ between $S_1$ and $S_2$. Hence,
$|\delta(S_1)|+|\delta(S_2)|=2$. However, since $G$ is
2-edge-connected we have $|\delta(S_1)|+|\delta(S_2)|\geq 4$, which is
a contradiction.
\end{proof}

\begin{lemma}\label{2edgecutsarenice}
Let $e, f$ and $g$ be distinct edges of $G$. If $\{e,f\}$ and
$\{f,g\}$ are each 2-edge cuts in $G$, then $\{e,g\}$ is also a 2-edge
cut in $G$.
\end{lemma}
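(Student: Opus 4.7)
The plan is to invoke the standard cut-algebra identity $\delta(A \triangle B) = \delta(A) \triangle \delta(B)$, which says that symmetric difference of vertex sets induces symmetric difference of their cuts (edge cuts form a vector space over $\mathbb{F}_2$ with $\delta$ as an $\mathbb{F}_2$-linear map from $\mathbb{F}_2^V$ to $\mathbb{F}_2^E$). Take witness sets $A, B \subseteq V$ with $\delta(A) = \{e,f\}$ and $\delta(B) = \{f,g\}$. Applying the identity gives
\[
\delta(A \triangle B) \;=\; \{e,f\} \triangle \{f,g\} \;=\; \{e,g\},
\]
using that $e, f, g$ are pairwise distinct. Because this set is nonempty and $\delta(\emptyset) = \delta(V) = \emptyset$, the vertex set $A \triangle B$ is a proper nonempty subset of $V$, so $\{e,g\}$ is a genuine edge cut of $G$, and it has exactly two elements, hence is a 2-edge cut.

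The identity $\delta(A \triangle B) = \delta(A) \triangle \delta(B)$ is a one-line parity check: an edge $uv$ lies in $\delta(S)$ iff $\chi_S(u) + \chi_S(v) \equiv 1 \pmod 2$, and the XOR of the two parities (for $A$ and for $B$) equals the parity of the XOR (for $A \triangle B$). If one prefers to avoid this abstraction, one can instead do a short direct case analysis on how $f$ straddles $A$ and $B$: either $f$ has its $A$-side endpoint coinciding with its $B$-side endpoint, or not. In each case one tracks by hand which side of $A \triangle B$ each endpoint of $e$ and $g$ lies on, and verifies that $e$ and $g$ are the only edges crossing $A \triangle B$.

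There is essentially no obstacle here; the only subtle point is verifying that $A \triangle B$ is nontrivial as a vertex set (i.e., neither $\emptyset$ nor $V$), which is automatic once its cut is shown to be nonempty. I expect Lemma \ref{connectedinside} is not used in the proof of this lemma per se, but rather is set-up for the structural analysis of 2-edge cuts needed in the proof of Theorem \ref{decomposition} that follows.
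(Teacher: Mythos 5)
Your proof is correct, and it takes a genuinely different route from the paper's. You use the $\mathbb{F}_2$-linearity of the cut map, $\delta(A\,\triangle\,B)=\delta(A)\,\triangle\,\delta(B)$, so that $\delta(A\,\triangle\,B)=\{e,f\}\,\triangle\,\{f,g\}=\{e,g\}$, and nonemptiness of this set certifies that $A\,\triangle\,B$ is a nontrivial vertex set; the parity verification of the identity is exactly as you state. The paper instead runs a direct case analysis: it normalizes the witness sets $S$ and $T$ by complementation, invokes Lemma~\ref{connectedinside} (connectivity of $G[T]$, which does rely on 2-edge-connectedness) to rule out $T\setminus S\neq\emptyset$ and conclude $T\subset S$, and then checks by hand that $\delta(S\setminus T)=\{e,g\}$ --- note that under $T\subset S$ one has $S\setminus T=S\,\triangle\,T$, so the two arguments produce the same witness set. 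Your version is shorter, needs no connectivity hypothesis at all, and generalizes immediately (e.g., to showing that the symmetric difference of any two cuts is a cut); so your guess about Lemma~\ref{connectedinside} is half right: it is not \emph{needed} for this statement, but the paper does in fact use it here, in addition to using it in the later structural analysis. What the paper's elementary argument buys is self-containedness for a reader not comfortable with the cut-space formalism; what yours buys is brevity and generality.
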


\begin{proof}
Let $S,T\subset V$ be such that $\delta(S)=\{e,f\}$ and
$\delta(T)=\{f,g\}$.
Without loss of generality, we can assume that
neither endpoint of $e$ belongs to $T$.  (If both endpoints of $e$
belong to $T$, we set $T$ equal to its complement.)  Moreover, we can
assume that $S \cap T \neq \emptyset$ (since otherwise we can set $S$
equal to its complement).  We can also assume that
$S\setminus T\neq \emptyset$ (since  one endpoint of $e$
belongs to $S$ but not to $T$). Suppose $T\setminus S$
is not empty. By Lemma \ref{connectedinside}, $G[T]$ is
connected. Hence there exists an edge $h$ from $S\cap T$ to
$T\setminus S$. Notice $h\in \delta(S)$, and $h\notin
\delta(T)$. Therefore, $h=e$.  However, since both endpoints of $h$ are in
$T$, this is a contradiction.  So we can assume that $T \setminus S =
\emptyset$.  In other words, $T \subset S$.

Now we show that $\delta(S\setminus T)= \{e,g\}$.  Since $T \subset S$
and neither endpoint of $e$ belongs to $T$, it follows that $e \in
\delta(S \setminus T)$.  Moreover, since only one endpoint of $g$
belongs to $T$ (and therefore to $S$) and $g \notin \delta(S)$, it
follows that $g \in \delta(S \setminus T)$.  So we have
$\{e,g\}\subseteq \delta(S\setminus T)$. Suppose there is another edge
$h\in \delta(S\setminus T)$ with endpoints $v\in S\setminus T$ and
$u\notin S\setminus T$. Note that $h \neq f$, because neither endpoint
of $f$ belongs to $S \setminus T$.
If $u\in T$, then $h\in
\delta(T)$ which is a contradiction to $T$ being a 2-edge
cut. Otherwise if $u\in V\setminus S$, then $h\in \delta(S)$ which is
again a contradiction to $S$ being a 2-edge cut.
\end{proof}

We will later use these properties when building a family of
connectors to delete and replace edges along
the 2-edge cuts of the graph.  Next, we need a decomposition lemma for
$x^*$.

\begin{lemma}\label{decom}
A vector $x^*\in$ \ref{subtour} can be represented as a convex
combination of connectors of $G$, and the number of connectors in
this convex combination is
polynomial in the number of vertices of $G$.
\end{lemma}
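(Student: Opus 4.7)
The plan is to obtain the decomposition in two stages: first write $x^*$ as a convex combination of spanning trees (with slack) via Fact~\ref{fact:spanning-tree}, and then absorb the slack by augmenting each tree to form a connector.

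In the first stage, I would apply Fact~\ref{fact:spanning-tree} to obtain spanning trees $T_1,\dots,T_\ell$ and positive multipliers $\mu_1,\dots,\mu_\ell$ with $\sum_i \mu_i = 1$, $\ell$ polynomial in $|V|$, and $x^* \geq \sum_i \mu_i \chi^{T_i}$. Set $s := x^* - \sum_i \mu_i \chi^{T_i} \geq 0$; the goal then becomes finding augmentations $A_i$ so that each $F_i := T_i \cup A_i$ has every edge with multiplicity at most two and $\sum_i \mu_i \chi^{A_i} = s$. We may assume $x^*_e \leq 2$ for every $e$, since any larger value can be reduced to $2$ without violating any Subtour cut constraint, and this is precisely the range in which augmentation to a connector is possible.

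In the second stage, I would formulate the slack distribution as a small LP: variables $z_{i,e} \in [0,\, 2-\chi^{T_i}(e)]$ with the constraint $\sum_i \mu_i z_{i,e} = s_e$ for each edge $e$. Feasibility follows from the capacity bound $\sum_i \mu_i(2-\chi^{T_i}(e)) = 2 - \sum_{i: e \in T_i}\mu_i \geq s_e$, which uses $x^*_e \leq 2$. A basic feasible solution of this LP has at most $|E|$ variables strictly interior to their bounds; the rest are integer-valued at $0$, $1$, or $2-\chi^{T_i}(e)$. For each interior fractional $z_{i,e}$, I would split the corresponding tree $T_i$ into two pieces with multipliers proportional to the fractional parts, one containing an additional copy of $e$ and one not. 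Processed in sequence, this produces at most $\ell + |E|$ connectors in total.

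The main obstacle is controlling the blow-up in the splitting step: if many interior fractional variables lie within the same tree, naive independent splits could create exponentially many pieces. A clean resolution is to view the whole problem as an exponential-variable LP (one variable per connector) with polynomially many constraints (one equality per edge plus a normalization). Its dual has polynomial size and can be separated via a minimum-weight connector oracle, which reduces to minimum spanning tree (since for nonnegative weights an optimal connector never needs augmentation past a spanning tree). Hence the LP is solvable in polynomial time, and a basic optimal solution is supported on polynomially many connectors, yielding the desired decomposition.
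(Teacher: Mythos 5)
Your proposal is correct in substance, but it reaches the result by a different route than the paper. The paper's proof is purely polyhedral and very short: it quotes the known linear description of the connector polytope (Theorem 50.8b in Schrijver), namely the partition inequalities $x(\delta(\mathcal{P}))\geq|\mathcal{P}|-1$ together with $0\leq x_e\leq 2$, checks in one line that any $x^*\in$ \ref{subtour} satisfies the partition inequalities (since $x^*(\delta(\mathcal{P}))=\tfrac12\sum_i x^*(\delta(P_i))\geq|\mathcal{P}|$), and then gets polynomial support from the constructive Carath\'eodory theorem, which applies because this inequality system is separable. You instead prove membership in the connector polytope from scratch (spanning trees plus a slack-distribution LP) and recover polynomial support via a configuration LP solved through its dual; this is more self-contained, at the cost of length. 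Three remarks. First, your worry about blow-up in the splitting step is unfounded: for each tree $T_i$ the vector $\chi^{T_i}+z_{i,\cdot}$ lies in the integer box of vectors $q$ with $\chi^{T_i}\leq q\leq 2$, all of whose integer points are connectors, and a point of a box with $k$ fractional coordinates is a convex combination of at most $k+1$ of its integer points (sort the fractional parts and take the staircase decomposition); since a basic solution of your LP has at most $|E|$ fractional coordinates in total, the direct route already yields at most $\ell+|E|$ connectors and the configuration LP is unnecessary. Second, if you do use the configuration LP, the pricing problem asks for a minimum-weight connector with respect to dual weights of arbitrary sign, not only nonnegative ones; this is still polynomial (take two copies of every negative edge, contract them, and compute a minimum spanning forest with the remaining nonnegative edges), but your justification covers only the nonnegative case, so this step needs the extra line. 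Third, you are right to truncate $x^*$ at $2$ explicitly: an $x^*$ with some $x^*_e>2$ cannot literally equal a convex combination of connectors, a point the paper's proof passes over silently.
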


\begin{proof}
By Corollary 50.8a in \cite{schrijver} the following polytope is the convex hull of connectors of $G$.
\begin{align*}
x(\delta(\mathcal{P}))& \geq |\mathcal{P}|-1 &\mbox{for }& \mathcal{P}\in \Pi_{n}\\
0 & \leq x_e \leq 2 & \mbox{for }& e\in E
\end{align*}
Here, $\Pi_n$ is the collection of partitions of $V$.
For $\mathcal{P}\in \Pi_n$, we denote by $\delta(\mathcal{P})$ the set
of edges with endpoint in different parts of partition $\mathcal{P}$,
and $|\mathcal{P}|$ is the number of parts in partition
$\mathcal{P}$. Notice that for any partition $\mathcal{P}$ of $V$ with parts $P_1,\ldots,P_{|\mathcal{P}|}$ we have
\begin{equation*}
x^*(\delta(\mathcal{P})) = \frac{1}{2} \sum_{i=1}^{|\mathcal{P}|} x^*(\delta(P_i)) \geq |\mathcal{P}|.
\end{equation*}
Therefore, $x^*$ can be written as a convex combination of connectors
of $G$. The fact that the number of connectors in the convex
combination is polynomial follows from the fact that the polytope above is
separable, and hence we can apply the constructive version of Carath{\'e}odory's theorem
to get the result \cite{lgs,schrijver}.
\end{proof}

By Lemma \ref{decom}, there exists positive reals
$\lambda_1,\ldots,\lambda_{\ell}$, such that $\sum_{i=1}^{\ell}\lambda_i=1$,
and connectors $F_1,\dots,F_{\ell}$ such
that \begin{equation}x^*=\sum_{i=1}^{\ell} \lambda_i
  \chi^{F_i},\end{equation} where $\chi^{F_i}$ is the
characteristic vector of $F_i$ for $i \in \{1,\ldots,\ell\}$. Furthermore, we
can find this decomposition in time polynomial in the size of
$G$. Notice $F_1,\dots,F_{\ell}$ satisfy (a) in the statement of
Theorem \ref{decomposition}.  We will now show that given $F_1, \ldots,
F_{\ell}$, we can obtain a new family of connectors satisfying both (a) and
(b) from Theorem \ref{decomposition}.

\begin{lemma}\label{transform1}
Given a family of connectors $F_1,\ldots,F_{\ell}$ of $G$ such that
$x^*= \sum_{i=1}^{\ell} \lambda_i \chi^{F_i}$, $\lambda_i>0$ for
$i \in \{1,\ldots,\ell\}$, and $\sum_{i=1}^{\ell}\lambda_i=1$, there is a
polynomial-time algorithm that outputs connectors
$F'_1,\ldots,F'_{\ell}$ such that
\begin{itemize}
\item[(1)] $x^*= \sum_{i=1}^{\ell} \lambda_i \chi^{F'_i}$.

\item[(2)] If $x^*_e\geq 1$, then $\chi^{F'_i}(e)\geq 1$ for all $i\in\{1,\ldots,\ell\}$.

\item[(3)] If $x^*_e<1$, then there is no $i\in \{1,\ldots,\ell\}$ such that
  $\chi^{F'_i}(e) = 2$.
\end{itemize}

\end{lemma}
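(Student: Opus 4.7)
The plan is to fix conditions (2) and (3) one edge at a time using a single safe swap operation. Fix an edge $e$ and partition the index set into $A_0, A_1, A_2$ where $i \in A_k$ iff $\chi^{F_i}(e) = k$; let $\mu_k = \sum_{i \in A_k} \lambda_i$, so $\mu_0 + \mu_1 + \mu_2 = 1$ and $\mu_1 + 2\mu_2 = x^*_e$. The basic operation takes $F_j \in A_2$ and $F_i \in A_0$, splits each off a weight-$\epsilon$ portion with $\epsilon = \min(\lambda_i,\lambda_j)$, and within those matched portions replaces $F_j$ by $F_j - e$ and $F_i$ by $F_i + e$. The resulting multigraphs are still connectors: $F_j - e$ remains connected and spanning because $F_j$ contained a second parallel copy of $e$, and $F_i + e$ remains a connector because adding one copy to a connected spanning multigraph with previously zero copies of $e$ keeps the copy count bounded by $2$. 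The swap preserves $\sum_i \lambda_i \chi^{F_i}$, since one unit of ``$e$-mass'' is removed from $F_j$ and added to $F_i$ within equal weights, so condition (1) is maintained.

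I use only these $A_2 \!\to\! A_0$ swaps because every other rearrangement is unsafe: removing the unique copy of $e$ from a connector in $A_1$ could disconnect it, and adding a copy to a connector already in $A_2$ would exceed the two-copy cap. For each edge $e$, I apply $A_2 \!\to\! A_0$ swaps until either $A_0$ or $A_2$ becomes empty. If $x^*_e \geq 1$, subtracting $\mu_0 + \mu_1 + \mu_2 = 1$ from $\mu_1 + 2\mu_2 = x^*_e$ gives $\mu_2 - \mu_0 = x^*_e - 1 \geq 0$, so $\mu_0$ can be driven to zero, landing at $(\mu'_0,\mu'_1,\mu'_2) = (0,\, 2 - x^*_e,\, x^*_e - 1)$ and establishing (2) for $e$. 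If $x^*_e < 1$, the same identity gives $\mu_0 - \mu_2 = 1 - x^*_e > 0$, so $\mu_2$ can be driven to zero, yielding $(1 - x^*_e,\, x^*_e,\, 0)$ and establishing (3) for $e$.

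Because each swap for edge $e$ alters only the multiplicity of $e$ in the two participating connectors, processing edges sequentially never undoes a previously established condition, and after one pass over $E$ the output satisfies (1)--(3). To keep the algorithm polynomial, I normalize at the outset: the entries of $x^*$ and the initial multipliers $\lambda_i$ are rationals whose denominators have polynomial bit-length (by Lemma \ref{decom} and standard polyhedral complexity bounds), so a common denominator $N$ of polynomial size exists. Subdividing each $F_i$ into $N\lambda_i$ unit pieces of weight $1/N$ at the start makes each $A_2 \!\to\! A_0$ swap a transfer between two equal-weight pieces, requiring no further splitting, so the final family consists of at most $N$ connectors and all arithmetic stays integral.

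The main conceptual obstacle is ruling out unsafe rearrangements while still reaching the targets; the key observation is that the only move which never threatens either connectedness or the two-copy cap is the transfer of a redundant parallel copy from $A_2$ to a missing slot in $A_0$. The inequalities $\mu_2 \geq \mu_0$ (when $x^*_e \geq 1$) and $\mu_0 \geq \mu_2$ (when $x^*_e < 1$) are precisely what make this restricted repertoire sufficient, and the initial normalization step is what controls the number of connectors and makes the procedure run in polynomial time.
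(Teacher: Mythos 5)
Your core move is exactly the paper's: for an edge $e$, locate a connector containing two copies of $e$ and one containing none, and transfer one copy from the former to the latter (the paper calls such a pair a ``bad tuple'' $(e,i,j)$ with $\chi^{F_i}(e)=2$ and $\chi^{F_j}(e)=0$ and eliminates them one at a time). Connectivity and the two-copy cap are preserved for the reasons you give, and your $\mu_0,\mu_1,\mu_2$ counting is a correct way to see that once no such pair remains for $e$, properties (2) and (3) hold for $e$; processing edges sequentially is indeed non-interfering. In one respect you are more careful than the paper: the paper performs the transfer on the whole connectors $F_i$ and $F_j$ without touching the multipliers, which preserves the equality in (1) only when $\lambda_i=\lambda_j$, whereas you explicitly match equal weights before swapping.

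However, the mechanism you use to match weights does not support the polynomial-time claim, and this is a genuine gap. A common denominator $N$ of the $\lambda_i$ (and of the entries of $x^*$) is only guaranteed to have polynomially many \emph{bits}; its value, and hence the number of weight-$1/N$ unit pieces you create, can be exponential in the input size. Your other mechanism, splitting off portions of weight $\min(\lambda_i,\lambda_j)$, avoids the global subdivision but creates a new connector at essentially every swap: processing a single edge may require on the order of $|A_0|+|A_2|$ swaps, so the family can roughly double in size for each edge of $E$ processed, which is again exponential overall. (Note also that the lemma as stated returns exactly $\ell$ connectors with the original multipliers, which any splitting changes; this is harmless for the downstream use in Theorem \ref{decomposition}, but only if the family size stays polynomial.) To close the gap you would need a pairing rule with a provable polynomial bound on the total number of distinct pieces generated over all edges, which your write-up does not supply.
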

\begin{proof}


Call a tuple $(e,i,j)$ where $e\in E$, $i,j\in\{1,\dots,\ell\}$  {\em bad} if
$$\chi^{F_i}(e)=2 \mbox{ and } \chi^{F_j}(e)=0.$$ Let $m$ be the
number of bad tuples and let $(e,i,j)$ be a bad tuple.
Then
$$F'_i=F_i - e, \; F'_j=F_j + e, \mbox{ and } F'_p = F_p \mbox{ for
} p \in \{1,\dots,\ell\}\setminus \{i,j\}$$ satisfies property (1).  Notice that now $F'_1,\ldots,F'_{\ell}$ has at most $m-1$ bad
tuples; no new bad tuples are created by the above procedure.
Thus, after at most $m$ iterations, we have that
for each $e\in E$, there is no $i,j\in \{1,\dots,\ell\}$ such
  that $\chi^{F'_i}(e)=2$ and $\chi^{F'_j}(e)= 0$.  This implies
  properties (2) and (3) in the statement of the lemma.
Finally, it is also easy to see that fixing each tuple can be done in
polynomial time, and that the number of tuples is polynomial in the
size of $G$.
\end{proof}

We now proceed to the proof of Theorem \ref{decomposition}.  By Lemma
\ref{2edgecutsarenice}, the relation ``is in a 2-edge cut with'' is
transitive. So, we can partition the edges in 2-edge cuts of $G$ into
equivalence classes via this relation. Let $\mathcal{D}$ be the
collection of disjoint subsets of edges of $G$ such that for all $D\in
\mathcal{D}$: (i) $|D|\geq 2$, and (ii) for each pair of edges
$\{e,f\} \subseteq D$, edges $e$ and $f$ form a 2-edge cut of $G$.
Note that for $D\in \mathcal{D}$ and any distinct edges $e, f \in D$,
it cannot be the case that both $x^*_e<1$ and $x^*_f<1$, since
$\{e,f\}$ is a 2-edge cut and $x^*\in$ \ref{subtour}.  We classify the
subsets in $\mathcal{D}$ into two types:
\begin{align*}
\mathcal{D}_1 & = \{D\in\mathcal{D}: \text{ for all } e\in D,
~x^*_e\geq 1 \},\\
\mathcal{D}_2 & = \{D\in\mathcal{D}: \text{ there is exactly one
  edge } e\in D \text{ such that }x^*_e<1 \}.
\end{align*}
Let $F_1, \dots, F_{\ell}$ be a family of connectors satisfying properties
(1), (2) and (3) in Lemma \ref{transform1}.  We propose a procedure to
modify these connectors and output $F'_1, \dots, F'_{\ell}$ such that for
each $D\in \mathcal{D}$, property (b) in Theorem \ref{decomposition}
is satisfied while property (a) is preserved.
In particular, by property (1) from Lemma \ref{transform1}, we have
$$\sum_{i=1}^{\ell} \chi^{F_i}(e)= x^*_e \mbox{ for $e\in E$}.$$
Our specific procedure
depends on whether $D\in \mathcal{D}_1$ or $D\in \mathcal{D}_2$.

\vspace{3mm}

\textbf{Case 1 ($D\in \mathcal{D}_1$):} In this case, we have
$\chi^{F_i}(e)\geq 1$ for all $e\in D$ and $i\in\{1,\dots,\ell\}$, by
property (2) in Lemma \ref{transform1}. For $i \in \{1,\ldots,\ell\}$
let $F'_i$ be such that
$$\chi^{F'_i}(e)=1  \mbox{ for $e\in D$ and } \chi^{F'_i}(e)=\chi^{F_i}(e)  \mbox{ for $e\in E\setminus D$.}$$
Now we reset $F_1, \dots, F_{\ell} := F'_1, \ldots, F'_{\ell}$, and proceed to the
next $D \in \mathcal{D}_1$.

It is easy to see that we can apply this procedure iteratively for
$D\in \mathcal{D}_1$. This is because after applying this
operation on $D\in \mathcal{D}_1$, properties (2) and (3) in Lemma
\ref{transform1} are preserved. Moreover, property (1) in Lemma
\ref{transform1} is also preserved for every edge not in $D$, i.e.

\begin{align*}
\sum_{i=1}^{\ell}\lambda_i \chi^{F'_i}(e) = x^*_e \mbox{ for all $e\in
  E\setminus D$}\quad \quad \mbox{(and $\sum_{i=1}^{\ell}\lambda_i \chi^{F'_i}(e) \leq x^*_e \mbox{ for all $e\in
  	D$}$).}
\end{align*}
 In addition, given any 2-edge cut $\{e,f\}$ such
that $\{e,f\} \subseteq D$ for $D\in \mathcal{D}_1$, we have
$\chi^{F'_i}(e)+\chi^{F'_i}(f)=1+1=2$ for all $i\in
\{1,\dots,\ell\}$.

\vspace{3mm}

\textbf{Case 2 ($D\in \mathcal{D}_2$):} Let $e$ be the unique edge in
$D$ with $x^*_e<1$.  By property (3) in Lemma \ref{transform1}, we
have $\chi^{F_i}(e)\leq 1$ for all $i\in \{1,\ldots,\ell\}$. Without loss
of generality, assume
for $\chi^{F_i}(e)=1$ for $i \in \{1,\ldots,p\}$ and $\chi^{F_i}(e)=0$ for
$i \in \{p+1,\ldots,\ell\}$. For $i\in \{1,\ldots,p\}$, let $F'_i$ be such that
$$\chi^{F'_i}(f)=1 \mbox{ for $f\in D$ and } \chi^{F'_i}(f)=\chi^{F_i}(f)  \mbox{ for $f\in E\setminus D$.}$$
For $i \in \{p+1,\ldots,\ell\}$, let $F'_i$ be such that
$$\chi^{F'_i}(e)=0,\; \chi^{F'_i}(f)=2 \mbox{ for $f\in D\setminus
  \{e\}$} \text{ and } \chi^{F'_i}(f)=\chi^{F_i}(f)  \mbox{ for $f\in
  E\setminus D$.}$$
Now we reset $F_1, \dots, F_{\ell} := F'_1, \ldots, F'_{\ell}$, and proceed to the
next $D \in \mathcal{D}_2$.
After each iteration, we observe that
\begin{align}
\sum_{i=1}^{\ell}\lambda_i \chi^{F'_i}(e)&= \sum_{i=1}^{p}\lambda_i
\chi^{F'_i}(e)+  \sum_{i=p+1}^{\ell}\lambda_i \chi^{F'_i}(e) \nonumber\\
&=\sum_{i=1}^{p}\lambda_i =x^*_e. \label{last-lambda}
\end{align}
For $f\in D\setminus \{e\}$, we have
\begin{align*}
\sum_{i=1}^{\ell}\lambda_i \chi^{F'_i}(f)&= \sum_{i=1}^{p}\lambda_i \chi^{F'_i}(f)+  \sum_{i=p+1}^{\ell}\lambda_i \chi^{F'_i}(f)&\\
&=\sum_{i=1}^{p} \lambda_i +2\sum_{i=p+1}^{\ell}\lambda_i& \\
&= x^*_e+ 2(1-x^*_e)&  \mbox{(From \eqref{last-lambda})}\\
&=2-x^*_e&\\
&\leq x^*_f& \mbox{(Since $x^*\in$ \subt$(G)$).}
\end{align*}
This also clearly holds for any $f\in E\setminus D$ as we do not touch
these edges.  Note that after the final iteration,
$F_1,\ldots,F_{\ell}$ are connected, spanning
multigraphs of $G$, because we began with connected,
spanning multigraphs and we only remove an edge $f$ from $F_i$ if it
contained at least two copies of $f$.

Finally, note that given any 2-edge cut $\{e,f\} \in D$ for $D \in
\mathcal{D}_2$, we have $\chi^{F_i}(e)+ \chi^{F_i}(f)=1+1=2$,
$\chi^{F_i}(e)+ \chi^{F_i}(f)=0+2=2$ or $\chi^{F_i}(e)+
\chi^{F_i}(f)=2+2=4$ for all $i \in \{1,\ldots,\ell\}$.  This concludes
the proof of Theorem \ref{decomposition}.

\subsection{Subcubic Graphs}
We now present two applications of Theorem \ref{decomposition}.  In
the first application, we show that for a node-weighted, subcubic
graph, Christofides' algorithm has an approximation factor better than
$\frac{3}{2}$ when the weight of an optimal subtour solution is
strictly larger than twice the sum of the node weights.  In the second
application, we show that there is a set of edges that can be added to
a connector to yield a 2-edge-connected graph, and this addition can
be found via an application of the tree augmentation problem, which we
introduced in Section \ref{subsubsec:TAP}. This resembles methods used
in the proof of Theorem \ref{2ec8/9cons}.  We then show that combining
the approaches in these applications, we can beat the approximation
ratio of Christofides' algorithm for 2EC on node-weighted, subcubic
graphs.  

A useful fact about node-weighted, subcubic graphs is that the total
edge weight cannot be too much larger than $z_G$. 

\begin{fact}\label{zG-fact2}
Let $G=(V,E)$ be a node-weighted, subcubic graph.  Then $w(E) \leq
\frac{3}{2} z_G$.
\end{fact}

\begin{cproof}
Observe that $w(E) \leq 3 \cdot \sum_{v \in V} f_v$, where $f: V
\rightarrow \mathbb{R}^+$ is the node-weight function.  Also, notice
that $z_G \geq 2 \cdot \sum_{v \in V} f_v$.
\end{cproof}

Since all graphs are assumed to be 2-vertex-connected
(i.e., bridgeless), we can show the following fact.

\begin{fact}\label{zG-fact1}
Let $G=(V,E)$ be a node-weighted, subcubic graph.  Then $z_G \leq 3
\cdot \sum_{v \in V} f_v$.
\end{fact}

\begin{cproof}
This follows
from the fact that $x_e=1$ for all $e \in E$ is a feasible solution
for \subt$(G)$ when $G$ is a 2-vertex-connected subcubic graph.  
\end{cproof}

For the remainder of this section, let $x^*$ be an optimal solution for \ref{subtour}. By Theorem \ref{decomposition}, we have
$x^*\geq\sum_{i=1}^{\ell}\lambda_i\chi^{F_i}$ where $F_i$ is a
connector satisfying (a) and (b) in the statement of Theorem
\ref{decomposition} for $i \in \{1,\ldots,\ell\}$. Let $\xcon  =\sum_{i=1}^{\ell}\lambda_i\chi^{F_i}$. Clearly $\sum_{e \in E}w(e)\xcon_e \leq z_G$. Define $\xcut\in \mathbb{R}^E$ as follows: $\xcut_e=\min\{1,\xcon_e\}$.

\subsubsection{An Algorithm for TSP \`a la Christofides with Simple Deletions}\label{sec:ala-christofides}

In the graph metric, every (minimum) spanning tree has weight at most
$n$.  It follows that in the case where $z_G \geq (1+\epsilon)n$,
Christofides' algorithm has an approximation guarantee strictly better
than $\frac{3}{2}$ (in fact, at most
$(\frac{3}{2}-\frac{\epsilon}{1+\epsilon})$).  This implies that, in
some sense, the most difficult case for graph-TSP is when $z_G = n$.
It seems that it should also be the case for node-weighted graphs: the
most difficult case should be when $z_G = 2\cdot \sum_{v\in V}f_v$,
and when $z_G \geq (1+\epsilon) \cdot 2 \cdot \sum_{v \in V} f_v$,
Christofides' algorithm should give an approximation guarantee
strictly better than $\frac{3}{2}$.

However, in the case of node-weighted graphs (even for subcubic
graphs), a minimum spanning tree of $G$ may have weight exceeding
$2\cdot \sum_{v\in V}f_v$ when $z_G >2\cdot \sum_{v\in V}f_v$.  See
Figure \ref{fig:big-MST} for an example.  Thus, proving an
approximation factor strictly better than $\frac{3}{2}$ for
node-weighted graphs in this scenario does not follow the same
argument as in the graph metric.  Nevertheless, we can use connectors
to prove that we can beat Christofides' algorithm when $G$ is a
subcubic node-weighted graph and $z_G$ is much larger than $2 \cdot
\sum_{v \in V} f_v$.

\begin{figure}[h]
	\begin{subfigure}{.5\textwidth}
		\centering
		\includegraphics[width=0.3\linewidth]{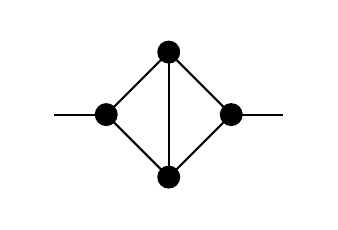}
		\caption{}
		\label{fig:gadget}
	\end{subfigure}
	\begin{subfigure}{.5\textwidth}
		\centering
		\includegraphics[width=0.6\linewidth]{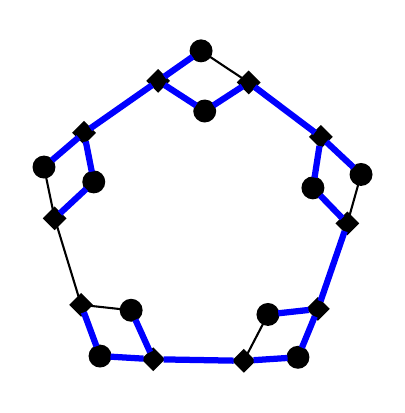}
		\caption{}
		\label{fig:example}
	\end{subfigure}
	\caption{The graph in \subref{fig:example} has a total of $10t$
          (here $t = 5$)
          vertices: each circular vertex corresponds to the gadget
          in \subref{fig:gadget}. The weight of each square vertex
          in \subref{fig:example} is 1,
          and all other vertices have weight zero.  A minimum
          spanning tree (denoted by the thick, blue edges) has weight $5t-2$ while sum of the
          node weights is $2t$. In this case, Theorem \ref{NWcubicTSP} yields a tour of weight $7t-2$, providing a $\frac{7}{5}$-approximation for this instance.}
	\label{fig:big-MST}
\end{figure}

\begin{lemma}\label{epsilon}
Let $G=(V,E)$ be a graph with nonnegative edge weights.  There is an
efficient algorithm to find a tour in $G$ with weight at most $z_G +
\frac{w(E)}{3}$.
\end{lemma}

In fact, we prove something slightly stronger that will be useful
later in the paper.

\begin{lemma}\label{christ-del}
Let $G=(V,E)$ be a graph with nonnegative edge weights.  There is an
efficient algorithm to find a tour in $G$ with weight at most
$\frac{w(E)}{3} + \frac{1}{3} \cdot \sum_{e\in E}w(e) \xcon_e + 
\frac{2}{3} \cdot \sum_{e \in E} w(e)\xcut_e$.
\end{lemma}

For a subset $T$ of vertices in $V$, where $|T|$ is even, a $T$-join
of $G$ is a subgraph $J$ of $G$ in which the set of odd-degree
vertices of $J$ are exactly $T$. Edmonds and Johnson
\cite{edmonds1973matching} proved that the inequalities below
describe the convex hull of $T$-joins of $G$. 
\begin{align}
x(\delta(U)\setminus{W}) - x(W) & \geq 1-|W| &   \mbox{for} & \mbox{ } U
\subseteq V, W \subseteq \delta(U), |U \cap T| + |W| \mbox{ odd}
\tag{{\sc $T$-Join}$(G)$}
\label{t-join-exact}\\
0 \leq x_e & \leq  1    & \mbox{for} & \mbox{ all } e \in E. \nonumber
\end{align}

In Christofides' algorithm, one can write an optimal solution $x^*$
for \ref{subtour} as a convex combination of spanning trees (see Fact
\ref{fact:spanning-tree}).  Each of
these spanning trees is then augmented with a $T$-join, where $T \subseteq
V$ is the set of odd-degree vertices in the spanning tree. In particular,
for a spanning tree $F$ of $G$, let $T$ be the set of odd-degree vertices of
$F$. Then, $\frac{x^*}{2}$ dominates a point in the $T$-join polytope. This mean the vector $x^*+\frac{x^*}{2}=\frac{3}{2}x^*$ dominates a convex combination of tours of $G$.

If we decompose the optimal solution for \ref{subtour} into a family of
connectors according to Theorem \ref{decomposition}, then we can
augment each connector by a $T$-join that is obtained from writing the
vector $\{\frac{1}{3}\}^{E}$ as a convex combination of $T$-joins. 

\begin{lemma}\label{1/3tjoin-exact}
	Let $\mathcal{F}$ be a family of connectors for $G=(V,E)$ satisfying properties (a)
	and (b) from Theorem \ref{decomposition}.  For an $F_i \in
	\mathcal{F}$, let $T$ denote the odd-degree vertices in $F_i$.  Then
	the vector $\{\frac{1}{3}\}^{E}$ belongs to \ref{t-join-exact}.
\end{lemma}

\begin{proof}
	Let $F$ be a connector of $G$ and let $T \subseteq V$ denote the vertices
	with odd degree in $F$.  
Since all edges have value $\frac{1}{3}$, we
	only need to check that
	\begin{align}
	\frac{|\delta(U)|}{3} + \frac{|W|}{3} \geq 1 & \quad \text{ for } U
	\subseteq V, W \subseteq \delta(U), |U \cap T| + |W| \text{ odd}.\label{t-join-main}
	\end{align}
	
	Consider $U \subset V$ such that $|\delta(U)| = 2$.  Note that
        $\sum_{e \in \delta(U)}\chi^F_e$ is even by the properties of
        a connector.  This implies that $|U \cap T|$ is even.  So we
        need to check the case where $|W| = 1$.  In this case, we see
        that Inequality \eqref{t-join-main} is satisfied.  Now
        consider case in which $|\delta(U)| \geq 3$.  In this case,
	$$\frac{|\delta(U)|}{3} + \frac{|W|}{3} \geq \frac{|\delta(U)|}{3} \geq 1.$$
	Hence, $\{\frac{1}{3}\}^E \in$ \ref{t-join-exact}. \end{proof}

Observe that Lemma \ref{1/3tjoin-exact} is sufficient to prove Lemma
\ref{epsilon}.  To prove (the potentially stronger) Lemma
\ref{christ-del}, we modify Christofides' algorithm further by adding
the following {\em deletion} step.  Suppose an edge $e$ occurs in a
connector $F$ as a doubled edge.  If this edge $e$ also belongs to the
$T$-join $J$, we remove two copies of $e$ from the multigraph $F \cup
J$.  We observe that the resulting multigraph remains a tour.

\begin{observation}\label{deleting}
	Let $F$ be a connector for $G=(V,E)$ and let $J$ be a $T$-join, where $T$ is the set of
	vertices with odd degree in $F$.  Let $E' \subset E$ denote the set of
	edges that occur doubled in $F$ and also belong to $J$.  Then the
	multigraph $F \cup
	J \setminus{\{2E'\}}$ is a tour.  
\end{observation}
{}

We are now ready to prove Lemma \ref{christ-del} via an analysis of
the modified Christofides' algorithm we have just described.

\begin{proof}[Proof of Lemma \ref{christ-del}]
	We have $\xcon=\sum_{i=1}^{\ell}\lambda_i\chi^{F_i}$ where $F_i$ is a
        connector satisfying (a) and (b) in the statement of Theorem
        \ref{decomposition} for $i \in \{1,\ldots,\ell\}$. Choose
        $i\in \{1,\ldots,\ell\}$ uniformly at random according to the
        probability distribution defined by
        $\lambda_1,\ldots,\lambda_{\ell}$. Let $T_i$ be the set of
        odd-degree vertices of $F_i$. By Lemma \ref{1/3tjoin-exact},
        we have $\{\frac{1}{3}\}^E =
        \sum_{j=1}^{\ell_i}\lambda^i_j\chi^{J^i_j}$, where ${J^i_j}$
        is a $T_i$-join of $G$. Choose $j\in \{1,\ldots,\ell_i\}$
         at random according to probability distribution
        defined by $\lambda_1^i, \dots, \lambda^i_{\ell_i}$.
 Let $E' \subset E$ denote the edges that occur doubled in $F_i$ and also
belong to $J^i_j$.  By Observation \ref{deleting}, $H = F_i \cup J^i_j
\setminus{\{2E'\}}$ is a tour of $G$. We have 
\begin{align*}
\Ex[w(H)]&=	\Ex[w(F_i)] + \Ex[w(J^i_j)] - 2\cdot\Ex[w(E')]\\
& = \sum_{e \in E}w(e)\xcon_e + \frac{w(E)}{3} - 2 \cdot \sum_{e \in E: \xcon_e > 1} w(e)
\cdot \Pr[\chi_e^{F_i} = 2] \cdot
\Pr[e \in J^i_j]\\
& =  \sum_{e \in E}w(e)\xcon_e+ \frac{w(E)}{3} - 2 \cdot \sum_{e \in E: \xcon_e > 1} w(e) (\xcon_e -1)
\cdot \frac{1}{3}\\
& =  \sum_{e \in E}w(e)\xcon_e+ \frac{w(E)}{3} - \frac{2}{3} \left(\sum_{e \in E: \xcon_e > 1}
w(e) \xcon_e -\sum_{e \in E: \xcon_e > 1}
w(e)\right)\\
& =  \sum_{e \in E}w(e)\xcon_e+ \frac{w(E)}{3} - \frac{2}{3} \left(\sum_{e \in E}
w(e) \xcon_e -\sum_{e \in E}
w(e)\xcut_e\right)\\
& = \frac{\sum_{e\in E}w(e)\xcon_e}{3} + \frac{w(E)}{3} + \frac{2}{3} \cdot \sum_{e \in E} w(e)\xcut_e.
\end{align*}

\end{proof}

\begin{theorem}\label{NWcubicTSP}
	Let $G$ be a node-weighted, subcubic graph. If
	$z_G \geq 2\cdot(1+\epsilon)\cdot \sum_{v\in V}f_v$, then there is an
	$(\frac{3}{2}-\frac{\epsilon}{3})$-approximation algorithm for TSP on $G$.
\end{theorem}
\begin{proof}
For a node-weighted, subcubic graph, we have
	\begin{eqnarray}
	w(E) & \leq & 3 \cdot \sum_{v \in V} f_v. \label{thm11:bound}
	\end{eqnarray}
	By the assumption of the theorem and \eqref{thm11:bound}, we have $z_G\geq
	2(1+\epsilon)\sum_{v \in V} f_v \geq \frac{2(1+\epsilon)}{3}w(E)$.
	Applying Lemma \ref{epsilon}, we get a tour of weight at most
	\begin{eqnarray*}
		z_G+\frac{w(E)}{3} & \leq & (\frac{3+2\epsilon}{2 + 2\epsilon})\cdot z_G\\
		& = & (\frac{3}{2} - \frac{\epsilon}{2 + 2\epsilon})\cdot z_G\\
		& \leq & (\frac{3}{2} -\frac{\epsilon}{3})\cdot z_G.
	\end{eqnarray*}
	The last inequality comes from the fact that $\epsilon \leq
	\frac{1}{2}$ since $z_G\leq 3\cdot \sum_{v\in V}f_v$, which follows
	from Fact \ref{zG-fact1}. \end{proof}
\subsubsection{An Algorithm for 2EC}\label{subsec:tap}

Recall the set-up for 2EC. We are given a graph $G=(V,E)$ with
nonnegative weights $w(e)$ for $e\in E$. Our goal is to find a minimum
weight 2-edge-connected multigraph of $G$.  
We now prove the following lemma. 

\begin{lemma}\label{2ecsubcubic}
Let $G=(V,E)$ be a graph with nonnegative edge weights.
We can find a 2-edge-connected multigraph of $G$ with weight at most
$\sum_{e \in E} w(e)\xcon_e + \frac{2}{3}w(E) - \frac{2}{3}\cdot \sum_{e\in E}w(e)\xcut_e$.
\end{lemma}

\begin{proof}
 Recall that we have $\xcon=\sum_{i=1}^{\ell}\lambda_i\chi^{F_i}$ where $F_i$ is a
 connector satisfying (a) and (b) in the statement of Theorem
 \ref{decomposition} for $i \in \{1,\ldots,\ell\}$.  For $i \in
\{1,\ldots,\ell\}$, let $\mathcal{S}_i$ be the family of 1-edge cuts
of $F_i$. As discussed in Section \ref{subsubsec:TAP}, there is a laminar family $\mathcal{S}^*_i\subseteq \mathcal{S}_i$  that is enough to describe \textsc{Cover}$(G,F_i)$ for all $i\in \{1,\ldots,\ell\}$.  Define vector $y^i\in \mathbb{R}^E$ as follows: $y^i_e=0$ for
$e\in F_i$ and $y^i_e=\frac{1}{2}$ for $e\in E\setminus F_i$.

\begin{claim}\label{claim1}
For $i \in \{i, \dots, \ell\}$, we have $y^i \in$ {\sc
  Cover}$(G,F_i)$.
\end{claim}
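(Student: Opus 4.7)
The plan is to verify the covering inequality $\sum_{e \in \delta(S)} y^i_e \geq 1$ for every 1-edge cut $S$ of $F_i$, using the parity property of the connectors from Theorem~\ref{decomposition}(b) as the key leverage.

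First I would fix $i \in \{1,\dots,\ell\}$ and an arbitrary $S \subset V$ with $|F_i \cap \delta(S)| = 1$. Because we have defined $y^i_e = \frac{1}{2}$ precisely on edges of $E \setminus F_i$, the sum $\sum_{e \in \delta(S)} y^i_e$ is just $\frac{1}{2}|\delta(S) \setminus F_i|$, so I need to show $|\delta(S) \setminus F_i| \geq 2$, equivalently $|\delta(S)| \geq 3$.

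The core observation, and the step that does the real work, is to rule out $|\delta(S)| = 2$. Suppose for contradiction that $|\delta(S)| = 2$, so $\delta(S)$ is a 2-edge cut of $G$. Then by Theorem~\ref{decomposition}(b), the connector $F_i$ crosses $\delta(S)$ an even number of times, so $|F_i \cap \delta(S)| \in \{0,2\}$, contradicting $|F_i \cap \delta(S)| = 1$. Since $G$ is assumed 2-edge-connected, $|\delta(S)| \geq 2$ always holds, so in fact $|\delta(S)| \geq 3$. This forces $|\delta(S) \setminus F_i| \geq 2$ and hence $\sum_{e \in \delta(S)} y^i_e \geq 1$, giving $y^i \in$ {\sc Cover}$(G, F_i)$.

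The only subtlety — and the main obstacle one might worry about — is whether the relevant cut $\delta(S)$ qualifies as a 2-edge cut of $G$ in the sense used by Theorem~\ref{decomposition}: the theorem guarantees the parity property on 2-edge cuts of $G$ (equivalently of the support of $x^*$), so as long as the 1-edge cuts of $F_i$ that we care about are cuts of the underlying graph $G$, the argument goes through directly. Since 1-edge cuts of $F_i$ are by definition vertex bipartitions of $V$, they are automatically cuts of $G$, and the argument is complete.
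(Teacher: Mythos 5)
Your proof is correct and follows essentially the same route as the paper's: both rule out $|\delta(S)|=2$ via the even-crossing property (b) of Theorem~\ref{decomposition}, conclude $|\delta(S)|\geq 3$, and hence that the at least two non-$F_i$ edges in the cut contribute $\frac{1}{2}$ each. No gaps.
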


\begin{cproof}
Let $S$ be a 1-edge cut of $F_i$.  Then $\delta(S)\cap F_i$ contains
exactly one edge $e$. Note that it cannot be the case that
$|\delta(S)|=2$. This is because if $\delta(S)$ were a 2-edge cut of
$G$, then by property (b) in Theorem \ref{decomposition}, there would
be an even number of edges in $F_i$ that are also in
$\delta(S)$. Hence, $|\delta(S)|\geq 3$. So we have
\begin{equation*}
\sum_{e \in \delta(S)}y_e ~ = \sum_{e \in \delta(S)\setminus  F_i}
\frac{1}{2} ~ = \sum_{e \in \delta(S)\setminus  \{e\}} \frac{1}{2} ~ =
~\frac{|\delta(S)\setminus \{e\}|}{2} ~ \geq ~1.
\end{equation*}
This concludes the proof of the claim.\end{cproof}
For $i\in \{1,\ldots,\ell\}$, define vector $r^i$ as follows: $r^i_e=0$
for $e\in F_i$ and $r^i_e=\frac{2}{3}$ for $e\in E\setminus F_i$.
\begin{claim}\label{claim2}
	For $i \in \{1,\ldots,\ell\}$, the vector $r^i$ can be written as
        a convex combination of 1-covers of $F_i$.
\end{claim}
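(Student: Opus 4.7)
The plan is to derive Claim \ref{claim2} as an immediate consequence of Claim \ref{claim1} and Theorem \ref{cjr} (Iglesias--Ravi). First, I would observe that the vector $y^i$ defined just before Claim \ref{claim1} takes only the values $0$ and $\frac{1}{2}$, so in the terminology of Theorem \ref{cjr} it satisfies $y^i_e \geq \alpha$ or $y^i_e = 0$ for every $e \in E$ with $\alpha = \frac{1}{2}$. By Claim \ref{claim1}, $y^i$ also lies in {\sc Cover}$(G, F_i)$, so both hypotheses of Theorem \ref{cjr} are met.

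Next, I would apply Theorem \ref{cjr} to $y^i$ with $\alpha = \frac{1}{2}$. The theorem then guarantees, in polynomial time, 1-covers $C^i_1, \ldots, C^i_{\ell_i}$ of $F_i$ and positive multipliers $\lambda^i_1, \ldots, \lambda^i_{\ell_i}$ summing to $1$ such that
\begin{equation*}
\frac{2}{1 + \frac{1}{2}} \cdot y^i \;=\; \frac{4}{3} \cdot y^i \;=\; \sum_{j=1}^{\ell_i} \lambda^i_j \, \chi^{C^i_j}.
\end{equation*}

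Finally, I would note that $\frac{4}{3} \cdot y^i$ is exactly $r^i$: on any edge $e \in F_i$ both vectors are zero, and on any edge $e \in E \setminus F_i$ we have $\frac{4}{3} \cdot \frac{1}{2} = \frac{2}{3} = r^i_e$. Hence $r^i = \sum_{j=1}^{\ell_i} \lambda^i_j \, \chi^{C^i_j}$ is a convex combination of 1-covers of $F_i$, which proves the claim.

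There is essentially no obstacle here; the whole content of the claim is recognizing that $r^i$ is the image of $y^i$ under the scaling factor provided by Theorem \ref{cjr} at $\alpha = \tfrac{1}{2}$, and that the hypothesis of that theorem is verified by Claim \ref{claim1}. The only thing worth double-checking is that Theorem \ref{cjr} applies to the half-integral vector $y^i$ with $\alpha = \frac{1}{2}$ (which it does, since its statement allows any threshold $\alpha$ below which entries must be zero), and that the resulting convex combination is produced in time polynomial in the size of $G$, as we will later need when invoking this decomposition inside the proof of Theorem \ref{2ecsubcubic}.
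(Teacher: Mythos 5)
Your proof is correct and is exactly the argument the paper gives: apply Theorem \ref{cjr} with $\alpha=\frac{1}{2}$ to the vector $y^i$, which lies in {\sc Cover}$(G,F_i)$ by Claim \ref{claim1}, and observe that $\frac{2}{1+1/2}\cdot y^i=\frac{4}{3}y^i=r^i$. The extra verification of the hypotheses and of the identity $\frac{4}{3}y^i=r^i$ is just a more explicit write-up of the same one-line proof.
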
	
\begin{cproof} By Claim \ref{claim1} and Theorem \ref{cjr}, vector $\frac{4}{3}y^i$ can be written as
	a convex combination of 1-covers of $F_i$, and $\frac{4}{3}y^i=r^i$.
\end{cproof}

By Claim \ref{claim2}, for $i \in \{1,\ldots,\ell\}$ we can write
$r^i$ as $\sum_{j=1}^{\ell_i}\lambda^i_j C^i_j$, where for $j \in
\{1,\ldots,\ell_i\}$, $C^i_j$ is a 1-cover for $F_i$.  Let $R^i_j =
F_i \cup C^i_j$.  Notice, for all choices of $i$ and $j$, $R_i^j$ is a
2-edge-connected multigraph of $G$.  To argue that there exists a
low-weight, 2-edge-connected multigraph, we show the following claim.

\begin{claim}\label{claim3}
There exists $i \in \{1,\ldots,\ell\}$ and $j\in \{1,\ldots,\ell_i\}$
such that $R^i_j\leq \sum_{e \in E} w(e)\xcon_e + \frac{2}{3}w(E) -
\frac{2}{3}\cdot \sum_{e\in E}w(e)\xcut_e$.
\end{claim}
\begin{cproof}
Pick $i\in \{1,\ldots,\ell\}$ at random according to the probability
distribution defined by $\lambda_1,\ldots,\lambda_{\ell}$. Now, pick
$j\in \{1,\ldots,\ell_i\}$ at random according to the probability
distribution defined by $\lambda^i_1,\ldots,\lambda^i_{\ell_i}$.
We have
\begin{align*}
\Ex[w(R^i_j)]&= \Ex[w(F_i)] + \Ex[w(C^i_j)]\\
& = \sum_{e \in E}\big( 2 w(e)\cdot\Pr [\chi^{F_i}(e)=2]+ w(e)\cdot\Pr [\chi^{F_i}(e)=1]\big)+\sum_{e\in E}w(e)\cdot \Pr[e\in C^i_j]\\
& = \sum_{e \in E}\big( 2 w(e)\cdot\Pr [\chi^{F_i}(e)=2]+
w(e)\cdot\Pr [\chi^{F_i}(e)=1]\big)+\sum_{e\in E} \frac{2}{3} w(e) \cdot\Pr[\chi^{F_i}(e)=0]\\
& = \sum_{e \in E: \xcon_e>1}\big( 2 w(e)\cdot\underbrace{\Pr [\chi^{F_i}(e)=2]}_{=(\xcon_e-1)}+ w(e)\cdot\underbrace{\Pr [\chi^{F_i}(e)=1]}_{=(2-\xcon_e)}+ \frac{2}{3} w(e) \cdot\underbrace{\Pr[\chi^{F_i}(e)=0]}_{=0}\big)\\
& + \sum_{e \in E: \xcon_e\leq1}\big( 2 w(e)\cdot\underbrace{\Pr [\chi^{F_i}(e)=2]}_{=0}+ w(e)\cdot\underbrace{\Pr [\chi^{F_i}(e)=1]}_{=\xcon_e}+ \frac{2}{3} w(e) \cdot\underbrace{\Pr[\chi^{F_i}(e)=0]}_{=(1-\xcon_e)}\big)\\
&=\sum_{e\in E: \xcon_e>1}\big( 2 w(e)\xcon_e - 2 w(e) +
2 w(e) -w(e) \xcon_e \big)+ \sum_{e\in E: \xcon_e\leq1} \big(  w(e)\xcon_e +\frac{2}{3} w(e)- \frac{2}{3} w(e)\xcon_e \big)\\
&=\sum_{e\in E: \xcon_e>1} w(e)\xcon_e + \sum_{e\in E: \xcon_e\leq1}
\big(\frac{1}{3} w(e)\xcon_e +\frac{2}{3} w(e)\big)\\
& = 
\sum_{e\in E: \xcon_e>1}w(e)(\xcon_e - 1) +
  \sum_{e\in E} (\frac{1}{3}w(e)\xcut_e  +\frac{2}{3}
  w(e))\\
  & = \sum_{e \in E} w(e)\xcon_e - \sum_{e\in E}w(e)\xcut_e+  \sum_{e\in E} \frac{1}{3}w(e)\xcut_e  +\sum_{e\in E}\frac{2}{3}  w(e)\\
  &= \sum_{e \in E} w(e)\xcon_e + \frac{2}{3}w(E) - \frac{2}{3}\cdot \sum_{e\in E}w(e)\xcut_e.
\end{align*}
\end{cproof}

This concludes the proof of Lemma \ref{2ecsubcubic}.
\end{proof}

Assume $w(E)\leq \frac{3}{2} z_G$. In this case, Lemma
\ref{2ecsubcubic} finds a 2-edge-connected multigraph of weight at
most $2z_G- \frac{2}{3} \cdot \sum_{e\in E}w(e)\xcut_e$. If $\sum_{e\in
  E}w(e)\xcut_e=z_G$, then this implies a $\frac{4}{3}$-approximation
for 2EC. (Note that this is the case if $x^*\leq 1$.) However, there
are instances for which this does not happen. Figure \ref{4/3not}
illustrates an example where the algorithm in Lemma \ref{2ecsubcubic}
does not improve the bound of Christofides' algorithm.
\begin{center}
	\begin{figure}[h]
		\centering
\begin{tikzpicture}
\node[shape=circle,draw=black,label={[xshift=-0.6cm, yshift=-0.67cm]$2$}] (A) at (0,0) {$v_1$};
\node[shape=circle,draw=black,label={[xshift=-0.6cm, yshift=-0.67cm]$1$}] (B) at (0,4) {$v_2$};
\node[shape=circle,draw=black,label={[xshift=0.6cm, yshift=-0.67cm]$0$}] (C) at (4,4) {$v_3$};
\node[shape=circle,draw=black,label={[xshift=0.6cm, yshift=-0.67cm]$1$}] (D) at (4,0) {$v_4$};
\path [-] (A) edge node[left,label={[xshift=-0.2cm, yshift=-0.3cm]$\epsilon$}] {} (B);
\path [-] (C) edge node[left,label={[xshift=0.75cm, yshift=-0.4cm]$1-\frac{\epsilon}{2}$}] {} (D);
\path [-] (B) edge node[left,label={[xshift=-0cm, yshift=-0.1cm]$1-\frac{\epsilon}{2}$}] {} (C);
\path [-] (A) edge node[left,label={[xshift=-0cm, yshift=-0.7	cm]$\epsilon$}] {} (D);
\path [-] (A) edge node[left,label={[xshift=-0.28cm, yshift=-1.55	cm]$2-2\epsilon$}] {} (C);
\path [-] (B) edge node[left,label={[xshift=-0.26cm, yshift=0.55	cm]$1-\frac{\epsilon}{2}$}] {} (D);
\end{tikzpicture}
\caption{Let $G=(V,E)$ be the node-weighted $K_4$ shown above. For
  $e\in E$, $w_e$ is defined as the sum of the node-weights of the two
  endpoints (e.g., $w_{v_1v_2}= 2 + 1= 3$). The
  edge labels represents solution $x^*\in $ \ref{subtour}. Here we have $\xcon=x^*$. We have $w(E)= 12$, $\sum_{e \in E}w(e)\xcon_e = 8$, $\sum_{e \in E}w(e)\xcut_e = 6+4\epsilon$. For this $x^*$,
  Lemma \ref{2ecsubcubic} yields a
  $(\frac{3-\epsilon}{2})$-approximation, which does not outperform
  Christofides' algorithm by any constant factor. 
 However, Lemma
  \ref{christ-del} provides a $(\frac{4+\epsilon}{3})$-approximation
  for 2EC on the graph $G$.}
\label{4/3not}
\end{figure}
\end{center}
\begin{lemma}\label{2ecgeneral}
	Let $G=(V,E)$ be a graph such that $w(E)\leq \beta \cdot z_G$, then there is a $(\frac{2}{3}+\frac{\beta}{2})$-approximation for 2EC on graph $G$.
\end{lemma}
\begin{proof}
	Taking the best of the guarantees from Lemmas \ref{christ-del} and
	\ref{2ecsubcubic},we have an algorithm that outputs a
	2-edge-connected multigraph of weight at most 
	\begin{align*}
	&\frac{1}{2}\left(\frac{4}{3}\sum_{e \in E}w(e)\xcon_e +  w(E))
	\right)  \leq \frac{1}{2} \left(\frac{4}{3}z_G + w(E) \right) = (\frac{2}{3}+\frac{\beta}{2})\cdot z_G.
	\end{align*}
	Note that the above bound is obtained by taking the average of the two guarantees.
	\end{proof}

\FourThirdsEC*

\begin{proof}
For a node-weighted, subcubic graph, we have $w(E) \leq
\frac{3}{2}z_G$ (by Fact \ref{zG-fact2}). By Lemma \ref{2ecgeneral}, we get a $\frac{17}{12}$-approximation for 2EC on graph $G$.
\end{proof}

\section{Concluding Remarks}

Carr and Ravi~\cite{Carr98} proved that for any 4-regular
4-edge-connected graph $G$, the everywhere $\frac{2}{3}$ vector can be
decomposed into a convex combination of 2-edge-connected subgraphs of
$G$. This implies an upper bound of $\frac{4}{3}$ on the integrality
gap of half-integer points for the 2EC problem with metric weights.
Their proof however does not lead to a polynomial-time algorithm for
such instances.  In Theorem \ref{4ec4reg}, we gave an alternate way
(as opposed to that of Wolsey \cite{wolsey1980heuristic}) to obtain
such a convex combination for the everywhere $\frac{3}{4}$ vector.  
It is an interesting open problem to determine if
the everywhere $\frac{3}{4}-\epsilon$
vector for $G$ can be decomposed into convex combination tours of $G$
in polynomial time.  Another open problem is stated in Conjecture
\ref{8/9conj-int}, which is implied by the four-thirds conjecture.
Finally, for node-weighted metrics, it would be interesting to find a
$\frac{4}{3}$-approximation algorithm for TSP in bridgeless,
cubic graphs to match the corresponding bound for graph
metrics~\cite{BSSS11}.

\section{Acknowledgements}
We would like to thank Jennifer Iglesias for discussions on the tree
augmentation problem, G\'erard Cornu\'ejols for his comments on a
preliminary draft of this paper, and Sylvia Boyd for clarifications
regarding recent work on the 2EC problem.  We thank an anonymous
referee for pointing out an error in a previous version: we had
claimed a smaller approximation ratio in the statement of
Theorem \ref{thm:2EC-node-weight}.

The work of A. Haddadan and R. Ravi is supported in part by the
U. S. Office of Naval Research under award number N00014-18-1-2099,
and the U. S. National Science Foundation under award number
CCF-1527032.  The work of A. Newman is supported in part by LabEx
PERSYVAL-Lab (ANR 11-LABX-0025) and IDEX-IRS SACRE.  Our joint work
was also supported by a research grant from the Carnegie Bosch
Institute.

\bibliographystyle{unsrt}
\bibliography{decomposition}

\end{document}